\newcommand{\be}{\begin{equation}}
	\newcommand{\ee}{\end{equation}}
\newcommand{\bea}{\begin{eqnarray}}
	\newcommand{\eea}{\end{eqnarray}}
\newtheorem{theorem}{Theorem}[section]
\newcommand{\ud}{\mathrm{d}}
\begin{document}
	\title{\Large \bf Maximum $\log_q$ Likelihood Estimation for Parameters of Weibull Distribution and Properties: Monte Carlo Simulation}
	
	\smallskip 
		\author{Mehmet Niyazi \c{C}ankaya$^{a1,a2}$}
	\author{\bf Roberto Vila$^{a3}$}

	\bigskip 
		\affiliation{$^{a1}$ Faculty of Applied Sciences, Department of International Trading and Finance, U\c{s}ak University,}
	\affiliation{$^{a2}$  Faculty of Art and Sciences, Department of Statistics, U\c{s}ak University, U\c{s}ak, Turkey}
	\affiliation{$^{a3}$ Departamento de Estatística, Universidade de Brasília, Brazil}
	
	\begin{abstract}
		\begin{center}
			\text{\normalsize{Abstract}}
		\end{center}
		\vspace{-0,3cm}
The maximum ${\log}_q$ likelihood estimation method is a generalization of the known maximum $\log$
likelihood method to overcome the problem for modeling non-identical observations (inliers and outliers). The parameter 
$q$ is a tuning constant to manage the modeling capability.  Weibull is a flexible and popular distribution for problems in engineering. In this study, this method is used to estimate the parameters of
Weibull distribution when non-identical observations exist.  Since the main idea is based on modeling
capability of objective function $\rho(x;\boldsymbol{\theta})=\log_q\big[f(x;\boldsymbol{\theta})\big]$, we observe that the
finiteness of score functions cannot play a role in the robust estimation for inliers.  The properties of Weibull distribution are examined. In the numerical experiment, the
parameters of Weibull distribution are estimated by $\log_q$ and its special form, $\log$, likelihood methods if the
different designs of contamination into underlying Weibull distribution are
applied. The optimization is performed via genetic
algorithm.  The modeling competence of
$\rho(x;\boldsymbol{\theta})$ and insensitiveness
to non-identical observations are observed by Monte Carlo simulation. The value of
$q$  can be chosen by use of the mean squared error in simulation
and the $p$-value of Kolmogorov-Smirnov test statistic used for evaluation of
fitting competence. Thus, we can overcome
the problem about determining of the value of $q$ for real data sets.
		
		\smallskip
		\noindent{\bf Mathematics Subject Classification.} 62C05; 62E10; 62F10.   \\
		\noindent{\bf Keywords.} Weibull
		distribution; inference; $q$-deformed logarithm; robustness.
		
	\end{abstract}
	\nopagebreak 
\maketitle

\section{Introduction}
\label{sec1}
After the study of a real-world phenomenon or the realization of an experiment,
it may be desirable to model the experimental data by means of a proposed
parametric model $f(x;\boldsymbol{\theta})$. In other words, the experimental
data set is assumed to be a member of a parametric model. However, it cannot be a
realistic assumption for the real world which will be modeled only with
certain values of parameters  in a  model. The observations can be mixed with a 
different parameter of the same distribution or a different distribution. That
is, a contamination exists into majority of the distribution, which leads to have non-identical observations. The type of
contamination is defined as deviant observations. In other words, one or more
observations are made to behave differently from what it is present when creating
a deviant observation. Deviant observations can be divided into inlier and
outlier deviations. The data set may have both inward/inlier and outward/outlier
deviations at the same time. Inward deviations are generated by short-tailed
distributions and outward deviations by thick-tailed distributions. For
inward deviations, it can also be realized by generating random numbers from
uniform distribution in the closed range $[a,b]$. In fact, deviant observations
in the data set indicate that an assumption trusting on identically distributed
random variables will not be realistic to model a phenomenia (\cite{tikuinlier,LehmannCas98}). If the
assumption showing that the data set includes identically distributed random
observations is violated, robust estimation method for the parameters of model
$f(x;\boldsymbol{\theta})$ have been applied by use of the different objective
functions. Robust methods trust the used objective function (\cite{Hampeletal86}). Deformed
algebras such as Tsallis and Kaniakadis statistics are important to derive a
neighborhood of a parametric model in order to overcome the problem which will
occur when the assumption for ideniticality is violated (\cite{Wadatwopara,Bercher10,Bercher12a}). 

The origin of robust estimation method was started by biologist at 18$^{th}$ century
(\cite{Hampeletal86}). The main working principle is based on the estimating equations
(EEs) (\cite{God60,GodTh78}). EEs can be derived by use of maximum composite likelihood
estimation method. Tsallis $q$-entropy is creator of
deformed logarithm ($\log_q$). In the direction of estimation method, 
$\log_q$ from Tsallis $q$-entropy has been studied recently. For this aim,
maximum  $\log_q$ likelihood estimation (MLqE) method are studied for outward
observations. MLqE which is a generalization of the maximum
likelihood (MLE) method \cite{LehmannCas98} is used to obtain robust and also efficient estimators (\cite{Tsallisbook09,FerrariYang10}).
Different deformed logarithms can be obtained from entropy functions (\cite{Wadatwopara}). However, the deformed or the generalized logarithms should map the probability density function $f(x_i;\boldsymbol{\theta})$ as one-to-one and overlay (\cite{Lindsay94,CanKor18}). 
Kaniakadis' deformed logarithm (${\log}_{\kappa{}}$) can have same property with
${\log}_q$ due to fact that $(\alpha,\beta)$-difference operator in fractional calculus (FC) is used
to generate entropies. The other genaralized entropies can have same role
in the estimation procedure (\cite{Wadatwopara,Jansent}). Note that the performance of efficiency 
can be managed by using the generalized entropies and generalized logarithms from
FC. MLqE is simple and its computational implementation is not heavy. No extra
condition is requried to apply MLqE for estimation when it is compared with
divergences (\cite{alphabetadivergences}).  The parameters of Gamma distribution are estimated by MLqE
method (\cite{MLqEGamma}). As an another approach for the robust estimation,
divergences have been used. Minimization of
entropies and divergences are equivalent to maximization of the generalized
maximum likelihood estimation method (\cite{Lindsay94,PardoSD}).

FC has been started to play an important role in the
estimations of parameters of a probability density (p.d.) function
$f(x;\boldsymbol{\theta}).$  Tsallis
distributions ($q$-distributions) have encountered a large success because of
their remarkable agreement with experimental data. The parameter $q$ behaves as a microscope and generates neigborhoods of
p.d. function $f(x;\boldsymbol{\theta})$. Thus, the different behaviour of p.d.
function $f(x;\boldsymbol{\theta})$ can be explored by use of ${\log}_q$ (see \cite{Bercher10,Bercher12a} and references
therein). After we use $\log_q$ from Tsallis $q$-entropy to estimate robustly the parameters $\boldsymbol{\theta}$ of underlying distribution $f(x;\boldsymbol{\theta})$, that is, we will have
estimators which are not affected by inliers and outliers in a data set, the
optimization of $\sum_{i=1}^n\log_q\big[f(x_i;\boldsymbol{\theta})\big]$ according to the
parameters $\boldsymbol{\theta}$ is an another challeging problem for the estimation process
if we have a nonlinear function. The optimization is performed by means of genetic algorithm (GA)
which is not attached to local points. Thus, we will have estimators.
GA mimicing evolutionary
biology as a stochasticity is a derivative-independent method. We use hybrid
method in `ga' module in MATLAB2013a to decrease the computational error as well.
In hybrid method, GA and derivative-based methods work simultenaously. Further,
GA has an effective attitude to reach the global point in the fuction
$\rho{}(x;\boldsymbol{\theta})=\Lambda[f(x;\boldsymbol{\theta})]$ (\cite{introga}).

The aim is to estimate robustly shape and scale parameters of Weibull
distribution when the different types of contamination into artifical data set
are added.   The properties of Weibull distribution are examined extensively. 
The roles of $\Lambda{}$ from the deformed algebras and DPD, etc. as an objective
function $\rho{}$, are observed easily by using the illustrative representations (see Fig. \ref{functionscompare}),
which is a key to analyze their roles on p.d. function $f(x;\boldsymbol{\theta})$. 
Outliers and inliers are added into the artificial data sets at the simulation
and the different types of contaminations to real data sets are applied to
observe the robustness and the modeling capability of 
$\rho(x;\boldsymbol{\theta})=\log_q\big[f(x;\boldsymbol{\theta})\big]$.
The value of parameter $q$ is determined according to the $p$-value of KS test
statistic of estimates of parameters for Weibull distribution.
Fisher information based on ${\log}_q$  is used to evaluate the
variance-covariance matrix of estimators derived by MLqE.

The organization of study is as follows. Section \ref{Section2} includes Weibull distribution and its properties which indicate that Weibull can be used for the modeling fruitfully. We provide the essential tools to pass the modeling sketch of estimation procedure such as convexity, (concavity), entropies, etc \cite{Haberman} in the M-estimation (\cite{Hub81}). We also provide main tools to get the elements of Fisher information matrix. Section \ref{section3inference} introduces estimation methods \cite{varincomposite} and estimating equations (\cite{God60,GodTh78}). Section \ref{section4rob} introduces the tools and robustness to outliers and also we propose a tool based on score function for robustness to inliers. Section \ref{optimizationnumerical} provides the tool used for optimization and numerical experiments. The last section \ref{conclusionslab} is given for conclusions.

\section{Weibull Distribution}\label{Section2}
Weibull distribution is chosen because it has many applications in the field of the applied science. Further, it has many features such as existence of cumulative distribution (c.d.) function, moments and entropies, etc (\cite{Malikcondition,Cankaya2018}). We say that a non-negative random variable $X$ has a Weibull distribution with vector parameter $\boldsymbol{\theta}=(\alpha,\beta)$, denoted by $X\sim \text{Weibull}(\boldsymbol{\theta})$, if its probability density function is given by 
\begin{align}\label{def-Weibull}
	f(x;\boldsymbol{\theta})
	=
	{\alpha\over \beta}\, 
	\bigg({x\over\beta}\bigg)^{\alpha-1}
	\exp\Bigg[-\bigg({x\over\beta}\bigg)^{\alpha}\Bigg],
	\quad x\geqslant 0; \ \alpha,\beta>0,
\end{align}
where $\alpha$ is the shape parameter and $\beta$ is the scale parameter (\cite{Weibullref}). 

\subsection{Properties}\label{convexconcavonemode}
\label{Properties}
If $X\sim \text{Weibull}(\boldsymbol{\theta})$ then the following properties are satisfied (see \cite{reviewwei} for review of modified  Weibull distributions).

\begin{itemize}	
	\item[1)] {\bf Asymptotic behavior of $f(x;\boldsymbol{\theta})$.}
	The behavior of $f(x;\boldsymbol{\theta})$ with $x\to 0$ or $x\to \infty$ is as follows:
	\begin{align*}
		\lim_{x\to 0} f(x;\boldsymbol{\theta})
		=
		\begin{cases}
			\infty & \text{for} \quad 0<\alpha<1,
			\\
			\dfrac{1}{\beta} & \text{for} \quad \alpha=1,
			\\
			0 & \text{for} \quad \alpha>1,
		\end{cases}
	\end{align*}
	\begin{align*}
		\lim_{x\to \infty} f(x;\boldsymbol{\theta})
		=
		0 \quad \forall \alpha>0.
	\end{align*}

	\item[2)] {\bf Monotonicity, unimodality, concavity and convexity of $f(x;\boldsymbol{\theta})$.}
	The point $x$ is a mode of the Weibull density, if
	and only if it is the solution of the following equation
	\begin{align*}
		{{\rm d} f(x;\boldsymbol{\theta})\over {\rm d} x}=
		\displaystyle {f(x;\boldsymbol{\theta})\over x}\,
		\Bigg[(\alpha-1)-\alpha\bigg({x\over\beta}\bigg)^\alpha\Bigg]=0.
	\end{align*}
	Solving this equation, we get the following critical point
	\begin{align*}
		x_0=
		\begin{cases} \displaystyle
			\beta \bigg({\alpha-1\over\alpha}\bigg)^{1/\alpha}
			& \text{for} \ \alpha>1,
			\\[0,4cm]
			0 & \text{for} \ \alpha\leqslant 1.
		\end{cases}
	\end{align*}
	A simple calculation shows that
	\begin{align*}
		{{\rm d}^2 f(x;\boldsymbol{\theta})\over {\rm d} x^2}=
		\displaystyle {f(x;\boldsymbol{\theta})\over x^2}\,
		\Bigg[\alpha^2\bigg({x\over\beta}\bigg)^{2\alpha}
		-3\alpha(\alpha-1)\bigg({x\over\beta}\bigg)^{\alpha}
		+(\alpha-1)(\alpha-2)
		\Bigg].
	\end{align*}
	Note that
	\begin{align*}
		{{\rm d}^2 f(x_0;\boldsymbol{\theta})\over {\rm d} x^2}
		=
		-\alpha(\alpha-1)<0, \quad \alpha>1,
	\end{align*}
	and
	\begin{align*}
		{{\rm d}^2 f(x;\boldsymbol{\theta})\over {\rm d} x^2}=0
		\quad \iff \quad 
		x=x_{\pm}=\beta\Bigg[\dfrac{3(\alpha-1)\pm\sqrt{(\alpha-1)(5\alpha-1)}}{2\alpha}\Bigg]^{1/\alpha}.
	\end{align*}
	Therefore, the following properties follow immediately:
	
	For $\alpha> 1$,
	\begin{itemize}
		\item[$\bullet$] $f(x;\boldsymbol{\theta})$ increases as $x\to x_0$ and decreases thereafter.
		\item[$\bullet$] The point $x_0$ is the unique mode for the Weibull density.
		\item[$\bullet$] For $\alpha\neq 2$, the inflection points $x_{\pm}$ satisfy inequality $x_{-}<x_+$. Furthermore, $f(x_0;\boldsymbol{\theta})$ is convex on $(0,x_{-})\cup(x_+,\infty)$ and is concave on $(x_{-},x_+)$.
		\item[$\bullet$] For $\alpha= 2$, $x_{-}=0$ and $x_+=6\beta/4$. Then	$f(x_0;\boldsymbol{\theta})$ is concave on $(0,x_{+})$ and is convex on the interval $(x_+,\infty)$.
	\end{itemize}
	
	For $\alpha\leqslant 1$,
	\begin{itemize}
		\item[$\bullet$] $f(x_0;\boldsymbol{\theta})$ decreases monotonically and is convex.
		\item[$\bullet$] The mode is non-existent.
	\end{itemize}

	\item[3)] {\bf Reliability.} If $F(t;\boldsymbol{\theta})$  denotes the cumulative distribution function of $X$, then the reliability function is written as
	\begin{align*}
		\displaystyle
		R(t;\boldsymbol{\theta})
		=
		1-F(t;\boldsymbol{\theta})
		=
		\exp\Bigg[-\bigg({t\over\beta}\bigg)^\alpha\Bigg].
	\end{align*}
	\item[4)] {\bf Hazard rate.} The hazard rate is given by
	\begin{align*}
		H(t;\boldsymbol{\theta})
		=
		\dfrac{f(t;\boldsymbol{\theta})}{R(t;\boldsymbol{\theta})}
		= \displaystyle
		{\alpha\over \beta }\,  \bigg({t\over\beta}\bigg)^{\alpha-1}.
	\end{align*}
	The function $H(t;\boldsymbol{\theta})$ is increasing when $\alpha> 1$, decreasing when $\alpha<1$ and constant when $\alpha = 1$. 
	\item[5)] {\bf Truncated moment.} Integration by parts gives
	\begin{align*}
		\displaystyle 
		\mathbb{E}\big[{1}_{\{X\geqslant t\}} X^s\big]
		=
		t^s
		\exp\Bigg[-\bigg({t\over\beta}\bigg)^\alpha\Bigg]
		+ 
		{s\beta^s\over\alpha}\,
		\Gamma\bigg({s\over\alpha},{t^\alpha\over\beta^\alpha}\bigg),
		\quad s\geqslant 0,
	\end{align*}	
	where $\Gamma(s,x)$, $s>0$, is the upper incomplete gamma function, and where we  adopt the notation $\Gamma(0,x)=0$.
	
	\item[6)]  {\bf Moment of the residual life.}	For each $t\geqslant 0$, we have
	\begin{align*}
		\displaystyle 
		\phi_n(t;\boldsymbol{\theta})
		&=
		\mathbb{E}\big[(X-t)^n\vert X\geqslant t\big]
		\\[0,3cm]	
		&=
		\sum_{k=1}^{n} \binom{n}{k} (-t)^{n-k}
		\Bigg\{
		t^k+\beta^k \Gamma\bigg(1+\dfrac{k}{\alpha}, \dfrac{t^\alpha}{\beta^\alpha}\bigg) \exp\Bigg[\bigg({t\over\beta}\bigg)^\alpha\Bigg]
		\Bigg\}
		+(-t)^n.
	\end{align*}
	
	\vspace*{0,3cm}\hspace*{0,5cm}
	The proof of this identity is immediate since
	\[
	\phi_n(t;\boldsymbol{\theta})
	=
	{1\over R(t;\boldsymbol{\theta})}\,
	\sum_{k=1}^{n} \binom{n}{k} (-t)^{s-k}
	\mathbb{E}\big[{1}_{\{X\geqslant t\}} X^n\big]
	+ (-t)^n,
	\]
	where $R(t;\boldsymbol{\theta})$ and
	$\mathbb{E}\big[{1}_{\{X\geqslant t\}} X^n\big]$
	are given in Items 3) and 5), respectively.
	
	\item[7)] {\bf First main tool.} By using the formula of Item (6) in \cite{Cankaya2018}, we have
	\begin{align*}
		&\displaystyle
		\mathbb{E}\big[X^s f^{r-1}(X;\boldsymbol{\theta})\big]
		=
		\dfrac{\alpha^{r-1} \beta^{s-r+1}}{r^{r+(s-r+1)/\alpha}} \,
		\beta^s \Gamma\bigg(r+{s-r+1\over\alpha}\bigg), \quad s-r+1>-\alpha r.
	\end{align*}
	
	\begin{itemize}
		\item[$\bullet$]
		{\bf Real moments.} By taking $r=1$ in the first main tool, we reach
		\begin{align*}
			\displaystyle 
			\mathbb{E}(X^s)
			=
			\beta^s \Gamma\bigg(1+{s\over\alpha}\bigg), \quad s>-\alpha,
		\end{align*}
		where $\Gamma(z)$ is the complete gamma function.
		Taking $t\to 0$ in  the moment of the residual life $\phi_n(t;\boldsymbol{\theta})$ we  see that the entire moments are justified by the above identity.
		\item[$\bullet$] 
		{\bf Tsallis entropy.} 
		As an immediate application of the first main tool, we have that if $q(\alpha-1)>-1$ then the Tsallis entropy \cite{Tsallis1988} is given by
		\begin{align*}
			\displaystyle
			S_q(X)&=	 
			\dfrac{1}{q-1}\, \Bigg[1-\int_{0}^{\infty} f^q(x;\boldsymbol{\theta}) \, {\rm d}x\Bigg]
			\\[0,3cm]	
			&=	
			\dfrac{1}{q-1}\, \Bigg[1-\bigg(\dfrac{\alpha}{\beta}\bigg)^{q-1}
			\dfrac{1}{q^{q+(1-q)/\alpha}}\,
			\Gamma\bigg(q+{1-q\over\alpha}\bigg)\Bigg],
			\quad q\in\mathbb{R} \backslash \{1\}.
		\end{align*}
		\begin{itemize}
			\item 
			{\bf Quadratic entropy.}
			It is followed directly by Tsallis entropy that, 
			if $\alpha>1/2$ then the quadratic entropy can be written as
			\begin{align*}
				\displaystyle 
				H_2(X)
				&=
				-\log\Bigg[\int_{0}^{\infty} f^2(x;\boldsymbol{\theta}) \, {\rm d}x\Bigg]
				\\[0,2cm]	
				&=
				\log(\beta)
				-\log(\alpha)
				+\bigg(2-\dfrac{1}{\alpha}\bigg) \log(2)
				-\log\Bigg[
				{\displaystyle 
					\Gamma\bigg(2-{1\over\alpha}\bigg)
				}
				\Bigg].
			\end{align*}	
			%
			
			\item 
			{\bf Shannon entropy.} By combining the Tsallis entropy with the well known relation
			$
			H_1(X)=\lim_{q\to 1} S_q(X),
			$ we have that the Shannon entropy can be written as
			\begin{align*}
				\displaystyle
				H_1(X)
				=
				-\int_{0}^{\infty} f(x;\boldsymbol{\theta}) \log\big[f(x;\boldsymbol{\theta})\big] \, {\rm d}x
				= \displaystyle
				1+\log\bigg(\dfrac{\beta}{\alpha}\bigg)+\bigg(1-\dfrac{1}{\alpha}\bigg) \gamma,
			\end{align*}
			where $\gamma=-{{\rm d} \Gamma(x)\over {\rm d}x}\big\vert_{x=1}\approx 0.57721$ is the Euler-Mascheroni constant.
			%
		\end{itemize}
	\end{itemize}
	
	\item[8)] {\bf Moment generating function.}
	By applying Fubini's Theorem we have that the moment generating function, $M_X(t)=\mathbb{E}[\exp({tX})]$, can be expressed as follows 
	\begin{align*}
		\displaystyle 
		M_X(t)=
		\begin{cases}
			\displaystyle
			\dfrac{1}{1-\beta t}
			& \text{for} \ \alpha=1 \ \text{and} \ \vert t\vert<1/\beta,
			\\[0,4cm]
			\displaystyle
			\sum_{n=0}^{\infty} {(\beta t)^n\over n!}\,
			\Gamma\bigg(1+{n\over\alpha}\bigg)
			& \text{for} \ \alpha>1 \ \text{and} \ t\in\mathbb{R}.
		\end{cases}
	\end{align*}
	\item[9)] {\bf Light-tailed distribution.} From Item 8) it follows that,  if $\alpha\geqslant 1$ then there exists $t_0 > 0$ such that $\mathbb{P}(X > x)\leqslant \exp(-t_0x)$ for $x$ large enough.

	\item[10)] {\bf Second main tool.}  A simple change of variable shows that
	\begin{align*}
		& \displaystyle
		\mathbb{E}\big[X^s\log(X) f^{r-1}(X;\boldsymbol{\theta})\big]
		\\[0,3cm]	
		& =
		{\beta^{s-r+1} \alpha^{r-1}\over r^{1+\zeta/\alpha}} \,
		\Gamma\bigg(1+\dfrac{\zeta}{\alpha}\bigg)
		\Bigg[\log\bigg({\beta\over r^{1/\alpha}}\bigg)
		+
		{1\over\alpha}\, 
		\Psi^{(0)}\bigg(1+\dfrac{\zeta}{\alpha}\bigg)
		\Bigg], \quad \zeta>-\alpha,
	\end{align*}
	where $\zeta=s+(r-1)(\alpha-1)$, $r\in\mathbb{R}$, and  $\Psi^{(m)}(z)={{\rm d}^{m+1}\log[\Gamma(z)] \over {\rm d}z^{m+1}}$ is the polygamma function of order $m$.
	
	\vspace*{0,3cm}\hspace*{0,5cm}
	Indeed, by taking the change of variable $w=r\big({x\over\beta}\big)^\alpha$ we have, for each $\zeta>-\alpha$,
	\begin{align}
		\mathbb{E}\big[X^s\log(X) f^{r-1}(X;\boldsymbol{\theta})\big]
		&= \displaystyle 
		\bigg({\alpha\over\beta}\bigg)^r\, 
		\int_{0}^{\infty} x^{s}\log(y)\,  \bigg({x\over\beta}\bigg)^{r(\alpha-1)} \exp\Bigg[-r\bigg({x\over\beta}\bigg)^{\alpha}\Bigg]\, {\rm d}x \nonumber
		\\[0,3cm]
		&=
		{\beta^{s-r+1} \alpha^{r-1}\over r^{1+\zeta/\alpha}} \,
		\int_{0}^{\infty} w^{\zeta\over\alpha}
		\Bigg[\log\bigg({\beta\over r^{1/\alpha}}\bigg)+{1\over\alpha}\, \log(w)\Bigg]
		\exp(-w)\, {\rm d}w. \label{form-1}
	\end{align}
	By combining the following known formulas
	\begin{align}
		&\int_{0}^{\infty} w^{\zeta\over\alpha} \exp(-w)\, {\rm d}w
		=
		\Gamma\bigg(1+\dfrac{\zeta}{\alpha}\bigg), \label{formula1}
		\\[0,2cm]
		&\int_{0}^{\infty} w^{\zeta\over\alpha} \log(w)  \exp(-w)\, {\rm d}w
		=
		\Gamma\bigg(1+\dfrac{\zeta}{\alpha}\bigg)
		\Psi^{(0)}\bigg(1+\dfrac{\zeta}{\alpha}\bigg), \label{formula2}
	\end{align}
	with the identity \eqref{form-1}, the formula for
	the expectation $\mathbb{E}\big[X^s\log(X) f^{r-1}(X;\boldsymbol{\theta})\big]$ follows.
	
	\begin{itemize}
		\item[$\bullet$] In the particular case when $r=1$ we have
		\begin{align*}
			\displaystyle
			\mathbb{E}[X^s\log(X)]
			=
			\beta^s \Gamma\bigg(1+\dfrac{s}{\alpha}\bigg)
			\Bigg[\log(\beta)  
			+
			{1\over\alpha}\, 
			\Psi^{(0)}\bigg(1+\dfrac{s}{\alpha}\bigg)
			\Bigg], \quad s>-\alpha.
		\end{align*}
	\end{itemize}

	\item[11)] {\bf Third main tool.} Analogously to the proof of Item 10), a simple change of variable shows that, for each $s>-\alpha$,
	\begin{align*}
		\displaystyle
		&\mathbb{E}\big[X^s\log^2(X)  f^{r-1}(X;\boldsymbol{\theta})\big]
		\\[0,2cm]	
		&=
		{2\beta^{s-r+1} \alpha^{r-2}\over r^{1+\zeta/\alpha}} \,
		\Gamma\bigg(1+\dfrac{\zeta}{\alpha}\bigg)
		\log\bigg({\beta\over r^{1/\alpha}}\bigg) 
		\Psi^{(0)}\bigg(1+\dfrac{\zeta}{\alpha}\bigg)
		\displaystyle
		\\[0,2cm]	
		&
		+
		{\beta^{s-r+1} \alpha^{r-1}\over r^{1+\zeta/\alpha}} \,
		\Gamma\bigg(1+\dfrac{\zeta}{\alpha}\bigg)
		\Bigg\{
		\log^2\bigg({\beta\over r^{1/\alpha}}\bigg)  
		+
		{1\over\alpha^2}\, \bigg[
		\Psi^{(0)}\bigg(1+\dfrac{\zeta}{\alpha}\bigg)^2
		+
		\Psi^{(1)}\bigg(1+\dfrac{\zeta}{\alpha}\bigg) \bigg]
		\Bigg\},
	\end{align*}
	where $\zeta=s+(r-1)(\alpha-1)$ and $r\in\mathbb{R}$ and $\zeta>-\alpha$.
	
	\vspace*{0,1cm}
	\hspace*{0,5cm} 
	The proof of 
	this formula follows by taking the change of variable $w=r\big({x\over\beta}\big)^\alpha$ and by combining the formulas \eqref{formula1} and \eqref{formula2} with the following formula, for each $\zeta>-\alpha$,
	\begin{align*}
		\hspace{-0,5cm}
		\int_{0}^{\infty} w^{\zeta\over\alpha} \log^2(w) \exp(-w)\, {\rm d}w
		=
		\Gamma\bigg(1+\dfrac{\zeta}{\alpha}\bigg)
		\Bigg[
		\Psi^{(0)}\bigg(1+\dfrac{\zeta}{\alpha}\bigg)^2
		+
		\Psi^{(1)}\bigg(1+\dfrac{\zeta}{\alpha}\bigg) \Bigg].
	\end{align*} 
	\begin{itemize}
		\item[$\bullet$] In the particular case when $r=1$ we have, for $s>-\alpha$,
		{\scalefont{0,87}
			\begin{align*}
				\displaystyle
				\hspace{-1,3cm}
				\mathbb{E}[X^s\log^2(X)]
				=
				\beta^s \Gamma\bigg(1+\dfrac{s}{\alpha}\bigg)
				\Bigg\{
				\log^2(\beta)  
				+
				{1\over\alpha^2}\, \bigg[
				\Psi^{(0)}\bigg(1+\dfrac{s}{\alpha}\bigg)^2
				+
				\Psi^{(1)}\bigg(1+\dfrac{s}{\alpha}\bigg) \bigg]
				\displaystyle
				+
				{2 \log(\beta)\over\alpha}\, 
				\Psi^{(0)}\bigg(1+\dfrac{s}{\alpha}\bigg)
				\Bigg\}.
			\end{align*}
		}
	\end{itemize}


	
\end{itemize}

\section{Inference: Estimation Methods and Fisher information}\label{section3inference}
\subsection{Maximum likelihood estimation method}
Maximum likelihood is the standard approach in parametric estimation, mainly
due to the desirable asymptotic properties of consistency, efficiency and asymptotic normality under some regularity conditions (\cite{LehmannCas98}).

\begin{itemize}
	\item 
	The log-likelihood function for $\boldsymbol{\theta}$ is given by
	\begin{align*}
		l(\boldsymbol{\theta}; \boldsymbol{x})
		=
		n\log\Big(\frac{\alpha}{\beta}\Big) +
		\sum_{i=1}^{n} 
		\left\{
		(\alpha-1)\log\Big(\frac{x_i}{\beta}\Big) - \Big({x_i \over\beta}\Big)^{\alpha} 
		\right\}.
	\end{align*}
	
	A standard calculation shows that the first-order partial derivatives of $l(\boldsymbol{\theta}; \boldsymbol{x})$ are
	\begin{align}
		{\partial l(\boldsymbol{\theta}; \boldsymbol{x})\over\partial\alpha}
		&=
		{n\over\alpha}-n\log(\beta)+\sum_{i=1}^{n}\log(x_i)-
		{1\over\beta^{\alpha}} \sum_{i=1}^{n}x_i^\alpha \log(x_i),
		\label{1der}
		\\[0,2cm]
		{\partial l(\boldsymbol{\theta}; \boldsymbol{x})\over\partial\beta}
		&=
		-{n\alpha\over\beta}+{\alpha\over\beta^{\alpha+1}} \sum_{i=1}^{n}x_i^\alpha.
		\label{2der}
	\end{align}

	The second-order partial derivatives of $l(\boldsymbol{\theta}; \boldsymbol{x})$ can be written as
	\begin{align*}
		{\partial^2 l(\boldsymbol{\theta}; \boldsymbol{x})\over\partial\alpha^2}
		&=
		-{n\over\alpha^2}+{\log(\beta)\over\beta^{\alpha}} \sum_{i=1}^{n}x_i^\alpha \log(x_i) - 
		{1\over\beta^{\alpha}} \sum_{i=1}^{n}x_i^\alpha \log^2(x_i),
		\\[0,2cm]
		{\partial^2 l(\boldsymbol{\theta}; \boldsymbol{x})\over\partial\beta^2}
		&=
		{n\alpha\over\beta^2}-{\alpha(\alpha+1)\over\beta^{\alpha+2}} \sum_{i=1}^{n}x_i^\alpha,
		\\[0,2cm]
		{\partial^2 l(\boldsymbol{\theta}; \boldsymbol{x})\over\partial\alpha\partial\beta}
		&=
		{\partial^2 l(\boldsymbol{\theta}; \boldsymbol{x})\over\partial\beta\partial\alpha}
		=
		-{n\over\beta}+{\alpha\over\beta^{\alpha+1}} \sum_{i=1}^{n}x_i^\alpha \log(x_i).
	\end{align*}
	
	\item 
	The $\log_q$-likelihood function for $\boldsymbol{\theta}$ is given by
	\begin{align*}
		l_q(\boldsymbol{\theta}; \boldsymbol{x})
		= \sum_{i=1}^{n} {f^{1-q}(x_i;\boldsymbol{\theta})-1 \over 1-q}.
	\end{align*}
	Note that the first-order partial derivatives of $l_q(\boldsymbol{\theta}; \boldsymbol{x})$ are
	\begin{align*}
		{\partial l_q(\boldsymbol{\theta}; \boldsymbol{x})\over\partial\alpha}
		&=
		\sum_{i=1}^{n} f^{-q}(x_i;\boldsymbol{\theta}) \, {\partial f(x_i;\boldsymbol{\theta})\over\partial\alpha}
		=
		\sum_{i=1}^{n} f^{1-q}(x_i;\boldsymbol{\theta}) \, 
		\Bigg\{\dfrac{1}{\alpha}+\Bigg[1-\bigg(\dfrac{x_i}{\beta}\bigg)^\alpha\Bigg]\log\bigg({x_i\over\beta}\bigg)\Bigg\},
		\\[0,2cm]
		{\partial l_q(\boldsymbol{\theta}; \boldsymbol{x})\over\partial\beta}
		&=
		\sum_{i=1}^{n} f^{-q}(x_i;\boldsymbol{\theta}) \, {\partial f(x_i;\boldsymbol{\theta})\over\partial\beta}
		=
		\dfrac{\alpha}{\beta}
		\sum_{i=1}^{n} f^{1-q}(x_i;\boldsymbol{\theta}) \, 
		\Bigg[\bigg(\dfrac{x_i}{\beta}\bigg)^\alpha-1\Bigg].
	\end{align*}
\end{itemize}


\subsection{Fisher information}
The Fisher information matrix is defined by
\begin{align}\label{abdF}
	F=n\begin{bmatrix} 
		E_{\alpha \alpha} & E_{\alpha \beta}  \\
		E_{\beta \alpha} & E_{\beta \beta}     
	\end{bmatrix},
\end{align}
where $n$ is sample size. $E$ is integral for partial derivatives of $\log(L)$ according to parameters and it is taken by probability density function $f(x;\boldsymbol{\theta})$. The subscript in $E$ represents second-order partial derivatives of $\log(L)$ according to parameters $\alpha $ and $ \beta$. In other words, if $X\sim \text{Weibull}(\boldsymbol{\theta})$, by using Items 7), 10) and 11) of Subsection \ref{Properties}, we have 

\begin{align*}
	\begin{array}{lllll}
		E_{\alpha \alpha}
		&= \displaystyle
		-{1\over\alpha^2}+{\log(\beta)\over\beta^{\alpha}}\, \mathbb{E}\big[X^\alpha \log(X)\big] - 
		{1\over\beta^{\alpha}}\, \mathbb{E}\big[X^\alpha \log^2(X)\big]
		\\[0,3cm]
		&= \displaystyle
		- 
		{1\over\alpha^2}\, 
		\big[
		1
		+
		\Psi^{(1)}(2) 
		+
		\Psi^{(0)}(2)^2
		+
		{\alpha\log(\beta)}\, 
		\Psi^{(0)}(2)
		\big],
		\\[0,4cm]
		E_{\beta \beta}
		&= \displaystyle
		{\alpha\over\beta^2}-{\alpha(\alpha+1)\over\beta^{\alpha+2}}\, \mathbb{E}\big(X^\alpha\big)
		=
		-{\alpha^2\over\beta^2},
		\\[0,4cm]
		E_{\alpha \beta}
		&= \displaystyle
		E_{\beta \alpha}
		=
		-{1\over\beta}+{\alpha\over\beta^{\alpha+1}}\, \mathbb{E}\big[X^\alpha \log(X)\big]
		=
		-{1\over\beta}+{\alpha\over\beta}\,
		\bigg[\log(\beta)  
		+
		{1\over\alpha}\, 
		\Psi^{(0)}(2)
		\bigg].
	\end{array}
\end{align*}
Since, $\Psi^{(0)}(2)={{\rm d}\log [\Gamma(x)]\over {\rm d}x}\big\vert_{x=2}=1-\gamma$ and $\Psi^{(1)}(2)={{\rm d^2}\log [\Gamma(x)]\over {\rm d}x^2}\big\vert_{x=2}={\pi^2\over 6}-1$, where
$\gamma=-{{\rm d} \Gamma(x)\over {\rm d}x}\big\vert_{x=1}\approx 0.57721$ is the Euler-Mascheroni constant.
Then the matrix \eqref{abdF} is given by
\begin{align*}
	F=n\begin{bmatrix}  \displaystyle
		- 
		{1\over\alpha^2}\, 
		\bigg\{
		{\pi^2\over 6}
		+
		(1-\gamma)
		\big[
		{\alpha\log(\beta)}+(1-\gamma)
		\big]
		\bigg\} 
		&&& \displaystyle
		{\alpha\over\beta}\,
		\bigg\{\log(\beta)  
		-
		{\gamma\over\alpha}
		\bigg\}  
		\\[0,5cm] \displaystyle
		{\alpha\over\beta}\,
		\bigg\{\log(\beta)  
		-
		{\gamma\over\alpha}
		\bigg\} 
		&&&  \displaystyle
		-{\alpha^2\over\beta^2}     
	\end{bmatrix}.
\end{align*}

\begin{theorem}
	Let $\Theta=\{\boldsymbol{\theta}=(\alpha,\beta)\in\mathbb{R}^+\times\mathbb{R}^+ :  \alpha \ \text{is known} \ \text{and} \ \beta> 1 \}$ be the parameter space.
	Then, with probability approaching $1$, as $n\to\infty,$
	the likelihood equation
	${{\rm d}\, l(\boldsymbol{\theta}; \boldsymbol{x})\over{\rm d}\beta}=0$ has a consistent solution,
	denoted by $\widehat{\beta}$.
\end{theorem}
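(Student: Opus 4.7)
The plan is to exploit the fact that when $\alpha$ is treated as known, the score equation in $\beta$ can be solved in closed form, so consistency will reduce to an application of the strong law of large numbers combined with the moment formula stated in Item 7) of Subsection \ref{Properties}.

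First, I would set $\partial l(\boldsymbol{\theta};\boldsymbol{x})/\partial\beta=0$ in the expression \eqref{2der}. After cancelling the common factor $\alpha/\beta$, this reduces to the explicit identity $\beta^\alpha = n^{-1}\sum_{i=1}^{n} x_i^\alpha$, so the unique candidate solution is
\[
\widehat{\beta}_n = \bigg(\frac{1}{n}\sum_{i=1}^{n} x_i^\alpha\bigg)^{1/\alpha}.
\]
To confirm that $\widehat{\beta}_n$ is a local maximum and not a minimum or saddle, I would substitute into the second-order derivative $\partial^2 l/\partial\beta^2$ displayed just above and verify that it evaluates to $-n\alpha^2/\widehat{\beta}_n^2 < 0$, so that $\widehat{\beta}_n$ genuinely maximises $l$ locally.

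Next, I would invoke Item 7) of Subsection \ref{Properties} with $s=\alpha$ and $r=1$, which gives $\mathbb{E}(X^\alpha)=\beta_0^\alpha\,\Gamma(2)=\beta_0^\alpha$, where $\beta_0>1$ denotes the true parameter value. Kolmogorov's strong law of large numbers then yields $n^{-1}\sum_{i=1}^n x_i^\alpha \longrightarrow \beta_0^\alpha$ almost surely, and applying the continuous mapping theorem to the function $y\mapsto y^{1/\alpha}$ gives $\widehat{\beta}_n\to\beta_0$ almost surely, and hence in probability.

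Finally, for any $\varepsilon\in(0,\beta_0-1)$ the event $\{\widehat{\beta}_n \in (\beta_0-\varepsilon,\beta_0+\varepsilon)\}$ is contained in $\{\widehat{\beta}_n>1\}$ and has probability approaching $1$ by the previous step; thus with probability tending to $1$ the candidate $\widehat{\beta}_n$ lies inside $\Theta$ and is a consistent root of $\partial l/\partial\beta=0$. I do not anticipate a serious obstacle: the one-parameter score is linear in the empirical $\alpha$-th moment, so the likelihood equation is invertible in closed form and consistency reduces to an SLLN statement. The only mildly delicate point is the boundary constraint $\beta>1$ in the definition of $\Theta$, which is handled by the concentration of $\widehat{\beta}_n$ around $\beta_0>1$.
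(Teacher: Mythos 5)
Your proof is correct, but it follows a genuinely different route from the paper. The paper verifies Cram\'er's (1946) three regularity conditions for the one-parameter score in $\beta$: the score has zero mean, the expected second derivative equals $-\alpha^2/\beta^2$ and is finite and negative, and the third derivative is dominated by an integrable envelope $H(x)=2\alpha+\alpha(\alpha+1)(\alpha+2)x^\alpha$ — this last step is where the paper uses the hypothesis $\beta>1$ — and then cites Cram\'er's theorem to conclude that a consistent root exists with probability tending to one. You instead observe that the score equation is solvable in closed form, $\widehat{\beta}_n=\big(n^{-1}\sum_{i=1}^n x_i^\alpha\big)^{1/\alpha}$, check via the second derivative that this is the unique stationary point and a maximum, compute $\mathbb{E}(X^\alpha)=\beta_0^\alpha$ from Item 7), and finish with the strong law of large numbers and the continuous mapping theorem; the constraint $\beta>1$ enters only to guarantee that $\widehat{\beta}_n$ eventually lies in $\Theta$. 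Your argument is more elementary and actually delivers more: it gives almost-sure convergence of an explicitly identified, unique root, rather than the mere existence of some consistent root with probability approaching one, and it does not need the third-derivative domination at all. What the paper's approach buys in exchange is generality — Cram\'er's machinery applies verbatim when no closed-form root is available (e.g., if $\alpha$ were also unknown) — and it is the template the authors presumably want to illustrate. Both proofs are sound; yours is the sharper one for this particular one-parameter problem.
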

\begin{proof}
	If $X\sim \text{Weibull}(\boldsymbol{\theta})$, a simple calculation shows that
	\begin{enumerate}
		\item $\mathbb{E}\big[{{\rm d}\log f(X;\boldsymbol{\theta})\over{\rm d}\beta}\big]=-{\alpha\over\beta}+{\alpha\over\beta^{\alpha+1}} \, \mathbb{E}(X^\alpha) =0$
		for all $\boldsymbol{\theta}\in\Theta$;
		\item $-\infty<\mathbb{E}\big[{{\rm d}^2 \log f(X;\boldsymbol{\theta})\over{\rm d}\beta^2}\big]=-{\alpha^2\over\beta^2}<0$
		for all $\beta\in\Theta$;
		\item There exits a function $H(x)$ such that for all $\boldsymbol{\theta}\in\Theta$,
		\begin{align*}
			\biggl|{{\rm d}^3 \log f(x;\boldsymbol{\theta})\over{\rm d}\beta^3}\biggr|
			&=
			\biggl|-{2\alpha\over\beta^3}-{\alpha(\alpha+1)(\alpha+2)\over\beta^{2\alpha+4}} \, \beta^{\alpha+1} x^\alpha \biggl|
			\\[0,2cm]
			&
			<2\alpha+\alpha(\alpha+1)(\alpha+2)x^{\alpha} =H(x),
		\end{align*}
		because $\beta>1$, and
		\begin{align*}
			\mathbb{E}\big[H(X)\big]&=2\alpha+\alpha(\alpha+1)(\alpha+2)\mathbb{E}(X^{\alpha})
			\\[0,2cm]
			&= 2\alpha+\alpha(\alpha+1)(\alpha+2) \beta^{\alpha} =M(\beta)<\infty.
		\end{align*}
	\end{enumerate}
	Hence, by \cite{Cramer46} the proof follows.
\end{proof}

\subsection{Estimating equations derived by objective functions in the composite likelihood}

The maximum composite likelihood estimation (MCLE) is a generalization of maximum likelihood estimation  (MLE), MCLE is given by
\begin{equation}
	L_C\big[f(\boldsymbol{x};\boldsymbol{\theta})\big]=\prod_{i=1}^{n} f(x_i;\boldsymbol{\theta})^{w_i}
\end{equation}
\noindent $w_i \in \mathbb{R}$ is a weight function, $\boldsymbol{x}=(x_1,x_2,\ldots,x_n)$ and $f(x;\boldsymbol{\theta})$ is a p.d. function. MCLE can cover density power divergence and its generalized forms (\cite{varincomposite,alphabetadivergences}). The M-estimators are obtained by optimizing the objective function:
\begin{equation}\label{bigfamily}
	\sum_{i=1}^{n} \rho(x_i;\boldsymbol{\theta}):=\sum_{i=1}^{n}  w_i \log\big[f(x_i;\boldsymbol{\theta})\big]=\sum_{i=1}^{n}\Lambda\big[f(x_i;\boldsymbol{\theta})\big].
\end{equation}

When $w_i=1$ and $\log$ is replaced by $\log_q$, we have MLqE. $\log_q(f)={f^{1-q} -1 \over 1-q}, ~\lim_{q \to 1} \log_{q}(f)=\log(f)$. $\log_{\kappa}(f)={f^{\kappa}-f^{-\kappa} \over 2 \kappa}, ~ \lim_{\kappa \to 0} \log_{\kappa}(f)=\log(f)$ and $\log_q$ and $\log_{\kappa}$  are deformed logarithm of $\log$. $q \in \mathbb{R} \backslash \{1\},\kappa \in \mathbb{R}$ and $\gamma \geq 0$ are tuning constants used to adjust robustness and also efficiency (\cite{FerrariYang10,Wadatwopara,Vajda86}). The concavity property of $\Lambda$ is examined by \cite{CanKor18,Jan2} and references therein to use $\Lambda$ for the estimation process accurately.

The density power divergence (DPD) as an objective function  between $f(x;\boldsymbol{\theta})$ and $g(x)$ which is free from parameter was proposed and the reorganized form of DPD is given by \cite{Basuetal98}:
\begin{eqnarray}\label{DPDsimple}
	\sum_{i=1}^{n} \rho(x_i;\boldsymbol{\theta})
	:=\sum_{i=1}^{n}\Lambda\big[f(x_i;\boldsymbol{\theta})\big]
	=\int f(x;\boldsymbol{\theta})^{1+\gamma} \ud x - \bigg(1+\dfrac{1}{\gamma}\bigg){1 \over n} \sum_{i=1}^{n}f(x_i;\boldsymbol{\theta})^{\gamma}.
\end{eqnarray}

Let us try to get $\log_q$ from DPD after algebraic rearrangement and $\gamma=1-q>0$. Since $q<1$,  DPD puts a restriction for the values of tuning constant when we compare with $\log_q$. If equation \eqref{DPDsimple} is rewritten for $\gamma=1-q$, then we have the following expression: 
\begin{equation}\label{connectionqDPD}
	{2-q \over n} \sum_{i=1}^{n} \log_q\big[f(x_i;\boldsymbol{\theta})\big]+\frac{n}{1-q}+\int f(x;\boldsymbol{\theta})^{2-q} \ud x =\sigma \sum_{i=1}^{n} \log_q\big[f(x_i;\boldsymbol{\theta})\big] + \mu,
\end{equation}
\noindent where $\sigma$ and $\mu$ represent scale and location for $\sum_{i=1}^{n} \log_q\big[f(x_i;\boldsymbol{\theta})\big]$. When we consider to apply the optimization for DPD and maximum $\log_q$-likelihood,  $\sigma$ and $\mu$ will change where the optimized region is. In other words, there is an equivalence between DPD and  $\log_q$ if they are optimized according to parameters $\boldsymbol{\theta}$. Further, the integral value of $I=\int f(x;\boldsymbol{\theta}) \ud x$ in DPD depends on Gamma function if $f$ is Weibull (see \cite{Cankaya2018}). The arguman of Gamma function, i.e. $\Gamma, r>0$, has to be a positive and requires that the values of parameters remain within certain values, which is disadvantegous for a case in estimation. In addition, the computation time \cite{Muhammedbadwei} for numerical integration can be high according to the used p.d. function which is not tractable to calculate and get an expression for the result of integral. MCLE method is useful to get robust estimators for parameters. As a generalized form of MCLE, we can use MLqE which is for robust estimation, because MLqE is a simple method and does not have extra conditions. If $I$ does not exist, it is mandatory to show that $I$ is finite for the values of parameters in a p.d. function by use of tools in (\cite{Malikcondition}). Otherwise, $f$ in $I$  does not have a finite value, which shows that we will not have $\hat{\boldsymbol{\theta}}$.  DPD for estimation of parameters of Weibull distribution does not work properly, as introduced by  (\cite{Muhammedbadwei}).

Let us derive the estimating equations (EEs) to examine the role of objective functions. EEs are obtained after taking derivatives of objective functions, $\rho=\Lambda(f)$, according to parameters $\boldsymbol{\theta}$. For all of parameters, we have a system of EEs. The rearrangement forms of EEs are as follows:
\begin{equation} \label{eesq}
	\sum_{i=1}^{n} {\partial \log_q \big[f(x_i;\boldsymbol{\theta})\big] \over \partial \boldsymbol{\theta}} =\sum_{i=1}^{n} f(x;\boldsymbol{\theta})^{1-q}\mathcal{Z}(x_i;\boldsymbol{\theta}) = \sum_{i=1}^{n} w_q(x_i;\boldsymbol{\theta}) \mathcal{Z}(x_i;\boldsymbol{\theta})= \boldsymbol{0},
\end{equation}
\begin{equation} \label{eesk}
	\sum_{i=1}^{n}  {\partial \log_{\kappa} \big[f(x_i;\boldsymbol{\theta})\big] \over \partial \boldsymbol{\theta}} 
	=	
	\sum_{i=1}^{n}  \big[f(x_i;\boldsymbol{\theta})^{\kappa}+f(x_i;\boldsymbol{\theta})^{-\kappa}\big]\,  {\mathcal{Z}(x_i;\boldsymbol{\theta}) \over 2} 
	= 
	\sum_{i=1}^{n} w_{\kappa}(x_i;\boldsymbol{\theta}) \mathcal{Z}(x_i;\boldsymbol{\theta}) = \boldsymbol{0},
\end{equation}
\begin{equation} \label{eesa}
	\sum_{i=1}^{n}  	{\partial \log\big[ \propto + f(x_i;\boldsymbol{\theta}) \big] \over \partial \boldsymbol{\theta}} 
	= 	
	\sum_{i=1}^{n}   {f(x_i;\boldsymbol{\theta}) \over \propto+f(x_i;\boldsymbol{\theta})}\,
	\mathcal{Z}(x_i;\boldsymbol{\theta})
	=
	\sum_{i=1}^{n} w_{\propto}(x_i;\boldsymbol{\theta}) \mathcal{Z}(x_i;\boldsymbol{\theta}) = \boldsymbol{0},
\end{equation}
\begin{equation} \label{eesdpd}
	{1 \over n}	\sum_{i=1}^{n} \mathcal{Z}(x_i;\boldsymbol{\theta}) f(x_i;\boldsymbol{\theta})^{\gamma} - I  = \sum_{i=1}^{n} w_{\gamma}(x_i;\boldsymbol{\theta}) \mathcal{Z}(x_i;\boldsymbol{\theta}) - I  = \boldsymbol{0},
\end{equation}
\noindent where $I = \int  \mathcal{Z}(x_i;\boldsymbol{\theta}) f(x_i;\boldsymbol{\theta})^{1+\gamma} \ud x$. Eqs. \eqref{eesq}-\eqref{eesdpd} are the weighted score function with $w_{q},w_{\kappa},w_{\propto}$ and $w_{\gamma}$.
$\mathcal{Z}(x_i;\boldsymbol{\theta})={{\partial f(x_i;\boldsymbol{\theta}) \over \partial \boldsymbol{\theta}}  / f(x_i;\boldsymbol{\theta})}$ is a score function.  Note that DPD has a weighting $0 \leq w_{\gamma} \leq 1$ which is disadvantegous because of the poorness in modeling capability (\cite{Basuetal98}).  Let us rewrite the system of estimating equations (EE) given by the following form:
\begin{equation}\label{godambeeq}
	\sum_{i=1}^{n}  w(x_i;\boldsymbol{\theta}) \mathcal{Z}(x_i;\boldsymbol{\theta}) = \boldsymbol{0}.
\end{equation}

Note that $w_q$ is bigger than 1 if $q>1$, which shows an advantage for us when we compare with $ 0 \leq  w_{\propto},w_{\gamma} \leq 1$ for $\propto, \gamma \geq 0$. If  $0 \leq w \leq 1$, then a partial form of $\mathcal{Z}$ is produced. If $w >1$, then   $(1+v)  \mathcal{Z}$ as an extended form of $\mathcal{Z}$ is produced, which shows us that the information gained from the joint role of $w$ and $\mathcal{Z}$ at same time is managed by not only $\mathcal{Z}$ from $f^{'}/f$  but also the different values of $w$ from $f^{1-q}$. For this reason, $\log_q(f)$ as an objective function is used to perform an efficient fitting on a data set. Thus, it is seen that $w$ can manage to produce the efficient estimators $\hat{\boldsymbol{\theta}}$ from equation \eqref{godambeeq}. Note that the chosen $w$ affects the estimations of parameters which are shape and scale.  After solving the systems of the estimating equations according to parameters, the estimators $\hat{\boldsymbol{\theta}}$ are also obtained instead of optimizing the objective function (\cite{Udristegeo,Amari16}).
\subsection{Investigation for behaviour of objective functions in the composite likelihood}

Estimation is performed when we assume that the empirical data sets are a member of an objective function $\rho$. For this reason, we can remove integral or summation from relative entropy or divergence to have function part of analytical expression. Since $f$ is on the closed inverval $[0,1]$, we have advantage to display the behaviour of objective functions $\rho$ derived by $\Lambda(f)$ and the values of tuning parameters are chosen as the interval $[0,1]$. $\Lambda$ can be chosen as $\log_q$, $\log_{\kappa}$, $\log(\propto+f)$, etc (\cite{Lindsay94}).  Let us display their behaviours via Fig. \ref{functionscompare}: 
\begin{figure}[htbp]
	\centering
	\subfigure[]{\label{fig:logq1}\includegraphics[width=0.41\textwidth]{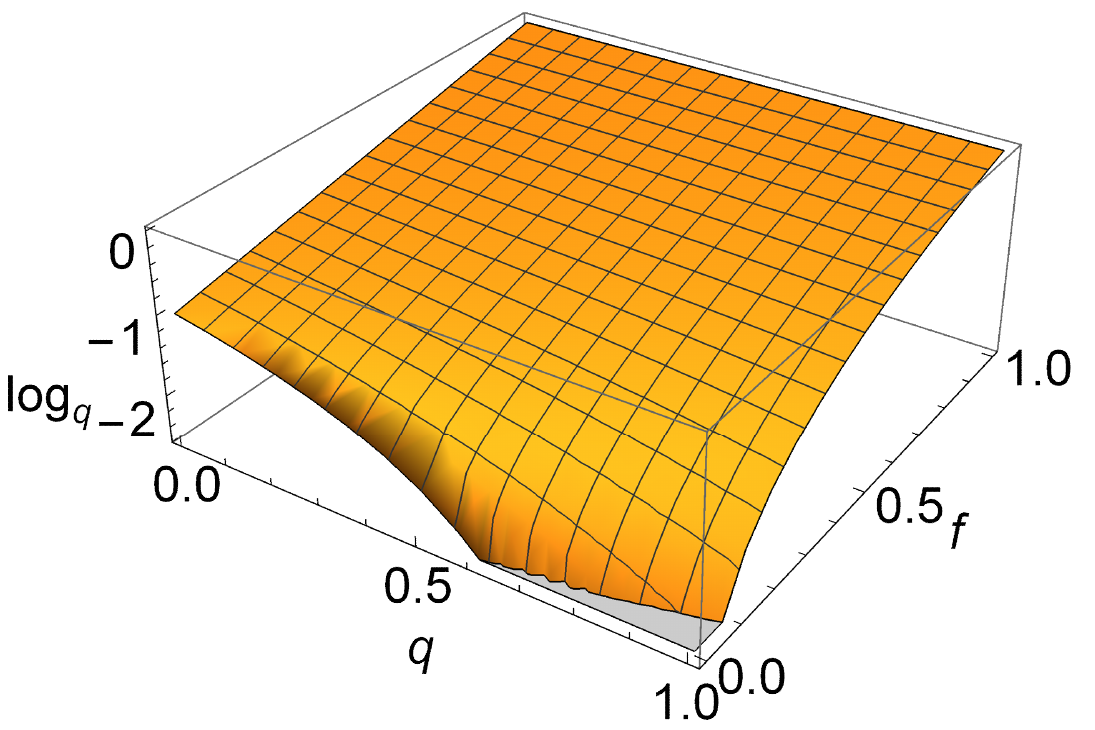}}
	\subfigure[]{\label{fig:logk}\includegraphics[width=0.41\textwidth]{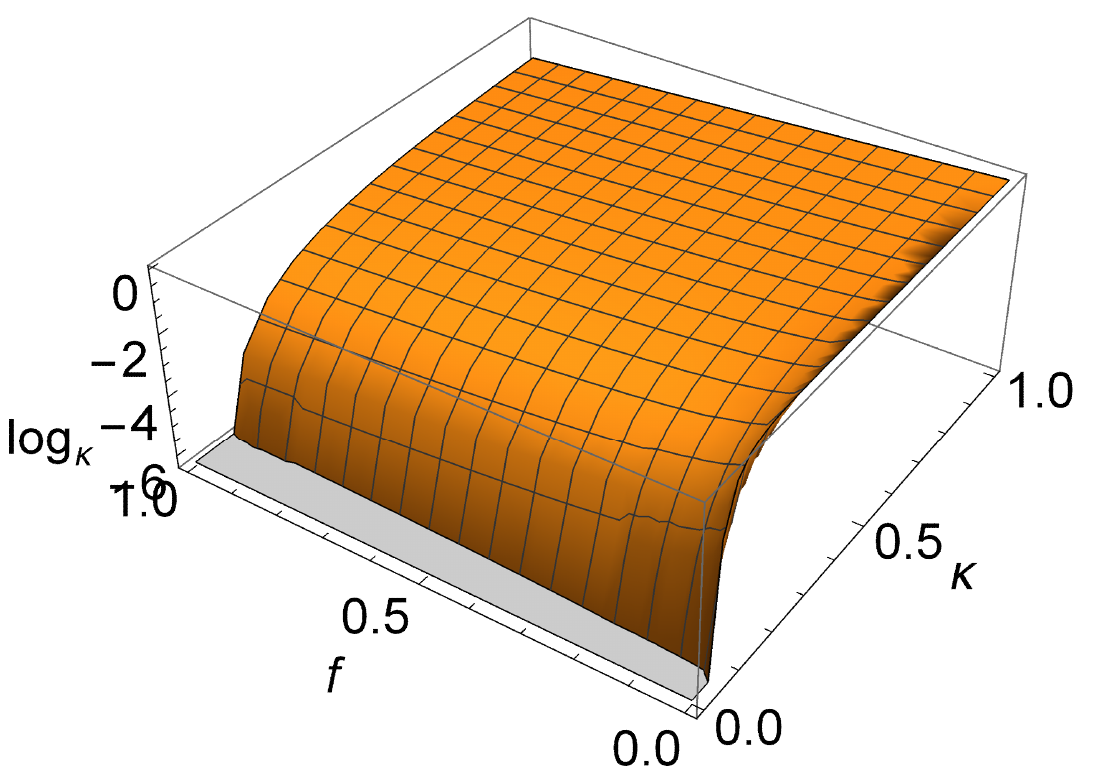}}
	\subfigure[]{\label{fig:logaf}\includegraphics[width=0.41\textwidth]{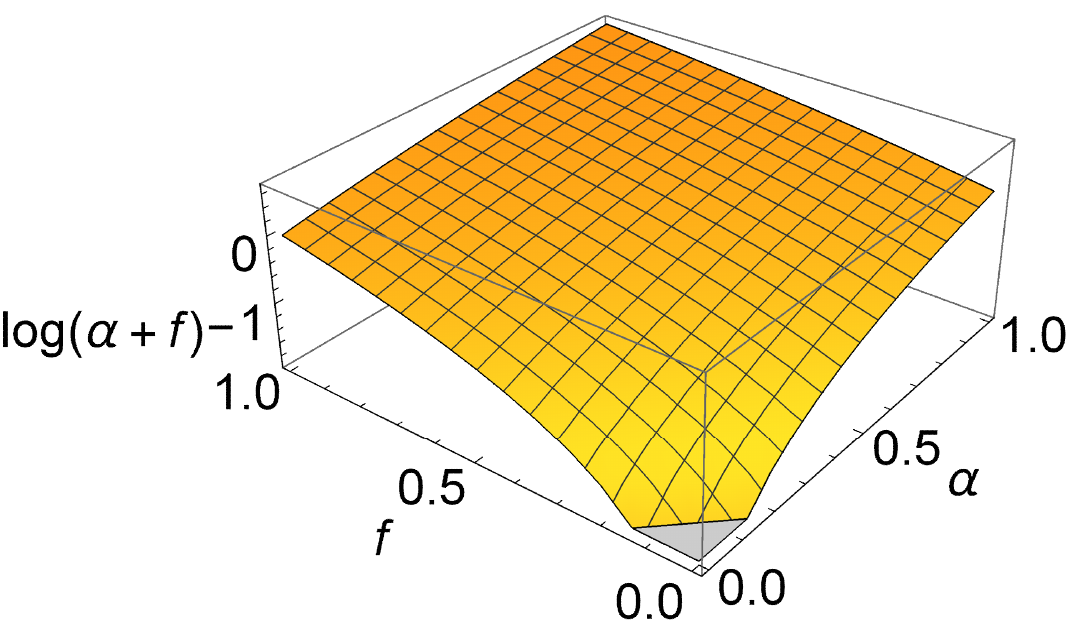}}
	\subfigure[]{\label{fig:logDPD}\includegraphics[width=0.41\textwidth]{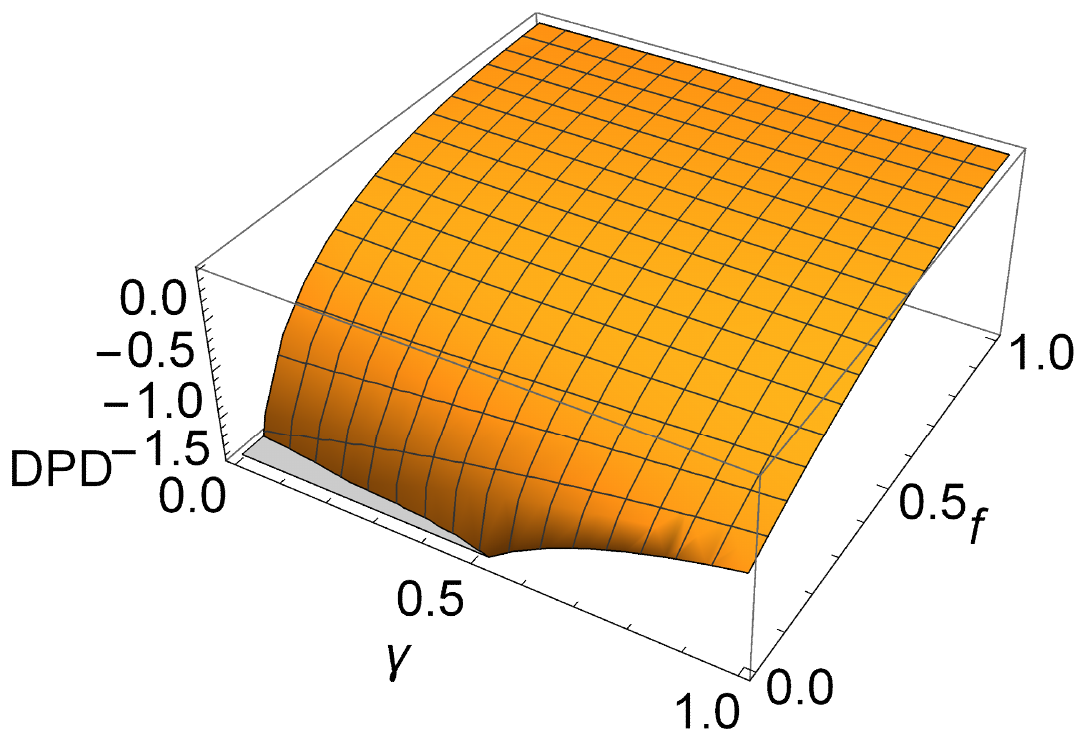}}
	\caption{$f \in [0,1]$ and $q,\kappa,\propto,\gamma \in [0,1]$}
	\label{functionscompare}
\end{figure}
Figs. \ref{fig:logq1} and \ref{fig:logk} have big range for the values of functions when compared with the Figs. \ref{fig:logaf} and \ref{fig:logDPD}.  As it seen from plots in Fig. \ref{functionscompare}, the concavity property of Figs. \ref{fig:logq1} and \ref{fig:logaf} can be better than  Figs. \ref{fig:logk} and \ref{fig:logDPD}.  When we look at plots in Fig. \ref{functionscompare}, we can observe that $\log_q$ is better than other functions from $\Lambda$ (\cite{BrondivergencesBas,Vajda86}).  Thus, we test the role of $\Lambda$ on p.d. function $f$.


\subsection{$q$-Fisher information}
Let us remind the definition of Fisher information based on $\log_q$ (\cite{CanKor18}). 
\begin{align}\label{qFisher}
	&_q \mathbb{E}
	\left\{{\partial \log f(X;\boldsymbol{\theta}) \over \partial \boldsymbol{\theta}} \cdot \left[ {\partial \log f(X;\boldsymbol{\theta})\over \partial \boldsymbol{\theta}}  \right]^{\intercal}  f^{1-q}(X;\boldsymbol{\theta})   \right\} \nonumber
	\\[0,3cm]
	&= 
	\int_{0}^\infty {\partial \log f(x;\boldsymbol{\theta})\over \partial \boldsymbol{\theta}}   \cdot 
	\left[{\partial \log f(x;\boldsymbol{\theta})\over \partial \boldsymbol{\theta}} \right]^{\intercal} 
	f^{2-q}(x;\boldsymbol{\theta})\,  \ud x.
\end{align}
Let us rewrite the Equation \eqref{qFisher} for the parameters $\alpha$ and $\beta$ as follows
%
%
\begin{align*}
	_q F=n\begin{bmatrix} 
		_q E_{\alpha \alpha} & _q E_{\alpha \beta}  \\
		_q E_{\beta \alpha} & _q E_{\beta \beta}     
	\end{bmatrix}.
\end{align*}
The elements of Fisher Information (FI) matrix based on $\log_q$ ($_qF$) can be written as the following form. 
\begin{itemize}
	\item 
	By taking $n=1$ in \eqref{1der}, we get
	\begin{align*}
		_q E_{\alpha \alpha}
		&= 
		_q\hspace{-0,1cm}\mathbb{E}
		\left\{\left[ {\partial \log f(X;\boldsymbol{\theta})\over \partial {\alpha}}  \right]^{2}  f^{1-q}(X;\boldsymbol{\theta})   \right\}
		\\[0,2cm]
		&=
		\bigg[{1\over\alpha}-\log(\beta)\bigg]^2 \mathbb{E}\big[f^{1-q}(X;\boldsymbol{\theta}) \big]
		\\[0,2cm]
		&
		+
		\sum_{\ell\in\{0,\alpha\}} a_\ell \,
		\mathbb{E}\big[X^\ell \log(X) f^{1-q}(X;\boldsymbol{\theta}) \big]
		+
		\sum_{\ell\in\{0,\alpha,2\alpha\}} \widetilde{a}_\ell \,
		\mathbb{E}\big[X^\ell \log^2(X) f^{1-q}(X;\boldsymbol{\theta}) \big],
		%
		%
		%
	\end{align*}
	where 
	$a_0=2\big[{1\over\alpha}-\log(\beta)\big]$, 
	$a_\alpha=-
	{2\over\beta^{\alpha}}\, \big[{1\over\alpha}-\log(\beta)\big]$, 
	$\widetilde{a}_0=1$, 
	$\widetilde{a}_\alpha=-{2\over\beta^{\alpha}}$ 
	and 
	$\widetilde{a}_{2\alpha}={1\over\beta^{2\alpha}}$. 
	By using Items 7), 10) and 11) of Subsection \ref{Properties}, we have
	\begin{align*}
		_q E_{\alpha \alpha}
		&=
		\bigg[{1\over\alpha}-\log(\beta)\bigg]^2 \bigg(\dfrac{\alpha}{\beta}\bigg)^{1-q}
		\dfrac{1}{q^{2-q+(q-1)/\alpha}}\,
		\Gamma\bigg(2-q+{q-1\over\alpha}\bigg)
		\\[0,2cm]
		&
		+
		\sum_{\ell\in\{0,\alpha\}} a_\ell \,
		{\beta^{\ell+q-1} \alpha^{1-q}\over (2-q)^{1+\zeta/\alpha}} \,
		\Gamma\bigg(1+\dfrac{\zeta}{\alpha}\bigg)
		\Bigg[\log\bigg({\beta\over \{2-q\}^{1/\alpha}}\bigg)
		+
		{1\over\alpha}\, 
		\Psi^{(0)}\bigg(1+\dfrac{\zeta}{\alpha}\bigg)
		\Bigg]
		\\[0,2cm]
		&
		+
		\sum_{\ell\in\{0,\alpha,2\alpha\}} \widetilde{a}_\ell \,
		{\beta^{\ell+q-1} \alpha^{1-q}\over (2-q)^{1+\zeta/\alpha}} \,
		\Gamma\bigg(1+\dfrac{\zeta}{\alpha}\bigg)
		\Bigg\{
		\log^2\bigg({\beta\over \{2-q\}^{1/\alpha}}\bigg)  
		+
		\displaystyle
		\\[0,2cm]	
		&
		+
		{1\over\alpha^2}\, \bigg[
		\Psi^{(0)}\bigg(1+\dfrac{\zeta}{\alpha}\bigg)^2
		+
		\Psi^{(1)}\bigg(1+\dfrac{\zeta}{\alpha}\bigg) \bigg]
		+
		{2\over\alpha}\, \log\bigg({\beta\over \{2-q\}^{1/\alpha}}\bigg) 
		\Psi^{(0)}\bigg(1+\dfrac{\zeta}{\alpha}\bigg)
		\Bigg\},
	\end{align*}
	where $\zeta=\ell+(1-q)(\alpha-1)$.
	
	\item
	Letting $n=1$ in \eqref{2der}, we get
	\begin{align*}
		_q E_{\beta \beta}
		&= 
		_q\hspace{-0,1cm}\mathbb{E}
		\left\{\left[ {\partial \log f(X;\boldsymbol{\theta})\over \partial {\beta}}  \right]^{2}  f^{1-q}(X;\boldsymbol{\theta})   \right\}
		=
		\sum_{\ell\in\{0,\alpha,2\alpha\}} b_\ell \,
		\mathbb{E}\big[X^\ell f^{1-q}(X;\boldsymbol{\theta}) \big],
	\end{align*}
	where $b_0=\big({\alpha\over\beta}\big)^2 $, 
	$b_\alpha=-{2\alpha^2\over\beta^{\alpha+2}}$ 
	and $b_{2\alpha}={\alpha^2\over\beta^{2(\alpha+1)}}$.
	By using Item 7) of Subsection \ref{Properties}, we reach
	\begin{align*}
		_q E_{\beta \beta}
		&
		=
		\sum_{\ell\in\{0,\alpha,2\alpha\}} b_\ell \,
		\dfrac{\alpha^{1-q} \beta^{\ell+q-1}}{(2-q)^{r+(\ell+q-1)/\alpha}}\, 
		\beta^\ell  \Gamma\bigg(r+{\ell+q-1\over\alpha}\bigg).
	\end{align*}
	
	\item
	Multiplying \eqref{1der} and \eqref{2der} with $n=1$, we obtain
	\begin{align*}
		_q E_{\alpha \beta}
		&= 
		_q\hspace{-0,1cm}\mathbb{E}
		\left[{\partial \log f(X;\boldsymbol{\theta})\over \partial {\alpha}}  \,
		{\partial \log f(X;\boldsymbol{\theta})\over \partial {\beta}}\, f^{1-q}(X;\boldsymbol{\theta})   \right]
		\\[0,2cm]
		&=
		\sum_{\ell\in\{0,\alpha\}} c_\ell \,
		\mathbb{E}\big[X^\ell f^{1-q}(X;\boldsymbol{\theta}) \big]
		+
		\sum_{\ell\in\{0,\alpha,2\alpha\}} d_\ell \,
		\mathbb{E}\big[X^\ell \log(X) f^{1-q}(X;\boldsymbol{\theta}) \big],
		%
		%
		%
	\end{align*}
	where 
	$c_0=-{\alpha\over\beta}\,
	\big[{1\over\alpha}-\log(\beta)\big]$, 
	$c_\alpha={\alpha\over\beta^{\alpha+1}}\, \big[{1\over\alpha}-\log(\beta)\big]$, 
	$d_0=-{\alpha\over \beta}$, 
	$d_\alpha={\alpha\over\beta}\big(1+{1\over\beta^\alpha}\big)$ 
	and 
	$d_{2\alpha}=-{\alpha\over\beta^{2\alpha+1}}$.
	By using Items 7) and 10) of Subsection \ref{Properties}, we have
	\begin{align*}
		_q E_{\alpha \beta}
		&=
		\sum_{\ell\in\{0,\alpha\}} c_\ell \,
		\dfrac{\alpha^{1-q} \beta^{\ell+q-1}}{(2-q)^{r+(\ell+q-1)/\alpha}}\, 
		\beta^\ell  \Gamma\bigg(r+{\ell+q-1\over\alpha}\bigg)
		\\[0,2cm]
		&
		+
		\sum_{\ell\in\{0,\alpha,2\alpha\}} d_\ell \,
		{\beta^{\ell+q-1} \alpha^{1-q}\over (2-q)^{1+\zeta/\alpha}} \,
		\Gamma\bigg(1+\dfrac{\zeta}{\alpha}\bigg)
		\Bigg[\log\bigg({\beta\over \{2-q\}^{1/\alpha}}\bigg)
		+
		{1\over\alpha}\, 
		\Psi^{(0)}\bigg(1+\dfrac{\zeta}{\alpha}\bigg)
		\Bigg],
	\end{align*}
	where $\zeta=\ell+(1-q)(\alpha-1)$.
\end{itemize}
Note that arguments in $\log$ and $\Gamma$ functions should be positive.


\section{Robustness}\label{section4rob}
Influence function is a measure which is used to evaluate the robustness of
M-estimators (\cite{Hampeletal86}). Let $\boldsymbol{\theta}$  be a vector for parameters $\alpha$ and
$\beta$, i.e. $\boldsymbol{\theta}=\left(\alpha,\beta\right)$, and  $f(x;\boldsymbol{\theta})$ represents p. d. function of Weibull
distribution. In this case, the objective function is defined as
\begin{align*}
	\rho(x;\boldsymbol{\theta})=\Lambda\big[f(x;\boldsymbol{\theta})\big],
	\quad 
	\psi_{\xi}\left(x\right)=
	\frac{\partial \rho(x;\boldsymbol{\theta}) }{\partial {\xi}}, 
	\quad \xi\in\{\alpha,\beta\},
\end{align*}
where
$\Psi=(\psi_{\alpha},\psi_{\beta})$ is a vector of score functions $\psi_{\alpha}$ and $\psi_{\beta}$ and $\Lambda(z)=\log_q(z)$.

Robustness trusts on the finiteness of score functions from $ {\partial \rho(x;\boldsymbol{\theta})  \over \partial \boldsymbol{\theta}}$ when $x$ goes to infinity, i.e. $\lim_{x\to \infty} {\partial \rho(x;\boldsymbol{\theta})  \over \partial \boldsymbol{\theta}}$.  By using same way from robustness to outliers, the finiteness of score functions should be tested for the case in which we have $\lim_{x\to 0} {\partial \rho(x;\boldsymbol{\theta})  \over \partial \boldsymbol{\theta}}$. Thus, we will imply the robustness to inlier observations in a data set. 

\subsection{Examination of score functions of parameters in objective function $\log_q(f)$}
In order to get MLqE of parameters $\alpha$ and $\beta$, $\log_q$ is applied to $f(x;\boldsymbol{\theta})$. Thus, we have objective function $\rho(x;\boldsymbol{\theta})=\log_q\big[f(x;\boldsymbol{\theta})\big]$. The score functions derived by $\rho(x;\boldsymbol{\theta})$ for the corresponding parameters are given in the following order:

\begin{align}\label{psialphaq}
	\psi_{\alpha}\left(x\right)
	&=
	{\partial \rho(x;\boldsymbol{\theta}) \over \partial \alpha }
	=
	f^{1-q}(x;\boldsymbol{\theta}) \, 
	\Bigg\{\dfrac{1}{\alpha}+\Bigg[1-\bigg(\dfrac{x}{\beta}\bigg)^\alpha\Bigg]\log\bigg({x\over\beta}\bigg)\Bigg\},
	\\[0,3cm]
	\psi_{\beta}(x)
	&=
	{\partial \rho(x;\boldsymbol{\theta}) \over \partial \beta }
	=
	\dfrac{\alpha}{\beta}\,
	f^{1-q}(x;\boldsymbol{\theta}) \, 
	\Bigg[\bigg(\dfrac{x}{\beta}\bigg)^\alpha-1\Bigg]. \label{psibetaq}
\end{align}
A simple calculation shows that
{\scalefont{0,9}
	\begin{align*}
		\lim_{x \rightarrow 0} \psi_{\alpha}(x)=
		\begin{cases}
			+\infty & \text{for} \ \alpha\geq1 \ \text{and} \ q>1,
			\\
			0 & \text{for} \ \alpha \geq 1 \ \text{and} \ 0<q<1,
			\\
			0 & \text{for} \ 0<\alpha<1 \ \text{and} \ q>1,
			\\
			+\infty & \text{for} \ 0<\alpha<1 \ \text{and} \ 0<q<1,
		\end{cases}
		\quad 
		\lim_{x \rightarrow 0} \psi_{\beta}(x)=
		\begin{cases}
			-\infty & \text{for} \ \alpha\geq1 \ \text{and} \ q>1,
			\\
			0 & \text{for} \ \alpha \geq 1 \ \text{and} \ 0<q<1,
			\\
			0 & \text{for} \ 0<\alpha<1 \ \text{and} \ q>1,
			\\
			-\infty & \text{for} \ 0<\alpha<1 \ \text{and} \ 0<q<1,
		\end{cases}
	\end{align*}
	\begin{align*}
		\lim_{x \rightarrow +\infty} \psi_{\alpha}(x)=
		\begin{cases}
			-\infty & \text{for} \ \alpha\geq1 \ \text{and} \ q>1,
			\\
			0 & \text{for} \ \alpha \geq 1 \ \text{and} \ 0<q<1,
			\\
			-\infty & \text{for} \ 0<\alpha<1 \ \text{and} \ q>1,
			\\
			0 & \text{for} \ 0<\alpha<1 \ \text{and} \ 0<q<1,
		\end{cases}
		\quad 
		\lim_{x \rightarrow +\infty} \psi_{\beta}(x)=
		\begin{cases}
			+\infty & \text{for} \ \alpha\geq1 \ \text{and} \ q>1,
			\\
			0 & \text{for} \ \alpha\geq1 \ \text{and} \ 0<q<1,
			\\
			+\infty & \text{for} \ 0<\alpha<1 \ \text{and} \ q>1,
			\\
			0 & \text{for} \ 0<\alpha<1 \ \text{and} \ 0<q<1,
		\end{cases}
	\end{align*}
}


Briefly, the limits above can be written using the following tables:
\begin{table}[htb!]
	\centering 
	\caption{Limit values of vector $\Psi(x)=(\psi_{\alpha}(x),\psi_{\beta}(x))$} \vspace*{0,1cm}
	\begin{tabular}{c|cc} \hline
		$x\to 0$	& $q >1$ & 		$0<q<1$ \\ \hline 
		$	\alpha\geq1$		& 	$(+\infty,-\infty)$ & $(0,0)$ \\
		$	0<\alpha<1$	& $(0,0)$ & $(+\infty,-\infty)$
	\end{tabular}
	\label{limitvalueszero} 
\end{table}

\begin{table}[htb!]
	\centering 
	\caption{Limit values of a vector $\Psi(x)=(\psi_{\alpha}(x),\psi_{\beta}(x))$} \vspace*{0,1cm}
	\begin{tabular}{c|cc} \hline
		$x\to +\infty$	& $q >1$ & 		$0<q<1$ \\  \hline
		$	\alpha\geq1$		& 	$(-\infty,\infty)$ & $(0,0)$ \\
		$	0<\alpha<1$	& $(-\infty,\infty)$ &$(0,0)$
	\end{tabular}
	\label{limitvaluesinfty}
\end{table}

$\Psi=(\psi_{\alpha},\psi_{\beta}) $ is a vector of score functions $\psi_{\alpha}$ and $\psi_{\beta}$. Influence function is proportional to score function. If score functions $\psi_{\alpha}$ and $\psi_{\beta}$ are finite, then we have finite influence function, which shows that the estimators $\hat{\alpha}$ and $\hat{\beta}$ from MLqE and MLE will be robust (\cite{Hampeletal86}). When  MLqE is used, the influence function of the estimators $\hat{\alpha}$ and $\hat{\beta}$ is finite if $q \in (0,1)$ (see Table \ref{limitvaluesinfty}). However, one can show that  there are cases for $\log(f)$ in which the score functions of parameters of $f$ are finite or infinite according to property of $f$ (\cite{Canetal19}). In addition, for an arbitrary $f$, we cannot get  FI and the definition of influence function also includes the inverse of Fisher. There can be cases in which an element of FI matrix is not defined for some values of parameters such as $\Gamma(r),r>0$, $a/b, b \neq 0$, etc and the inverse of FI matrix cannot exist. However, we can get estimates of parameters for these cases in which FI and its inverse does not exist. Instead of calling robustness and trusting on the tools in robustness (M. Thompson,  e-mail communication, March 2, 2016), the main approach should be based on the modelling competence of an arbitrary function.  Further, Table \ref{limitvalueszero} includes the infinity cases.  The modeling is carried out for the finite sample size. The robustness is for the case in which we take limit at the values which are zero and infinity. However, it is expected that the robustness should be supported by simulation. There is an open question: Even if the score functions are infinite for limit values at zero and infinity, can we perform a modeling capability on a finite sample size? Yes and  we can find a counter example against the robustness theory from simulation results (see Case 4 in Table \ref{caseswut}).

\section{Optimization and numerical experiments}\label{optimizationnumerical}

\subsection{Optimization via genetic algorithm for $\Lambda(f)$} 
The genetic algorithm (GA) can be applied to solve a variety of optimization problems that are unsuitable for standard optimization algorithms which include problems where the objective function is undifferentiated in Radon Nikodym derivative, highly nonlinear, discontinuous, (absolutely) continuous, non-smooth, even stochastic and random subsets from the real line. GA method is preferred to ensure that such objective functions converge to a global point. When $\log_q$ is used as the objective function, GA method has been used by (\cite{CanKor18}). The  codes  used to get estimates of MLqE are given by Appendix \ref{comptGA}.

\subsection{Structure of contamination: The distributions and their parameter values used at Monte Carlo simulation}
Contamination makes a disorder in the identically distributed random variables represented by $X_1,X_2,\dots ,X_n$. If these random variables are disrtibuted non-identically, then we express the structure of non-identicality from mixing of two p.d. functions $f_0$ and $f_1$, as given by following form:
\begin{equation}
	f_{\varepsilon}(x;\boldsymbol{\theta};\boldsymbol{\tau}) = (1-\varepsilon)f_0(x;\boldsymbol{\theta}) + \varepsilon f_1(x;\boldsymbol{\tau})
\end{equation}
\noindent is the contaminated distribution. The constant $\varepsilon$ is the contamination rate. $f_0$ is the underlying and $f_1$ is contamination. $f_1$ can be same distribution with $f_0$, but the parameter values of $f_1$ are different from $f_0$, i.e. $f_0=$ Weibull($\alpha_0$,$\beta_0$) and $f_1=$ Weibull ($\alpha_1$,$\beta_1$). We can select $f_1$ distribution with the given values of parameters. For example, $f_1$  = BurrIII($\alpha_1$,$\beta_1$). In the real world, after assuming that a data set is a member of $f_0$ distribution, a contamination to $f_0$ by means of $f_1$ can occur in an empirical distribution. We do not know how much rate $\varepsilon$ of contamination into the data set exist. Further, there are two types of contamination. These are inliers and outliers. A data set can include both of contaminations. The aim is to estimate robustly parameters $\boldsymbol{\theta}_1,\boldsymbol{\theta}_2,\dots,\boldsymbol{\theta}_p$ of $f_0$ under contamination. 

The distributions and parameter values selected to generate different forms of contamination are those in which both inliers and    outliers are at the same time. The plots of functions of the selected values of parameters are given in Figures \ref{wwcase}-\ref{wbcase}.  Thus, it can be observed the structure of underlying and contamination distributions. We provide three blocks for Figures \ref{wwcase}-\ref{wbcase}. Each of plots in the blocked Figures \ref{wwcase}-\ref{wbcase} are given to depict the role of underlying and contamination distributions clearly. They are depicted separately and same length $x$-axis to avoid bad illustration which can occur the different values of parameters for values of p.d. function at $y$-axis. Thus, we observe the contamination structure and so that we can test the performance of MLqE for such contaminations in all cases given by Figures \ref{wwcase}-\ref{wbcase}. Blue and red lines representing p. d. function abbverivated as PDF show the underlying $f_0$ and the contamination $f_1$ into underlying distribution respectively. Simulation in section \ref{simusect} includes the robust estimations of parameters of blue lines. As it is proven theoretically by \ref{convexconcavonemode}, Weibull distribution has one unimodal for $\alpha>1$. Note that the mode of Weibull does not exist and makes asymptotic to $y$-axis for $\alpha \leq 1$. The uniform and BurrIII \cite{Canetal19} distributions have one mode as well. 

For the estimation process, the modality of p.d. function $f(x;\boldsymbol{\theta})$ and the concavity of $\Lambda$ are taken in account to get $\Lambda(f)$ and apply for the estimation. Thus, the cooperation between $f(x;\boldsymbol{\theta})$ and $\Lambda$ for conducting an accurate modeling on a data set can be performed successfully. For example, the smoothness property of objective function $\rho(x;\boldsymbol{\theta})=\Lambda\big[f(x;\boldsymbol{\theta})\big]$ is important to apply for estimation if a data set does not have several jumpings on an interval on the real line, i.e. the smoothness of data set should be provided. The structure of inliers in Monte Carlo simulation will not be strictly existing. In other words, the frequencies of artificial data sets for a narrow interval on the real line are not extremely high degree. One can observe the schema of underlying and contamination distributions in Figures \ref{wwcase}-\ref{wbcase}.

\begin{figure}[htbp]
	\centering
	\subfigure[Case 1]{\label{fig:Case1}\includegraphics[width=0.24\textwidth]{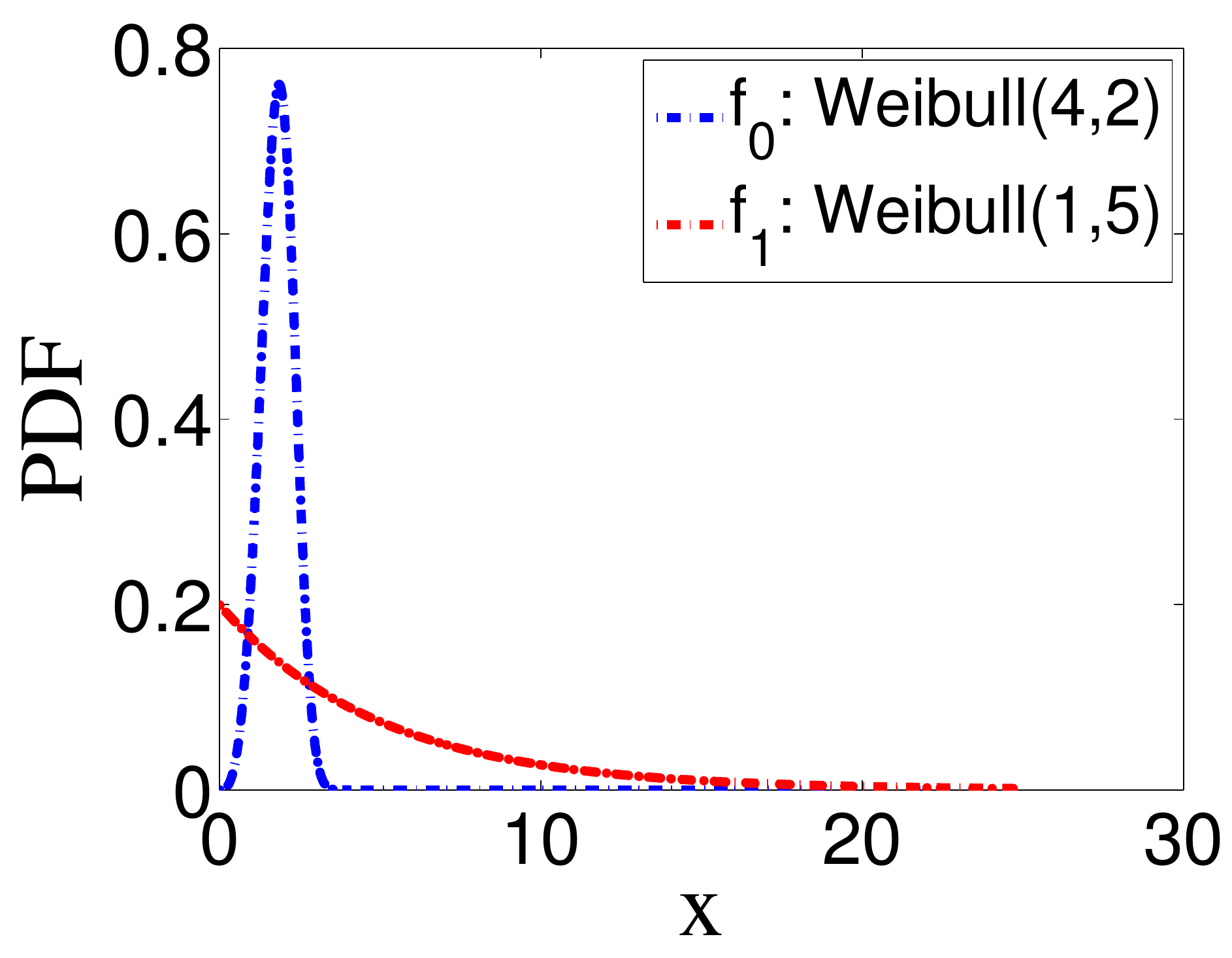}}
	\subfigure[Case 2]{\label{fig:Case2}\includegraphics[width=0.24\textwidth]{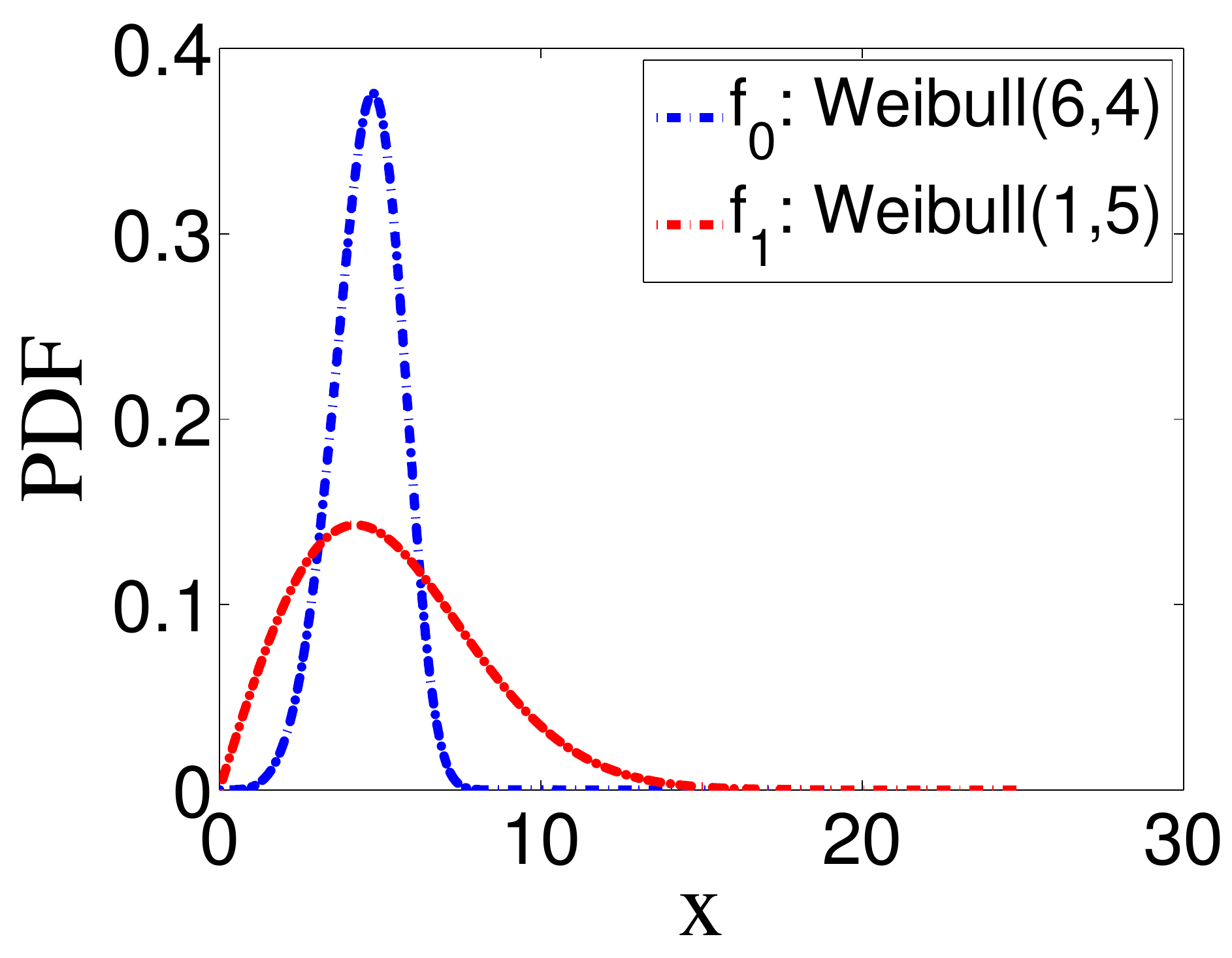}}
	\subfigure[Case 3]{\label{fig:Case3}\includegraphics[width=0.24\textwidth]{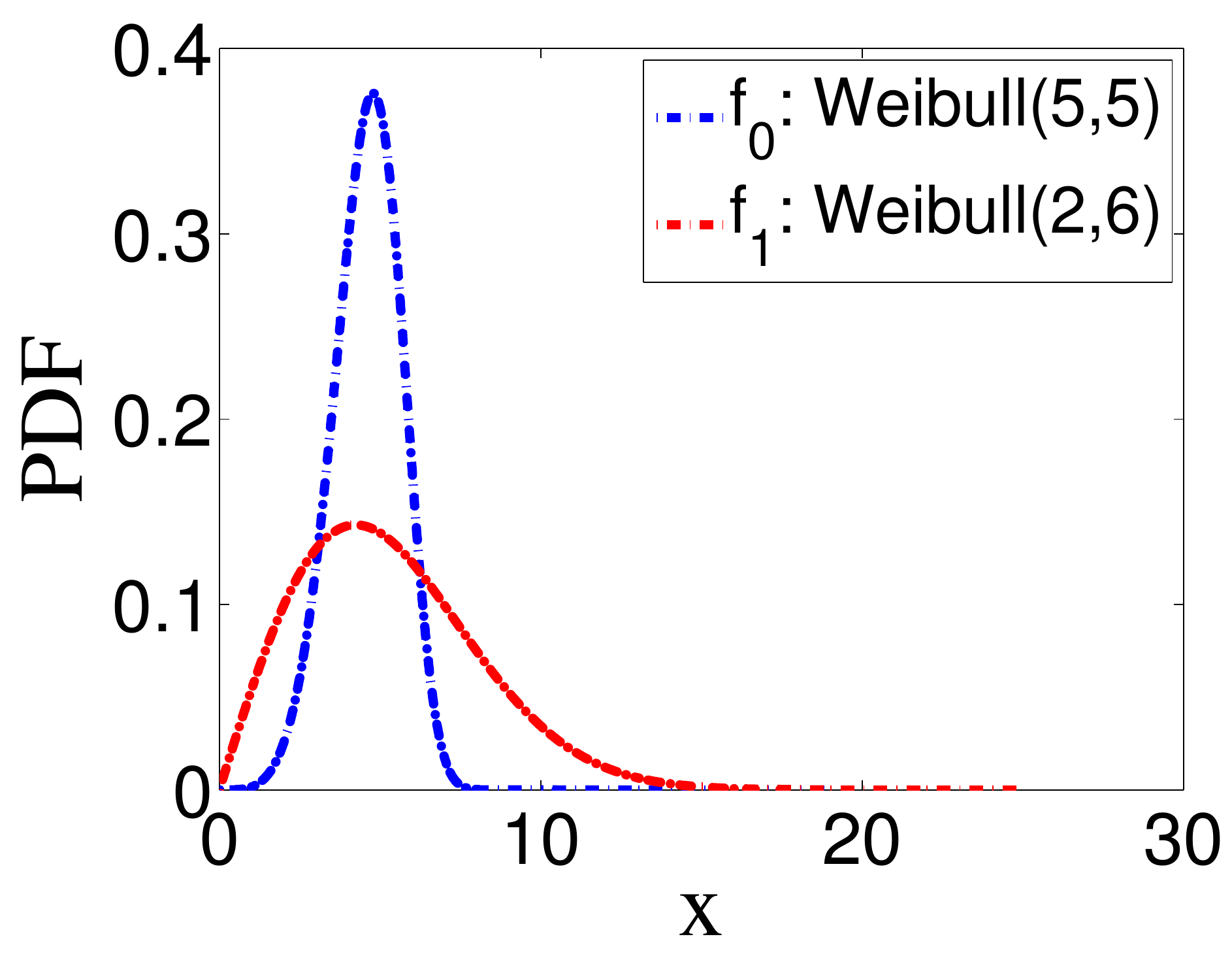}}
	\subfigure[Case 4]{\label{fig:Case4}\includegraphics[width=0.24\textwidth]{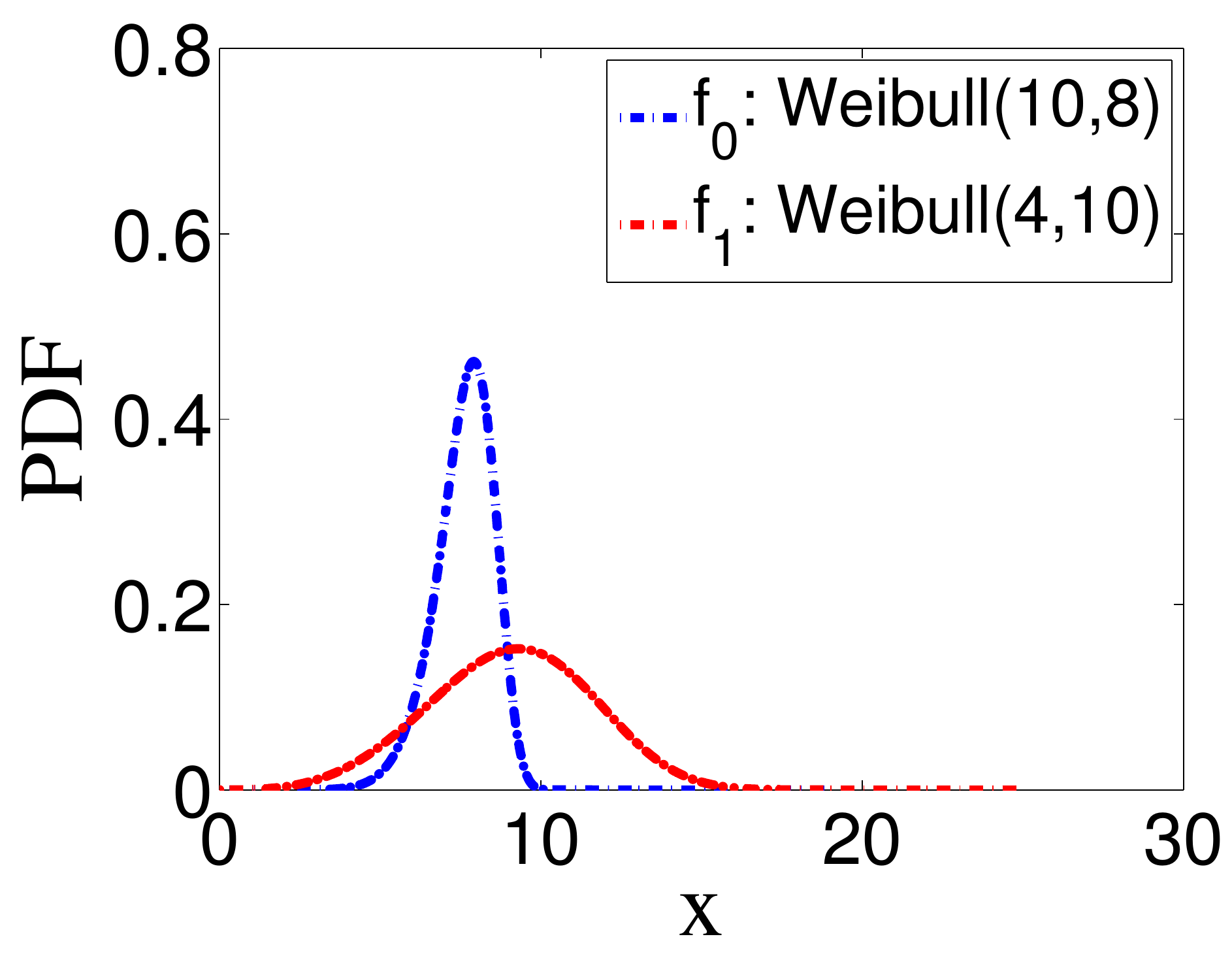}}
	\subfigure[Case 5]{\label{fig:Case5}\includegraphics[width=0.24\textwidth]{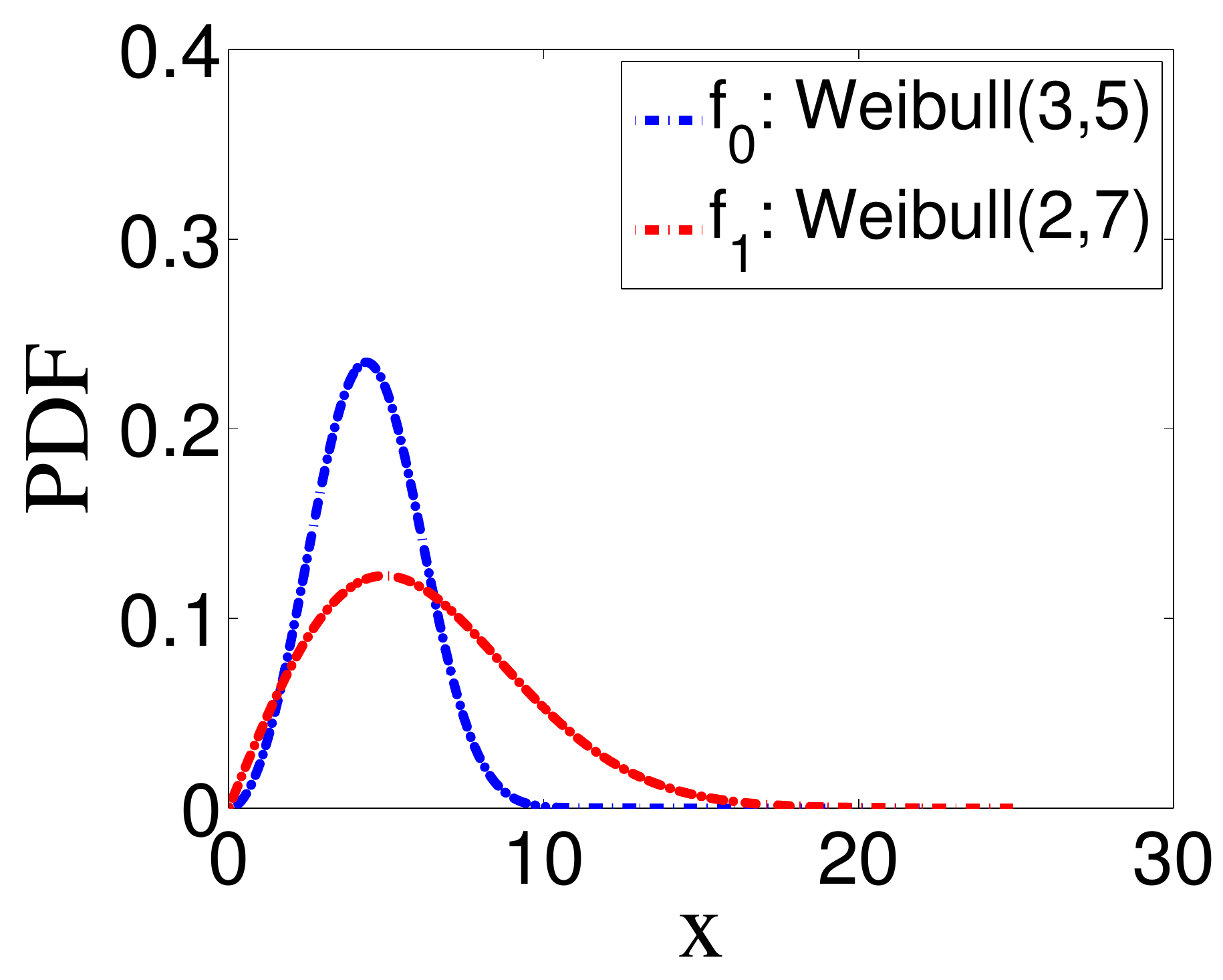}}
	\subfigure[Case 6]{\label{fig:Case6}\includegraphics[width=0.24\textwidth]{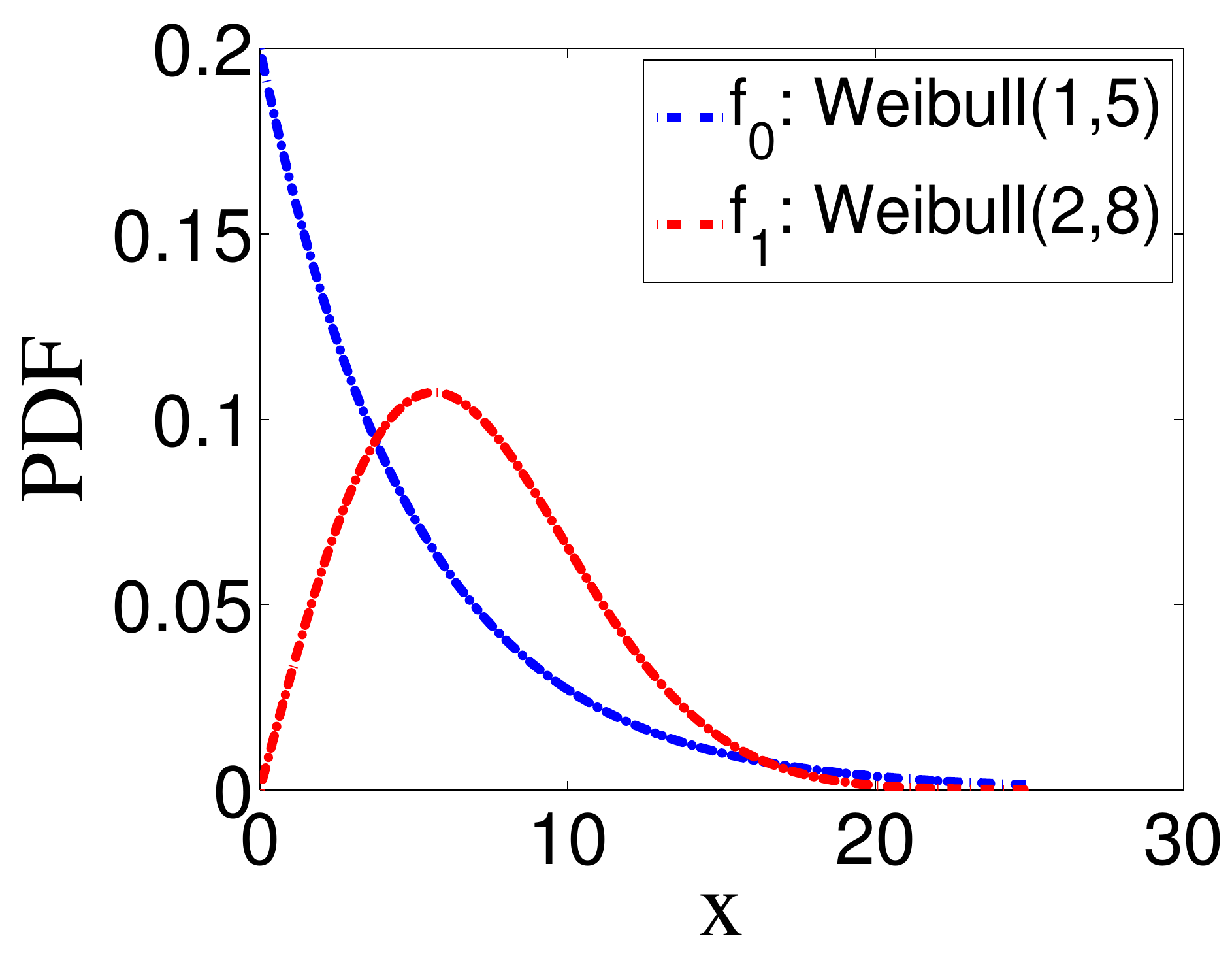}}
	\subfigure[Case 7]{\label{fig:Case7}\includegraphics[width=0.24\textwidth]{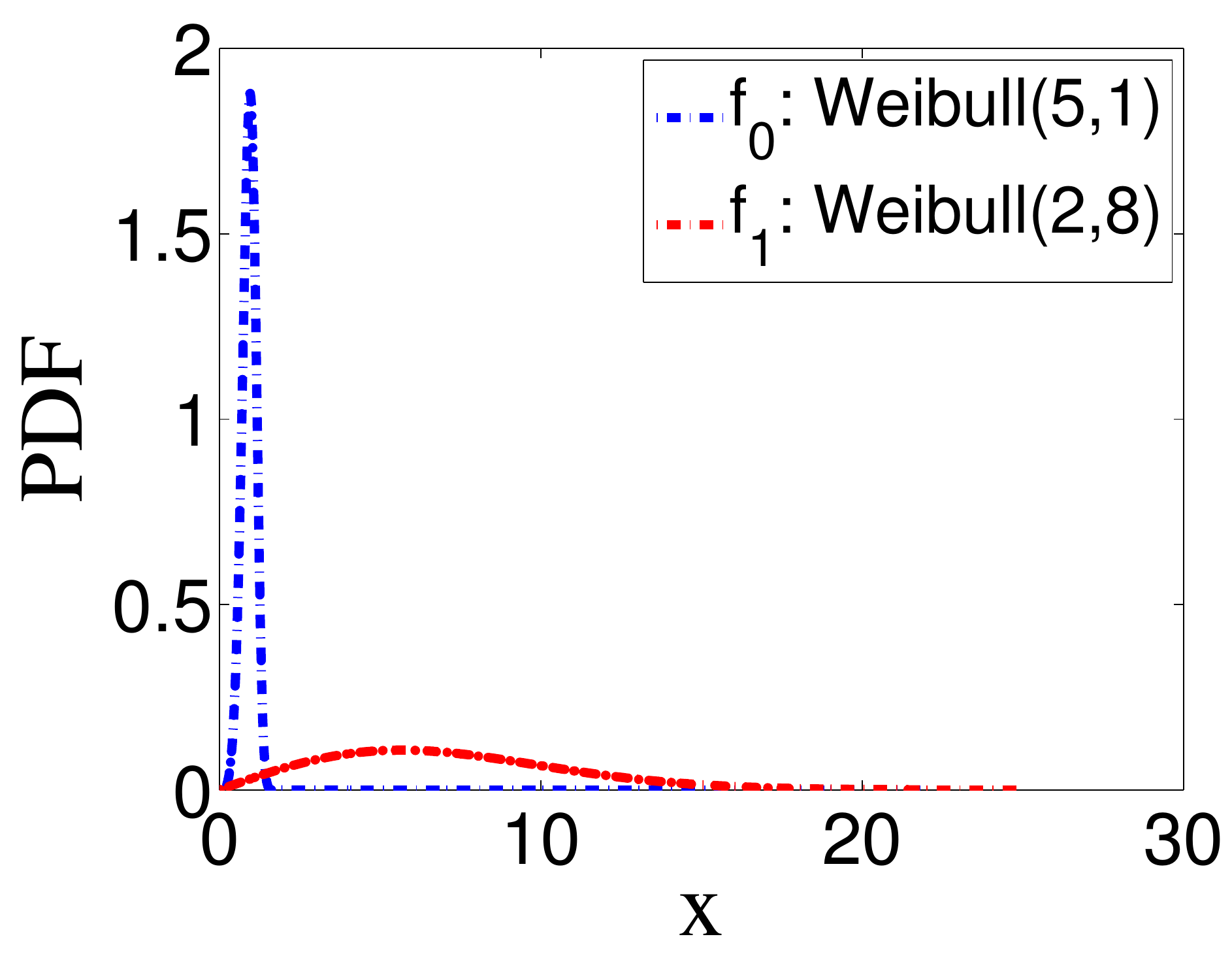}}
	\subfigure[Case 8]{\label{fig:Case8}\includegraphics[width=0.24\textwidth]{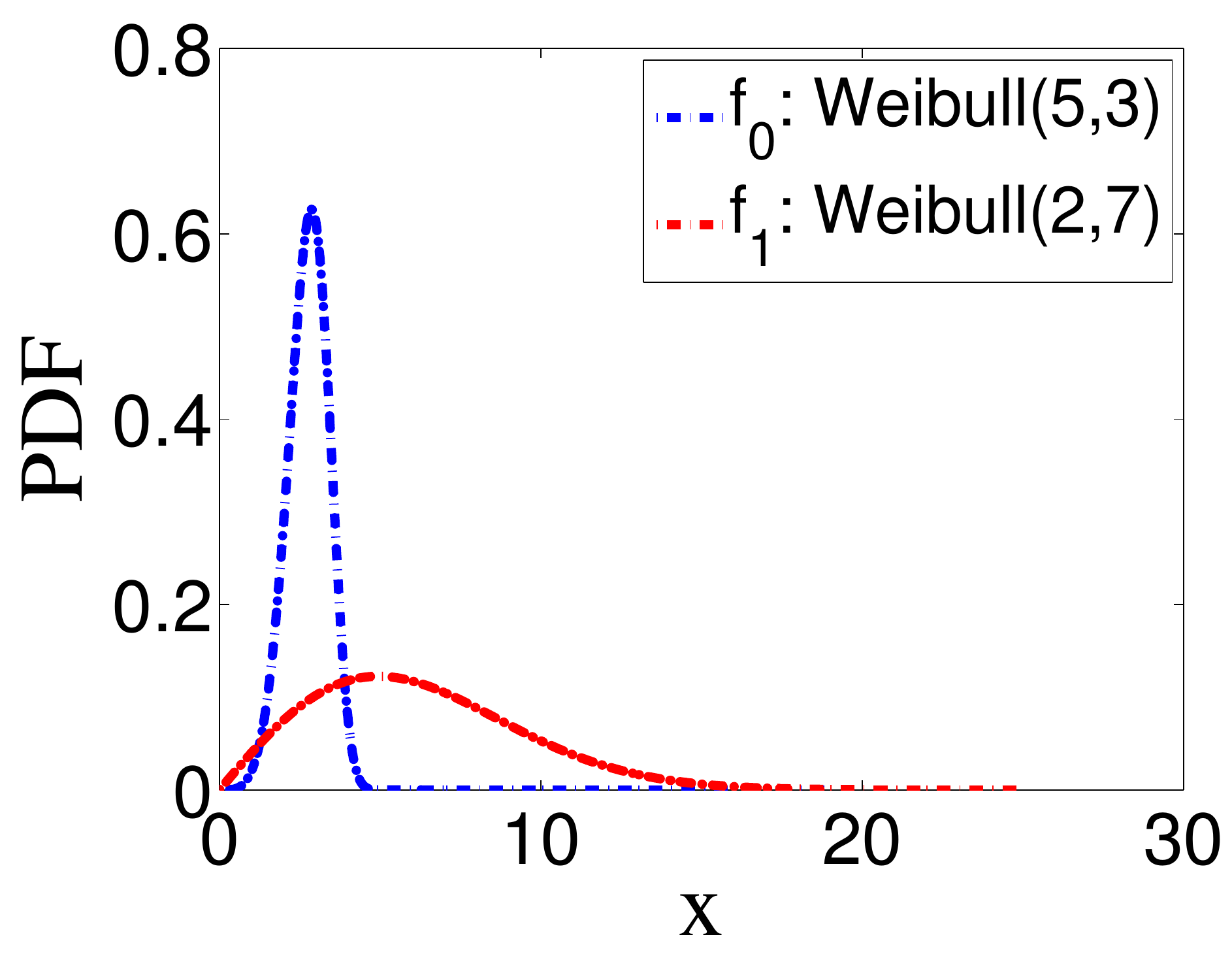}}
	\caption{The underlying and contamination distributions are Weibull (color online)}
	\label{wwcase}
\end{figure}
\begin{figure}[htbp]
	\centering
	\subfigure[Case 1]{\label{fig:Case1u}\includegraphics[width=0.24\textwidth]{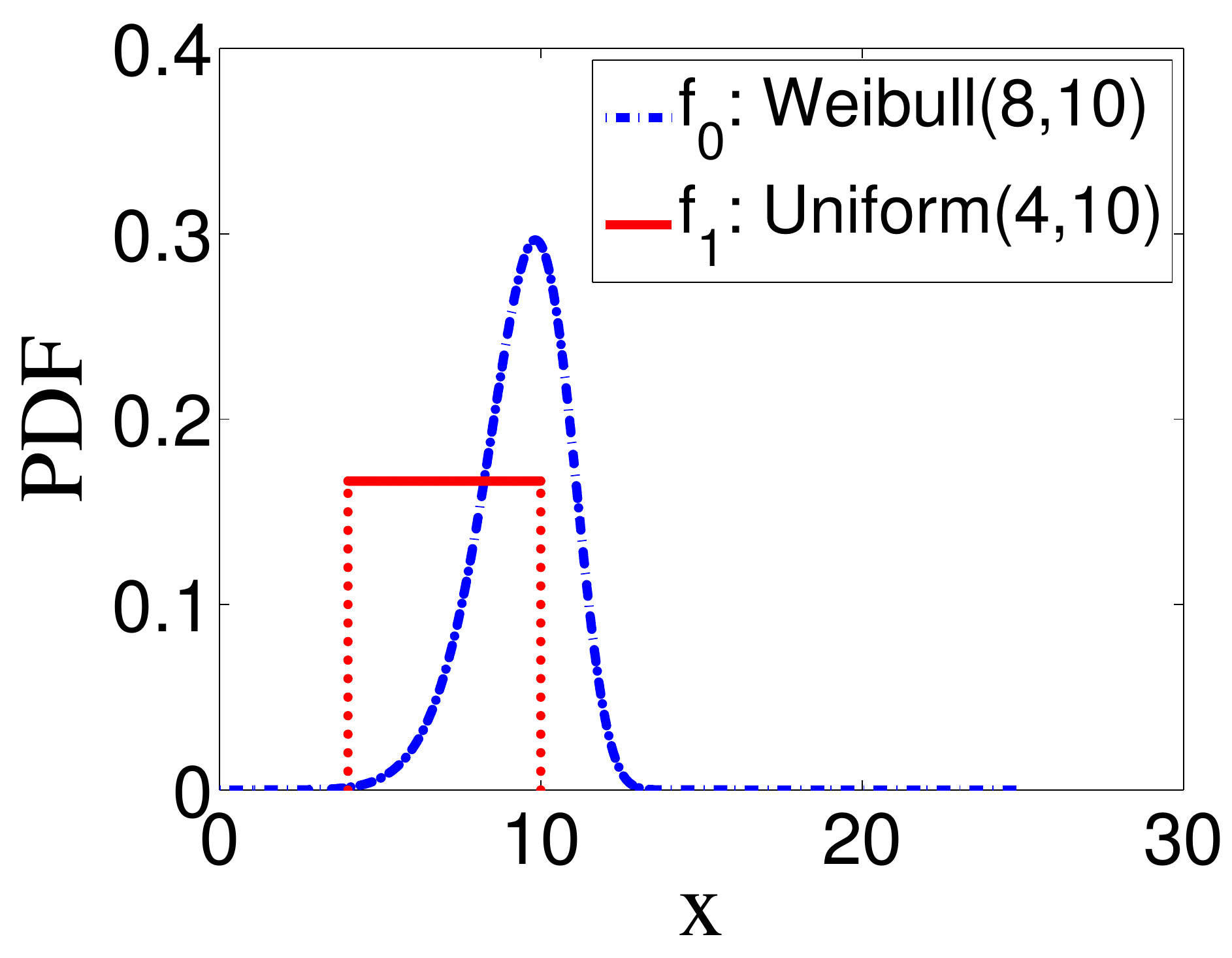}}
	\subfigure[Case 2]{\label{fig:Case2u}\includegraphics[width=0.24\textwidth]{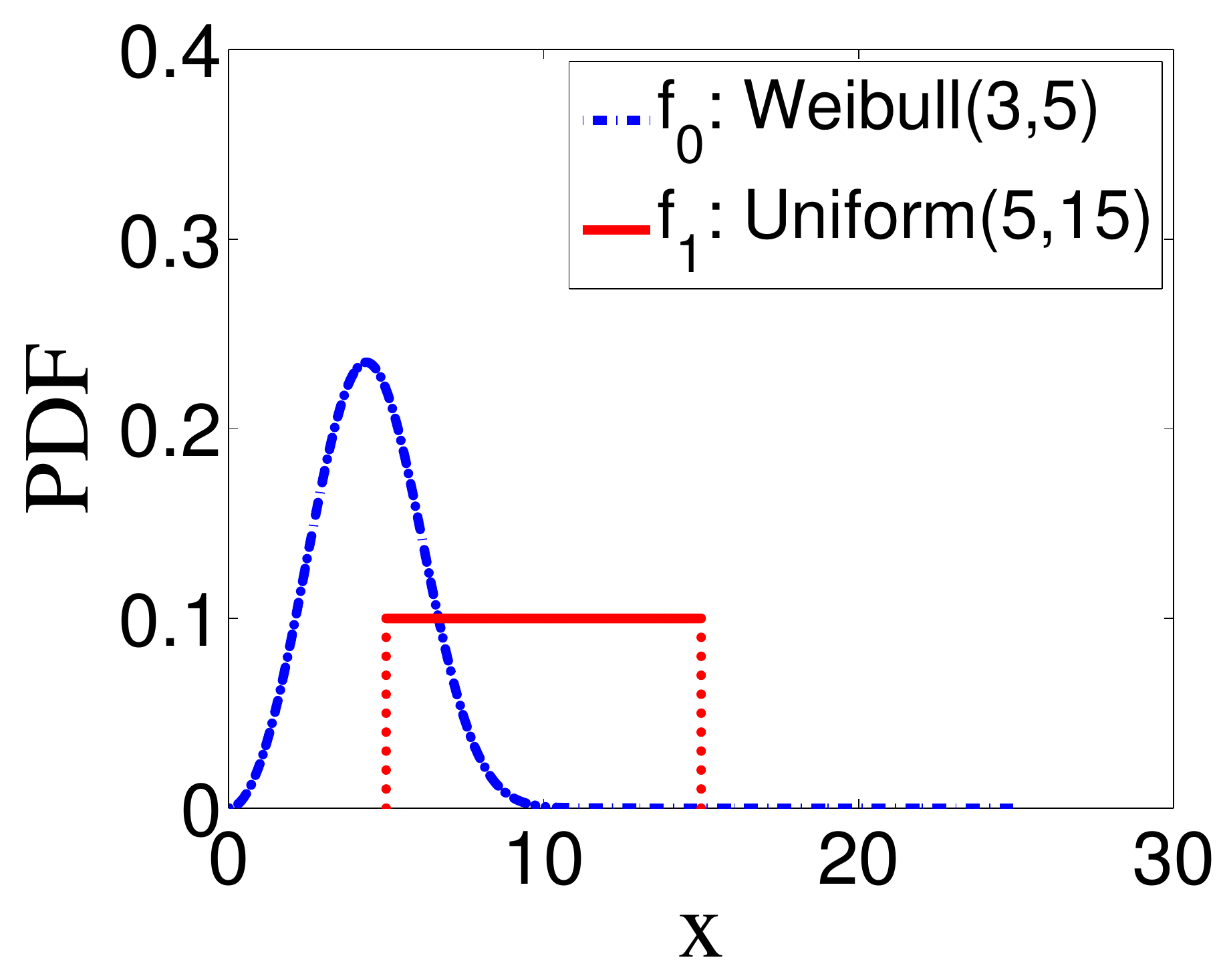}}
	\subfigure[Case 3]{\label{fig:Case3u}\includegraphics[width=0.24\textwidth]{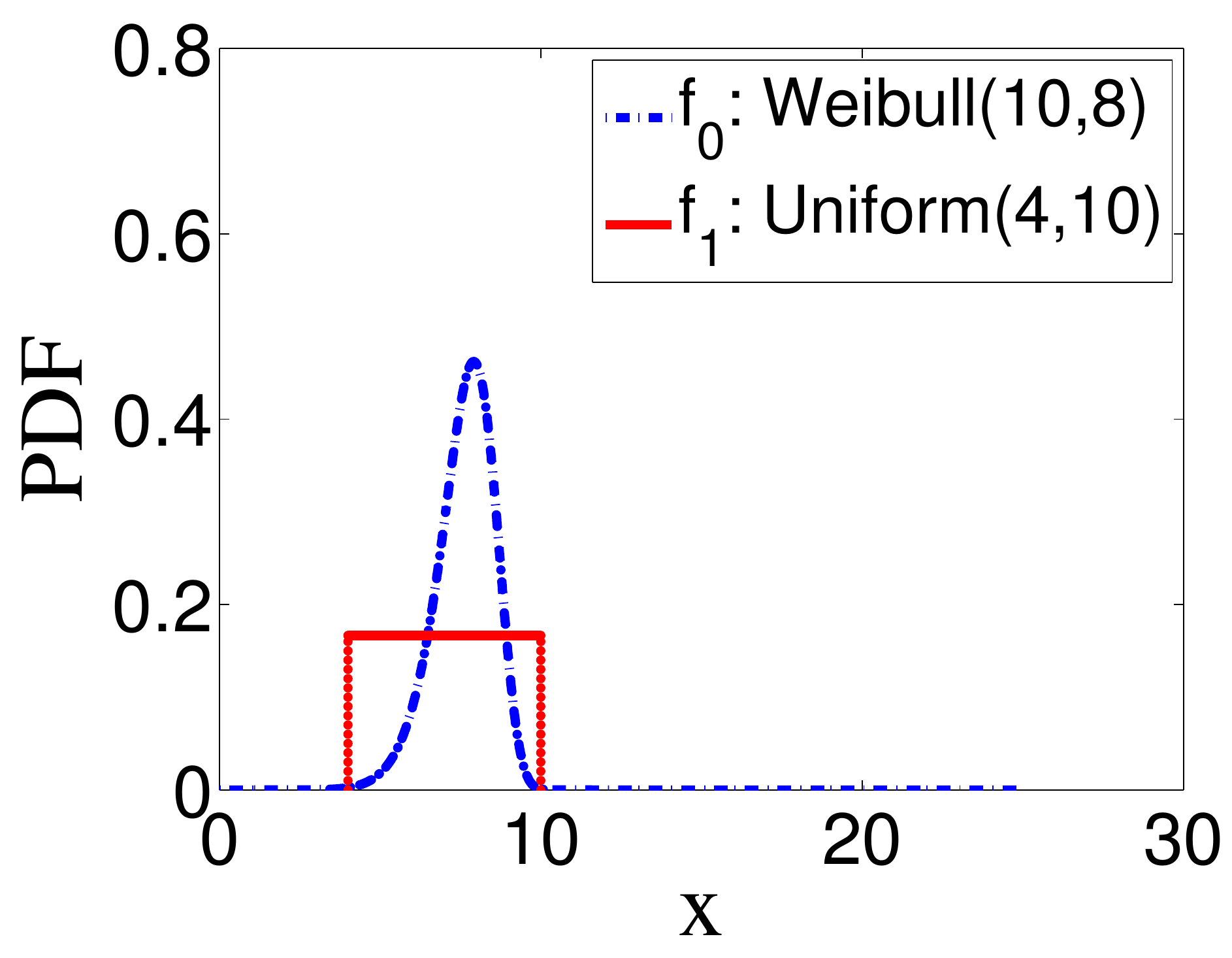}}
	\subfigure[Case 4]{\label{fig:Case4u}\includegraphics[width=0.24\textwidth]{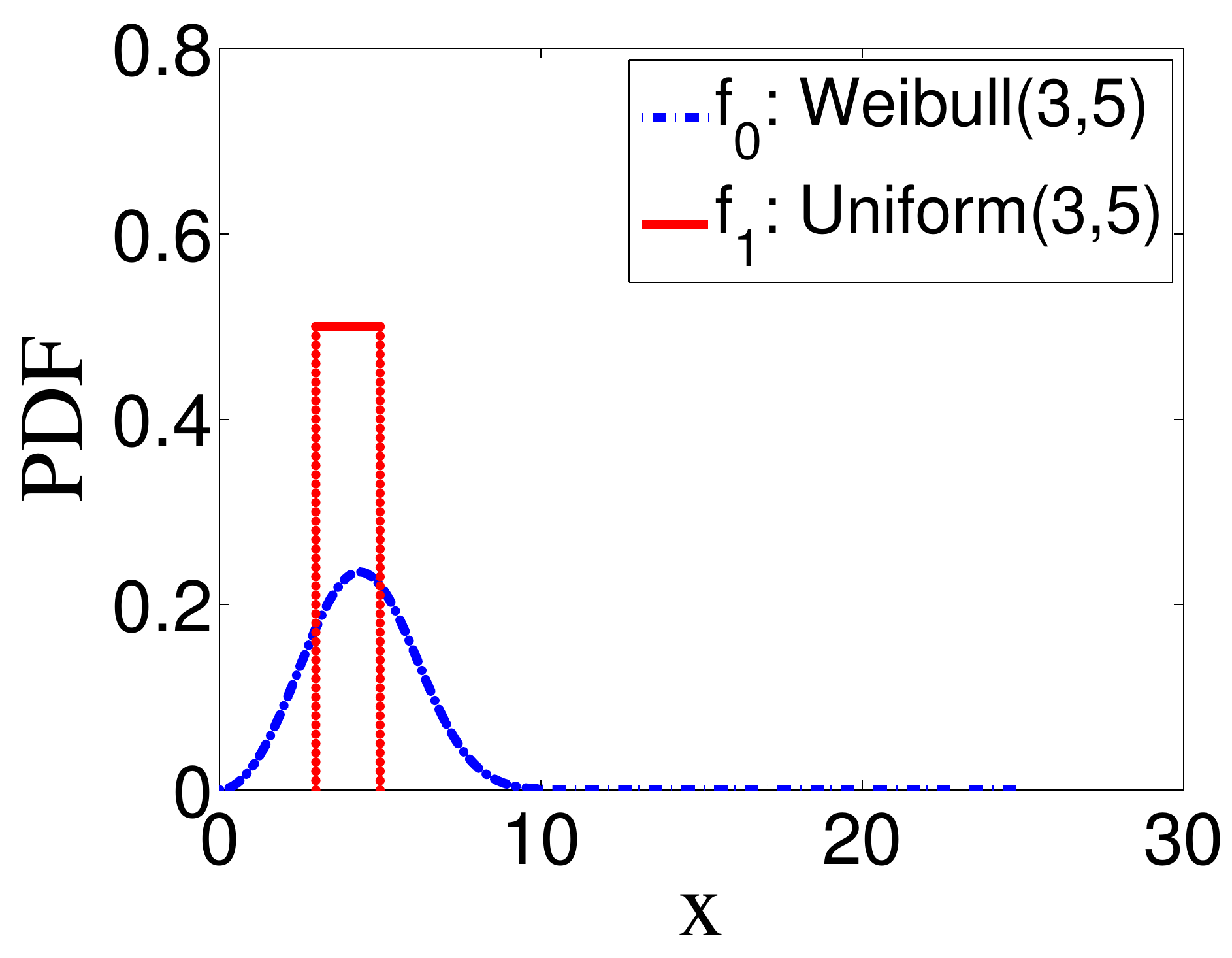}}
	\caption{The underlying is Weibull and contamination is Uniform distributions (color online)}
	\label{wucase}
\end{figure}
\begin{figure}[htbp]
	\centering
	\subfigure[Case 1]{\label{fig:Case1b}\includegraphics[width=0.24\textwidth]{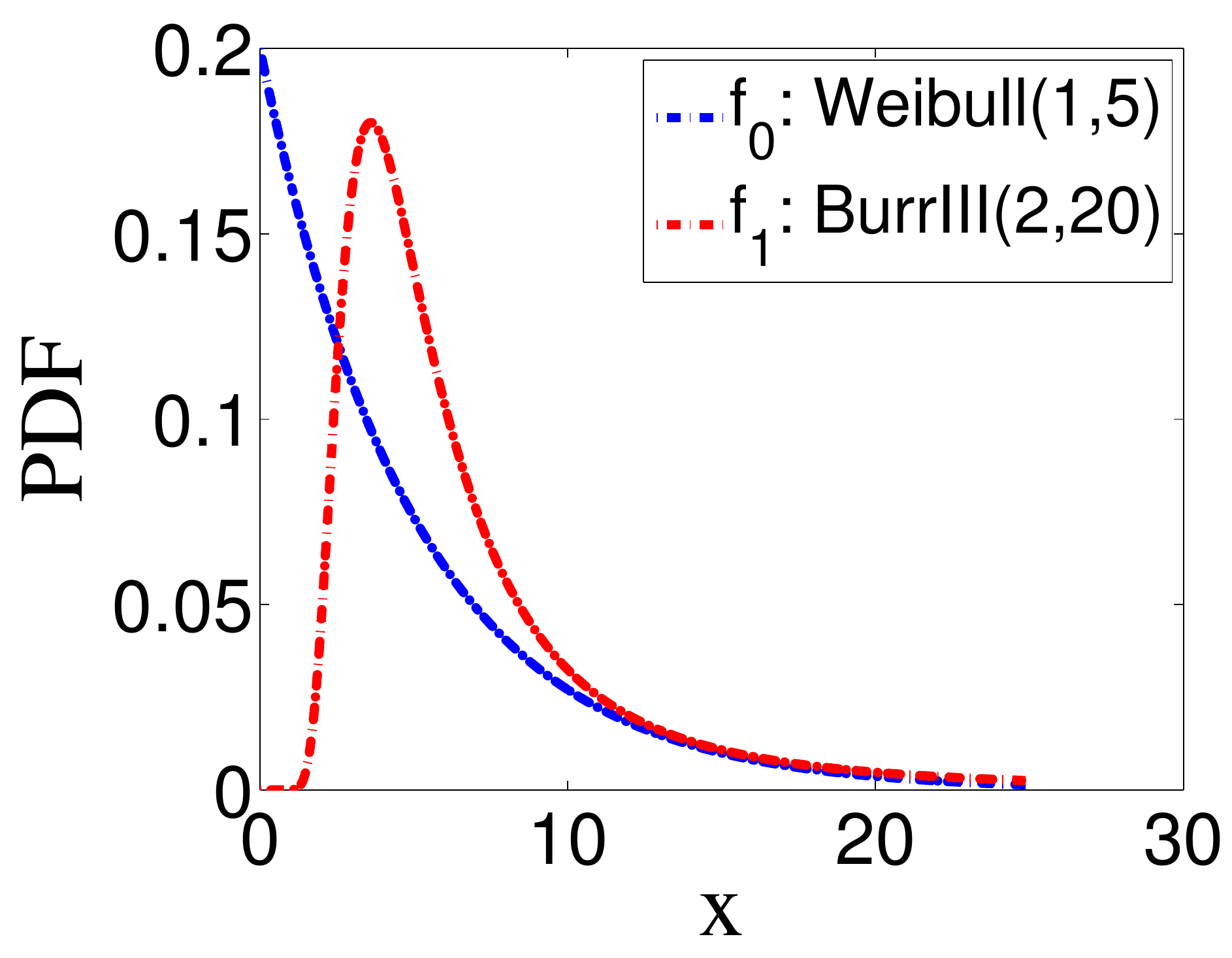}}
	\subfigure[Case 2]{\label{fig:Case2b}\includegraphics[width=0.24\textwidth]{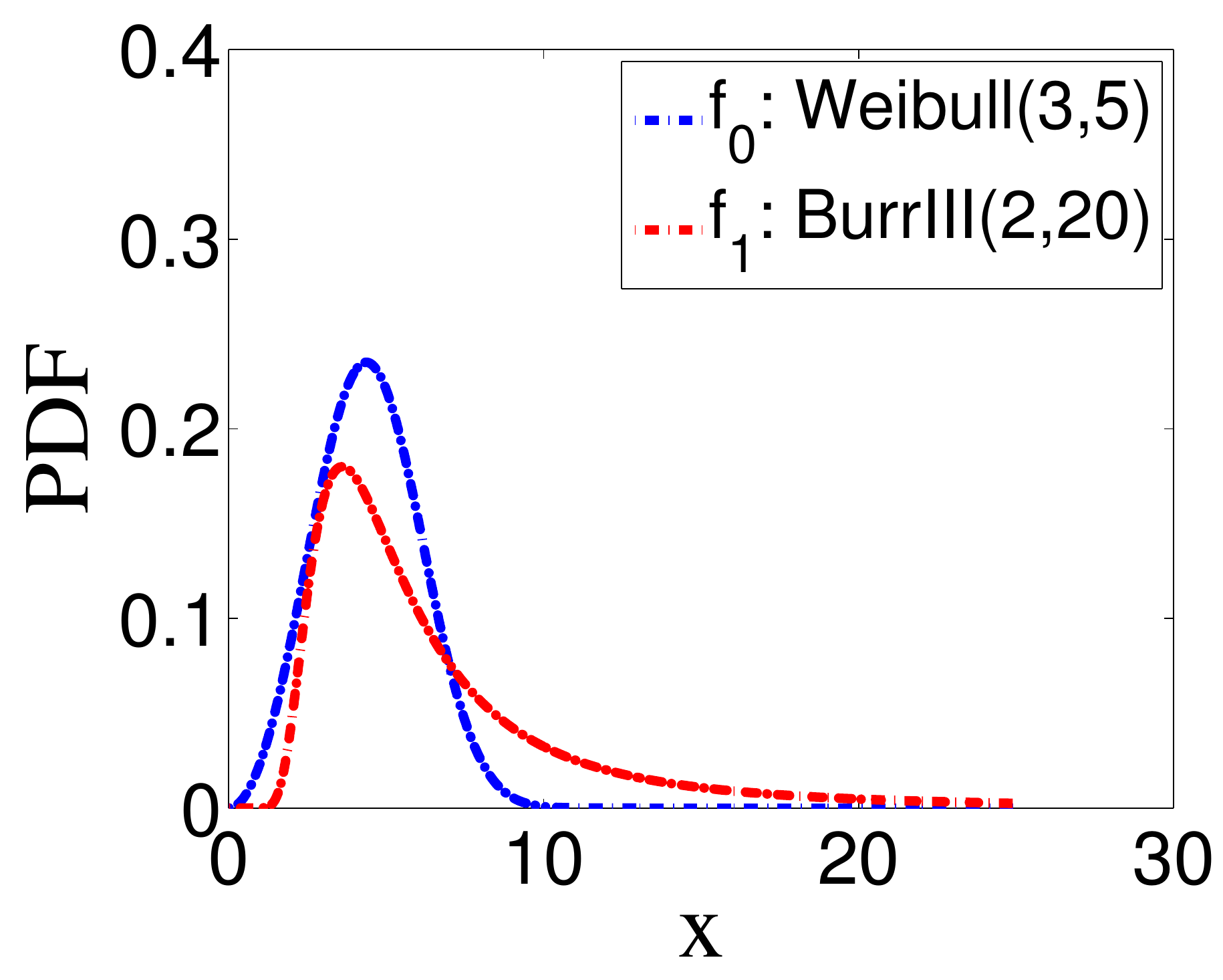}}
	\subfigure[Case 3]{\label{fig:Case3b}\includegraphics[width=0.24\textwidth]{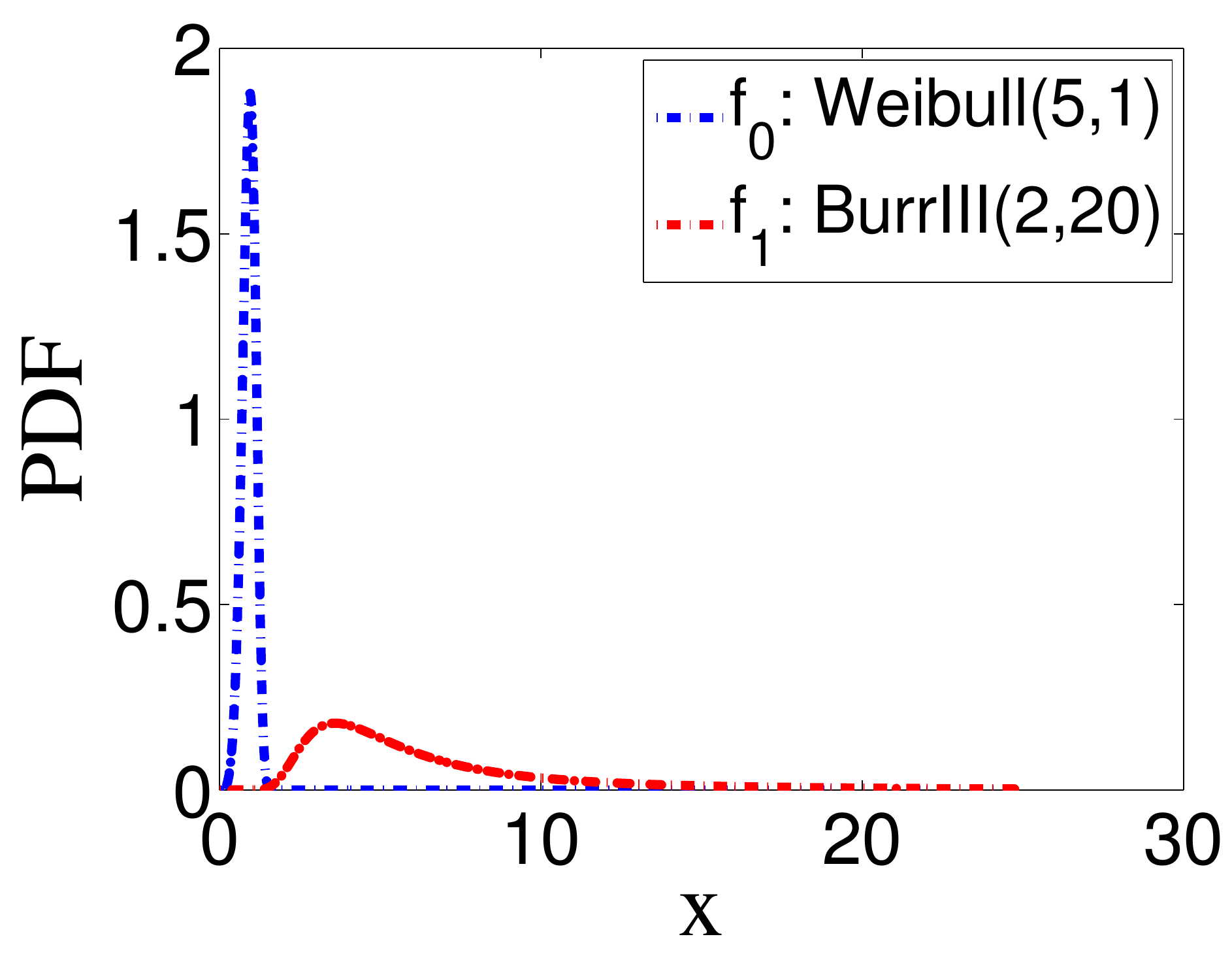}}
	\subfigure[Case 4]{\label{fig:Case4b}\includegraphics[width=0.24\textwidth]{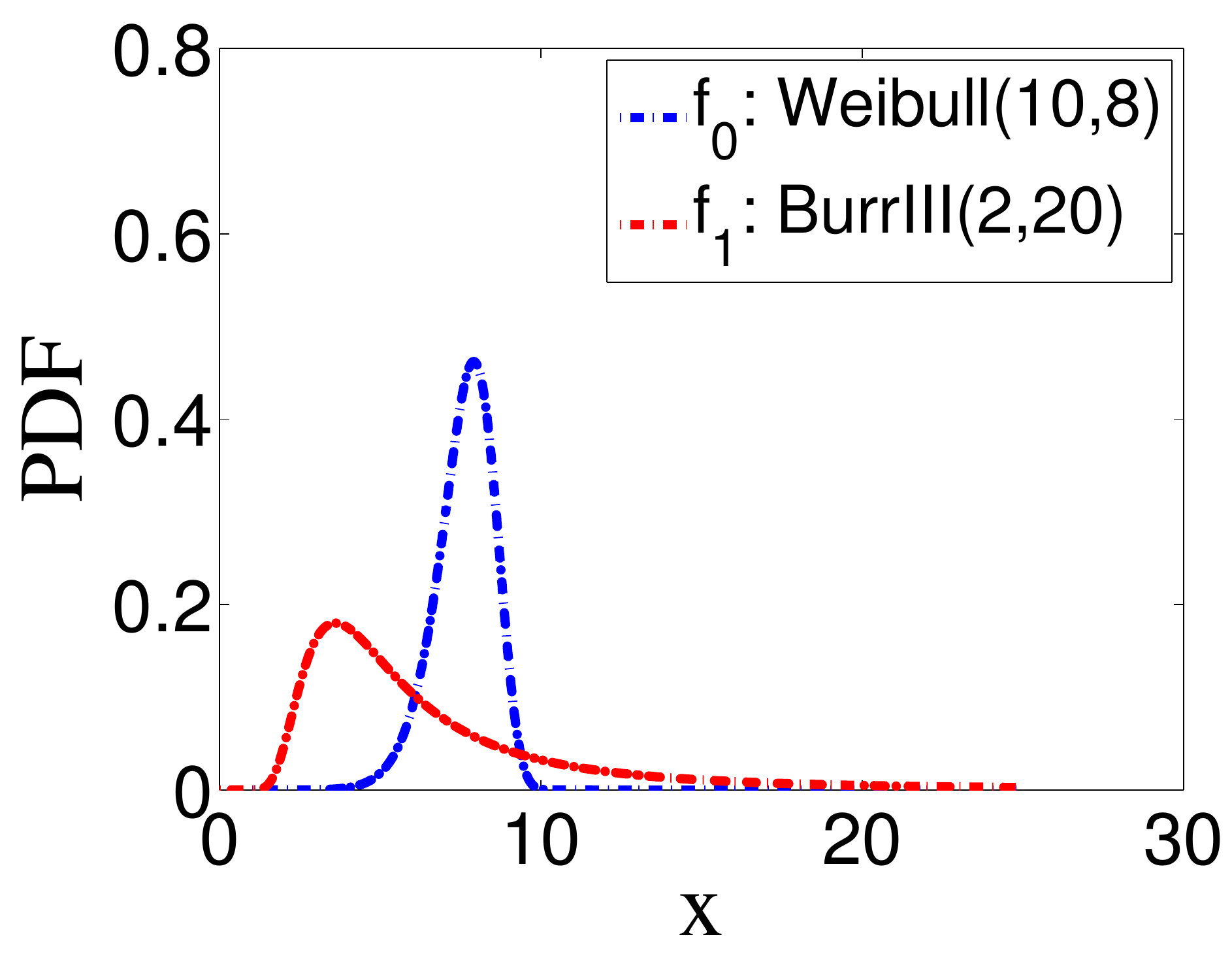}}
	\caption{The underlying is Weibull and contamination is BurrIII distributions (color online)}
	\label{wbcase}
\end{figure}
\subsection{Monte Carlo simulation: The design of artificial random numbers and results of simulation}\label{simusect}

A comprehensive simulation study is performed to make a comparison between MLqE and MLE methods. There are three contamination structures in the simulation.  Contamination makes a disorder in the identically distributed random observations represented by $x_1,x_2,\ldots ,x_n$. Contaminations are as follows:
\begin{enumerate}
	\item $(1-\varepsilon)f_0(x;\alpha_0,\beta_0) + \varepsilon f_1(x;\alpha_1,\beta_1) =(1-\varepsilon)$Weibull($x;\alpha_0$,$\beta_0$)+$\varepsilon$Weibull($x;\alpha_1$,$\beta_1$),
	\item $(1-\varepsilon)f_0(x;\alpha_0,\beta_0) + \varepsilon f_1(x;\alpha_1,\beta_1) =(1-\varepsilon)$Weibull($x;\alpha_0$,$\beta_0$)+$\varepsilon$Uniform($x;\alpha_1$,$\beta_1$),
	\item $(1-\varepsilon)f_0(x;\alpha_0,\beta_0) + \varepsilon f_1(x;\alpha_1,\beta_1) =(1-\varepsilon)$Weibull($x;\alpha_0$,$\beta_0$)+$\varepsilon$BurrIII($x;\alpha_1$,$\beta_1$).
\end{enumerate}

The contamination rate $\varepsilon$ is chosen as 10\% and 20\%. The parameters $\alpha_0$ and $\beta_0$ of underlying distribution Weibull($\alpha_0$,$\beta_0$) are estimated under contamination. $f_1$ can be chosen as Weibull($\alpha_1,\beta_1$), single-peak BurrIII($\alpha_1,\beta_1$) \cite{Canetal19} and Uniform($\alpha_1,\beta_1$) with the same mode for all values of $x \in [a,b]$. Thus, inliers and outliers can be observed through shape of function (see Figures \ref{wwcase}-\ref{wbcase}). Four different sample sizes $n=50,100,150$ and $n=200$ are used in the designs of simulation. $n_0$ and $n_1$ represent the number of random numbers from $f_0$ and $f_1$ distributions respectively. The number of replication for $n$ is $10^{4}$. 

Simulation variance $\widehat{Var}(\hat{\boldsymbol{\theta}})$ and simulation mean error squares $\widehat{MSE}(\hat{\boldsymbol{\theta}})$ are computed by using the following forms:
\begin{eqnarray}
	\widehat{MSE}(\hat{\boldsymbol{\theta}})&=&\widehat{Var}(\hat{\boldsymbol{\theta}})+\widehat{Bias}(\hat{\boldsymbol{\theta}})^2 \\ \nonumber
	\widehat{MSE}(\hat{\boldsymbol{\theta}})&=&\hat{E}[(\hat{\boldsymbol{\theta}}-\hat{E}(\hat{\boldsymbol{\theta}}))^2]+(\hat{E}(\hat{\boldsymbol{\theta}})-\boldsymbol{\theta})^2
\end{eqnarray}
$\widehat{Bias}^2$ and $\widehat{E}$ show the bias and the expected values obtained from the simulation, respectively. $\widehat{MSE}$ and $\widehat{Var}$ obtained through simulation are sampling forms of theoretical MSE and Var, respectively. If the estimators  $\hat{\boldsymbol{\theta}}$ obtained by MLqE are unbiased, then MSE($\hat{\boldsymbol{\theta}}$)=Var($\hat{\boldsymbol{\theta}}$). For comparison between MLqE and MLE methods, Tables \ref{caseswt}-\ref{caseswbtg} give the estimates of parameters $\alpha_0$ and $\beta_0$, $\widehat{Var}(\hat{\boldsymbol{\theta}}$)  and $\widehat{MSE}(\hat{\boldsymbol{\theta}}$). Note that other objective functions from $\log_{\kappa}$ and DPD did not give results which are better than MLqE for different types of contaminations. For this reason, we do not give them for the sake of not increasing the page numbers (see Figures \ref{fig:logq1}-\ref{fig:logDPD} for objective functions). 
\begin{table}[htbp]
	\caption{The underlying and contamination distributions are Weibull }
	\scalebox{0.7}
	{  
		\begin{tabular}{c|cccccccc|cccccccc}
			$\hat{\boldsymbol{\theta}}$	&   $\hat{\alpha}$   &   $\hat{\beta}$         &    $\hat{\alpha}$    &    $\hat{\beta}$   &   $\hat{\alpha}$     &    $\hat{\beta}$  &      $\hat{\alpha}$        &    $\hat{\beta}$ &  $\hat{\alpha}$   &   $\hat{\beta}$         &    $\hat{\alpha}$    &    $\hat{\beta}$   &   $\hat{\alpha}$     &    $\hat{\beta}$  &      $\hat{\alpha}$        &    $\hat{\beta}$       \\ \hline
			&	\multicolumn{8}{c}{Case 1: $(1-\varepsilon)$W(4,2) +$\varepsilon$W(1,5), $\varepsilon$=0.1} & \multicolumn{8}{c}{Case 1: $(1-\varepsilon)$W(4,2) +$\varepsilon$W(1,5), $\varepsilon$=0.2} \\ \hline
			& \multicolumn{2}{c}{\scalebox{0.72}{$n=50,q=0.84$}} & \multicolumn{2}{c}{\scalebox{0.72}{$n=100,q=0.84$}} & \multicolumn{2}{c}{\scalebox{0.72}{$n=150,q=0.84$}} & \multicolumn{2}{c}{\scalebox{0.72}{$n=200,q=0.84$}} 	& \multicolumn{2}{c}{\scalebox{0.72}{$n=50,q=0.77$}} & \multicolumn{2}{c}{\scalebox{0.72}{$n=100,q=0.77$}} & \multicolumn{2}{c}{\scalebox{0.72}{$n=150,q=0.77$}} & \multicolumn{2}{c}{\scalebox{0.72}{$n=200,q=0.77$}} \\ \hline
			MLqE($\boldsymbol{\theta}$)&4.0117&1.9984&3.9449&1.9981&3.9294&1.9967&3.9238&1.9970&3.9537&1.9951&3.8851&1.9938&3.8627&1.9939&3.8454&1.9928 \\
			$\widehat{Var}(\hat{\boldsymbol{\theta}})$&0.3883&0.0073&0.1775&0.0036&0.1161&0.0023&0.0883&0.0018&0.4907&0.0092&0.2389&0.0044&0.1516&0.0030&0.1108&0.0022\\
			$\widehat{MSE}(\hat{\boldsymbol{\theta}})$&0.3884&0.0073&0.1806&0.0036&0.1210&0.0024&0.0941&0.0018&0.4928&0.0092&0.2522&0.0045&0.1705&0.0030&0.1347&0.0023\\ \hline
			MLE($\boldsymbol{\theta}$)&1.7602&2.3742&1.5993&2.3786&1.5511&2.3800&1.5226&2.3812&1.3897&2.6791&1.3278&2.6798&1.3082&2.6802&1.2964&2.6796\\
			$\widehat{Var}(\hat{\boldsymbol{\theta}})$&0.2919&0.0518&0.1015&0.0259&0.0576&0.0174&0.0372&0.0130&0.0771&0.0869&0.0289&0.0435&0.0181&0.0284&0.0124&0.0214\\
			$\widehat{MSE}(\hat{\boldsymbol{\theta}})$&5.3086&0.1919&5.8647&0.1692&6.0546&0.1618&6.1746&0.1583&   6.8909&0.5480&7.1698&0.5056&7.2639&0.4911&7.3221&0.4832\\  \hline
			&	\multicolumn{8}{c}{Case 2: $(1-\varepsilon)$W(6,4) +$\varepsilon$W(1,5), $\varepsilon$=0.1} & \multicolumn{8}{c}{Case 2: $(1-\varepsilon)$W(6,4) +$\varepsilon$W(1,5), $\varepsilon$=0.2} \\ \hline
			& \multicolumn{2}{c}{\scalebox{0.72}{$n=50,q=0.82$}} & \multicolumn{2}{c}{\scalebox{0.72}{$n=100,q=0.82$}} & \multicolumn{2}{c}{\scalebox{0.72}{$n=150,q=0.82$}} & \multicolumn{2}{c}{\scalebox{0.72}{$n=200,q=0.82$}} 	& \multicolumn{2}{c}{\scalebox{0.72}{$n=50,q=0.76$}} & \multicolumn{2}{c}{\scalebox{0.72}{$n=100,q=0.76$}} & \multicolumn{2}{c}{\scalebox{0.72}{$n=150,q=0.76$}} & \multicolumn{2}{c}{\scalebox{0.72}{$n=200,q=0.76$}} \\ \hline
			MLqE($\boldsymbol{\theta}$)&6.0814&3.9807&5.9821&3.9823&5.9549&3.9824&5.9371&3.9828&5.8268&3.9667&5.7126&3.9646&5.6736&3.9658&5.6606&3.9650 \\
			$\widehat{Var}(\hat{\boldsymbol{\theta}})$&0.8033&0.0126&0.3677&0.0065&0.2322&0.0043&0.1738&0.0032&1.0473&0.0159&0.4823&0.0079&0.3153&0.0054&0.2322&0.0040\\
			$\widehat{MSE}(\hat{\boldsymbol{\theta}})$&0.8099&0.0130&0.3680&0.0068&0.2342&0.0046&0.1777&0.0035&1.0773&0.0170&0.5648&0.0092&0.4219&0.0066&0.3474&0.0053\\ \hline
			MLE($\boldsymbol{\theta}$)&2.9198&4.2649&2.5628&4.2746&2.4039&4.2797&2.3270&4.2847&2.1074&4.4416&1.9521&4.4407&1.8948&4.4488&1.8674&4.4494\\
			$\widehat{Var}(\hat{\boldsymbol{\theta}})$&1.0803&0.0952&0.5275&0.0509&0.2924&0.0334&0.1991&0.0254&0.3171&0.1437&0.1272&0.0721&0.0753&0.0490&0.0539&0.0364\\
			$\widehat{MSE}(\hat{\boldsymbol{\theta}})$&10.5682&0.1654&12.3418&0.1263&13.2245&0.1116&13.6903&0.1064&15.4692&0.3388&16.5125&0.2663&16.9276&0.2504&17.1320&0.2384\\  \hline
			&	\multicolumn{8}{c}{Case 3: $(1-\varepsilon)$W(5,5) +$\varepsilon$W(2,6), $\varepsilon$=0.1} & \multicolumn{8}{c}{Case 3: $(1-\varepsilon)$W(5,5) +$\varepsilon$W(2,6), $\varepsilon$=0.2} \\ \hline
			& \multicolumn{2}{c}{\scalebox{0.72}{$n=50,q=0.83$}} & \multicolumn{2}{c}{\scalebox{0.72}{$n=100,q=0.83$}} & \multicolumn{2}{c}{\scalebox{0.72}{$n=150,q=0.83$}} & \multicolumn{2}{c}{\scalebox{0.72}{$n=200,q=0.83$}} 	& \multicolumn{2}{c}{\scalebox{0.72}{$n=50,q=0.73$}} & \multicolumn{2}{c}{\scalebox{0.72}{$n=100,q=0.73$}} & \multicolumn{2}{c}{\scalebox{0.72}{$n=150,q=0.73$}} & \multicolumn{2}{c}{\scalebox{0.72}{$n=200,q=0.73$}} \\ \hline
			MLqE($\boldsymbol{\theta}$)&5.0053&5.0198&4.9333&5.0161&4.9117&5.0177&4.9002&5.0177&5.0820&5.0168&4.9997&5.0160&4.9665&5.0140&4.9475&5.0166\\
			$\widehat{Var}(\hat{\boldsymbol{\theta}})$&0.5207&0.0289&0.2328&0.0144&0.1590&0.0097&0.1163&0.0072&0.6637&0.0365&0.3162&0.0178&0.1996&0.0117&0.1496&0.0090\\
			$\widehat{MSE}(\hat{\boldsymbol{\theta}})$&0.5207&0.0293&0.2372&0.0146&0.1667&0.0100&0.1262&0.0075&0.6704&0.0368&0.3162&0.0181&0.2008&0.0119&0.1524&0.0093\\ \hline
			MLE($\boldsymbol{\theta}$)&3.8745&5.1391&3.6735&5.1411&3.6032&5.1439&3.5537&5.1446&3.2677&5.2637&3.1351&5.2671&3.0810&5.2674&3.0652&5.2669\\
			$\widehat{Var}(\hat{\boldsymbol{\theta}})$&0.5877&0.0465&0.3243&0.0237&0.2239&0.0158&0.1695&0.0118&0.3543&0.0665&0.1678&0.0333&0.1046&0.0223&0.0796&0.0170\\
			$\widehat{MSE}(\hat{\boldsymbol{\theta}})$&1.8545&0.0659&2.0839&0.0436&2.1750&0.0365&2.2613&0.0327&3.3551&0.1360&3.6455&0.1046&3.7871&0.0938&3.8231&0.0883\\ \hline
&	\multicolumn{8}{c}{Case 4: $(1-\varepsilon)$W(10,8) +$\varepsilon$W(4,10), $\varepsilon$=0.1} & \multicolumn{8}{c}{Case 4: $(1-\varepsilon)$W(10,8) +$\varepsilon$W(4,10), $\varepsilon$=0.2} \\ \hline
& \multicolumn{2}{c}{\scalebox{0.72}{$n=50,q=0.82$}} & \multicolumn{2}{c}{\scalebox{0.72}{$n=100,q=0.82$}} & \multicolumn{2}{c}{\scalebox{0.72}{$n=150,q=0.82$}} & \multicolumn{2}{c}{\scalebox{0.72}{$n=200,q=0.82$}} 	& \multicolumn{2}{c}{\scalebox{0.72}{$n=50,q=0.74$}} & \multicolumn{2}{c}{\scalebox{0.72}{$n=100,q=0.73$}} & \multicolumn{2}{c}{\scalebox{0.72}{$n=150,q=0.73$}} & \multicolumn{2}{c}{\scalebox{0.72}{$n=200,q=0.73$}} \\ \hline
MLqE($\boldsymbol{\theta}$)&10.0432&8.0523&9.9416&8.0521&9.9305&8.0521&9.8742&8.0522&10.0133&8.0936&10.0648&8.0798&10.0319&8.0819&10.0015&8.0829\\
$\widehat{Var}(\hat{\boldsymbol{\theta}})$&2.4484&0.0201&1.1634&0.0096&0.8164&0.0065&0.6065&0.0051&3.1352&0.0264&1.6086&0.0126&1.1025&0.0085&0.8469&0.0067\\
$\widehat{MSE}(\hat{\boldsymbol{\theta}})$&2.4502&0.0229&1.1668&0.0123&0.8212&0.0092&0.6223&0.0078&3.1354&0.0352&1.6128&0.0190&1.1035&0.0152&0.8469&0.0135\\ \hline
MLE($\boldsymbol{\theta}$)&6.3489&8.2799&5.9287&8.2881&5.8087&8.2890&5.7108&8.2928&5.2740&8.5174&5.0506&8.5238&4.9863&8.5280&4.9650&8.5302\\
$\widehat{Var}(\hat{\boldsymbol{\theta}})$&2.2001&0.0424&0.9516&0.0206&0.5923&0.0145&0.4180&0.0109&0.9024&0.0577&0.3525&0.0292&0.2149&0.0190&0.1645&0.0143\\
$\widehat{MSE}(\hat{\boldsymbol{\theta}})$&15.5306&0.1207&17.5273&0.1036&18.1596&0.0980&18.8150&0.0966&23.2372&0.3253&24.8492&0.3035&25.3524&0.2977&25.5160&0.2954\\ \hline
&	\multicolumn{8}{c}{Case 5: $(1-\varepsilon)$W(3,5) +$\varepsilon$W(2,7), $\varepsilon$=0.1} & \multicolumn{8}{c}{Case 5: $(1-\varepsilon)$W(3,5) +$\varepsilon$W(2,7), $\varepsilon$=0.2} \\ \hline
& \multicolumn{2}{c}{\scalebox{0.72}{$n=50,q=0.84$}} & \multicolumn{2}{c}{\scalebox{0.72}{$n=100,q=0.82$}} & \multicolumn{2}{c}{\scalebox{0.72}{$n=150,q=0.82$}} & \multicolumn{2}{c}{\scalebox{0.72}{$n=200,q=0.82$}} 	& \multicolumn{2}{c}{\scalebox{0.72}{$n=50,q=0.75$}} & \multicolumn{2}{c}{\scalebox{0.72}{$n=100,q=0.75$}} & \multicolumn{2}{c}{\scalebox{0.72}{$n=150,q=0.74$}} & \multicolumn{2}{c}{\scalebox{0.72}{$n=200,q=0.74$}} \\ \hline
MLqE($\boldsymbol{\theta}$)&3.0231&5.0735&3.0327&5.0527&3.0191&5.0543&3.0108&5.0532&3.0520&5.1156&3.0003&5.1164&3.0117&5.1089&3.0024&5.1063\\
$\widehat{Var}(\hat{\boldsymbol{\theta}})$&0.1642&0.0781&0.0782&0.0382&0.0519&0.0253&0.0381&0.0192&0.1995&0.0964&0.0931&0.0468&0.0624&0.0308&0.0456&0.0231\\
$\widehat{MSE}(\hat{\boldsymbol{\theta}})$&0.1647&0.0835&0.0793&0.0410&0.0523&0.0283&0.0383&0.0221&0.2022&0.1098&0.0931&0.0604&0.0625&0.0426&0.0456&0.0344\\ \hline
MLE($\boldsymbol{\theta}$)&2.6699&5.2183&2.6007&5.2154&2.5756&5.2189&2.5600&5.2188&2.4435&5.4241&2.3859&5.4267&2.3642&5.4295&2.3518&5.4286\\
$\widehat{Var}(\hat{\boldsymbol{\theta}})$&0.1476&0.0888&0.0733&0.0424&0.0513&0.0285&0.0371&0.0216&0.1148&0.1111&0.0552&0.0542&0.0348&0.0360&0.0261&0.0269\\
$\widehat{MSE}(\hat{\boldsymbol{\theta}})$&0.2565&0.1364&0.2327&0.0888&0.2315&0.0764&0.2307&0.0695&0.4245&0.2910&0.4324&0.2363&0.4391&0.2204&0.4462&0.2106\\ \hline
		\end{tabular}
	}
	\label{caseswt}
\end{table}
\begin{table}[htbp]
	\caption{The underlying and contamination distributions are Weibull (continuation of the Table \ref{caseswt})}
	\scalebox{0.7}
	{  
		\begin{tabular}{c|cccccccc|cccccccc}
			&	\multicolumn{8}{c}{Case 6: $(1-\varepsilon)$W(1,5) +$\varepsilon$W(2,8), $\varepsilon$=0.1} & \multicolumn{8}{c}{Case 6: $(1-\varepsilon)$W(1,5) +$\varepsilon$W(2,8), $\varepsilon$=0.2} \\ \hline
			& \multicolumn{2}{c}{\scalebox{0.72}{$n=50,q=0.97$}} & \multicolumn{2}{c}{\scalebox{0.72}{$n=100,q=0.97$}} & \multicolumn{2}{c}{\scalebox{0.72}{$n=150,q=0.97$}} & \multicolumn{2}{c}{\scalebox{0.72}{$n=200,q=0.97$}} 	& \multicolumn{2}{c}{\scalebox{0.72}{$n=50,q=0.87$}} & \multicolumn{2}{c}{\scalebox{0.72}{$n=100,q=0.87$}} & \multicolumn{2}{c}{\scalebox{0.72}{$n=150,q=0.87$}} & \multicolumn{2}{c}{\scalebox{0.72}{$n=200,q=0.87$}} \\ \hline
MLqE($\boldsymbol{\theta}$)&1.0720&5.1854&1.0565&5.1629&1.0519&5.1597&1.0513&5.1636&1.1229&5.0455&1.1103&5.0387&1.1053&5.0389&1.1020&5.0298\\
$\widehat{Var}(\hat{\boldsymbol{\theta}})$&0.0157&0.5523&0.0074&0.2762&0.0049&0.1820&0.0037&0.1358&0.0217&0.5414&0.0106&0.2750&0.0070&0.1796&0.0051&0.1326\\
$\widehat{MSE}(\hat{\boldsymbol{\theta}})$&0.0209&0.5866&0.0106&0.3027&0.0076&0.2075&0.0063&0.1626&0.0368&0.5435&0.0228&0.2765&0.0181&0.1811&0.0155&0.1335\\ \hline
MLE($\boldsymbol{\theta}$)&1.0727&5.3231&1.0568&5.3022&1.0521&5.2999&1.0514&5.3035&1.1238&5.6089&1.1099&5.6057&1.1044&5.6091&1.1008&5.6013\\
$\widehat{Var}(\hat{\boldsymbol{\theta}})$&0.0151&0.5622&0.0071&0.2814&0.0047&0.1853&0.0035&0.1390&0.0173&0.5579&0.0084&0.2808&0.0055&0.1824&0.0041&0.1360\\
$\widehat{MSE}(\hat{\boldsymbol{\theta}})$&0.0204&0.6666&0.0103&0.3727&0.0074&0.2752&0.0062&0.2311&0.0326&0.9287&0.0205&0.6477&0.0164&0.5534&0.0142&0.4976\\ \hline
			&	\multicolumn{8}{c}{Case 7: $(1-\varepsilon)$W(5,1) +$\varepsilon$W(2,8), $\varepsilon$=0.1} & \multicolumn{8}{c}{Case 7: $(1-\varepsilon)$W(5,1) +$\varepsilon$W(2,8), $\varepsilon$=0.2} \\ \hline
& \multicolumn{2}{c}{\scalebox{0.72}{$n=50,q=0.87$}} & \multicolumn{2}{c}{\scalebox{0.72}{$n=100,q=0.87$}} & \multicolumn{2}{c}{\scalebox{0.72}{$n=150,q=0.87$}} & \multicolumn{2}{c}{\scalebox{0.72}{$n=200,q=0.87$}} 	& \multicolumn{2}{c}{\scalebox{0.72}{$n=50,q=0.8$}} & \multicolumn{2}{c}{\scalebox{0.72}{$n=100,q=0.8$}} & \multicolumn{2}{c}{\scalebox{0.72}{$n=150,q=0.8$}} & \multicolumn{2}{c}{\scalebox{0.72}{$n=200,q=0.8$}} \\ \hline
MLqE($\boldsymbol{\theta}$)&5.3941&1.0002&5.3295&1.0007&5.3101&1.0006&5.2996&1.0009&5.6228&1.0009&5.5442&1.0007&5.5227&1.0006&5.5089&1.0010\\
$\widehat{Var}(\hat{\boldsymbol{\theta}})$&0.5300&0.0011&0.2401&0.0005&0.1615&0.0003&0.1201&0.0003&0.6812&0.0015&0.3046&0.0006&0.2071&0.0004&0.1522&0.0003\\
$\widehat{MSE}(\hat{\boldsymbol{\theta}})$&0.6854&0.0011&0.3487&0.0005&0.2577&0.0003&0.2098&0.0003&1.0691&0.0015&0.6008&0.0006&0.4803&0.0004&0.4113&0.0003\\ \hline
MLE($\boldsymbol{\theta}$)&1.0943&1.5828&1.0772&1.5837&1.0698&1.5858&1.0685&1.5851&0.9719&2.0993&0.9644&2.1000&0.9620&2.1006&0.9612&2.1001\\
$\widehat{Var}(\hat{\boldsymbol{\theta}})$&0.0156&0.0129&0.0062&0.0064&0.0037&0.0042&0.0027&0.0031&0.0048&0.0249&0.0021&0.0123&0.0014&0.0084&0.0011&0.0063\\
$\widehat{MSE}(\hat{\boldsymbol{\theta}})$&15.2700&0.3525&15.3944&0.3472&15.4501&0.3474&15.4596&0.3455&16.2302&1.2334&16.2880&1.2223&16.3068&1.2197&16.3127&1.2166\\ \hline
			&	\multicolumn{8}{c}{Case 8: $(1-\varepsilon)$W(5,3) +$\varepsilon$W(2,7), $\varepsilon$=0.1} & \multicolumn{8}{c}{Case 8: $(1-\varepsilon)$W(5,3) +$\varepsilon$W(2,7), $\varepsilon$=0.2} \\ \hline
& \multicolumn{2}{c}{\scalebox{0.72}{$n=50,q=0.84$}} & \multicolumn{2}{c}{\scalebox{0.72}{$n=100,q=0.84$}} & \multicolumn{2}{c}{\scalebox{0.72}{$n=150,q=0.84$}} & \multicolumn{2}{c}{\scalebox{0.72}{$n=200,q=0.83$}} 	& \multicolumn{2}{c}{\scalebox{0.72}{$n=50,q=0.77$}} & \multicolumn{2}{c}{\scalebox{0.72}{$n=100,q=0.77$}} & \multicolumn{2}{c}{\scalebox{0.72}{$n=150,q=0.76$}} & \multicolumn{2}{c}{\scalebox{0.72}{$n=200,q=0.76$}} \\ \hline
MLqE($\boldsymbol{\theta}$)&5.0450&3.0315&4.9924&3.0291&4.9679&3.0253&5.0395&3.0237&5.0236&3.0612&4.9861&3.0476&5.0784&3.0395&5.0649&3.0376\\
$\widehat{Var}(\hat{\boldsymbol{\theta}})$&0.7555&0.0112&0.3561&0.0053&0.2309&0.0035&0.1613&0.0026&1.1168&0.0181&0.4973&0.0076&0.3107&0.0046&0.2178&0.0033\\
$\widehat{MSE}(\hat{\boldsymbol{\theta}})$&0.7575&0.0121&0.3561&0.0062&0.2320&0.0042&0.1629&0.0032&1.1174&0.0218&0.4975&0.0099&0.3169&0.0061&0.2221&0.0047\\ \hline
MLE($\boldsymbol{\theta}$)&2.2789&3.4869&2.1464&3.4979&2.1086&3.4970&2.0953&3.4989&1.9144&3.9000&1.8628&3.9065&1.8508&3.9044&1.8441&3.9029\\
$\widehat{Var}(\hat{\boldsymbol{\theta}})$&0.2579&0.0414&0.0829&0.0205&0.0486&0.0141&0.0333&0.0103&0.0700&0.0636&0.0286&0.0316&0.0175&0.0201&0.0129&0.0156\\
$\widehat{MSE}(\hat{\boldsymbol{\theta}})$&7.6625&0.2785&8.2258&0.2684&8.4086&0.2611&8.4705&0.2592&9.5907&0.8736&9.8707&0.8533&9.9347&0.8381&9.9725&0.8308\\ \hline
		\end{tabular}
	}
	\label{caseswtcont}
\end{table}
\begin{table}[htbp]
	\caption{The underlying is Weibull and contamination is Uniform distributions}
	\scalebox{0.72}
	{  
		\begin{tabular}{c|cccccccc|cccccccc}
$\hat{\boldsymbol{\theta}}$	&   $\hat{\alpha}$   &   $\hat{\beta}$         &    $\hat{\alpha}$    &    $\hat{\beta}$   &   $\hat{\alpha}$     &    $\hat{\beta}$  &      $\hat{\alpha}$        &    $\hat{\beta}$ &  $\hat{\alpha}$   &   $\hat{\beta}$         &    $\hat{\alpha}$    &    $\hat{\beta}$   &   $\hat{\alpha}$     &    $\hat{\beta}$  &      $\hat{\alpha}$        &    $\hat{\beta}$       \\ \hline
&	\multicolumn{8}{c}{Case 1: $(1-\varepsilon)$W(8,10) +$\varepsilon$U(4,10), $\varepsilon$=0.1} & \multicolumn{8}{c}{Case 1: $(1-\varepsilon)$W(8,10) +$\varepsilon$U(4,10), $\varepsilon$=0.2} \\ \hline
& \multicolumn{2}{c}{\scalebox{0.72}{$n=50,q=0.84$}} & \multicolumn{2}{c}{\scalebox{0.72}{$n=100,q=0.84$}} & \multicolumn{2}{c}{\scalebox{0.72}{$n=150,q=0.82$}} & \multicolumn{2}{c}{\scalebox{0.72}{$n=200,q=0.82$}} 	& \multicolumn{2}{c}{\scalebox{0.72}{$n=50,q=0.7$}} & \multicolumn{2}{c}{\scalebox{0.72}{$n=100,q=0.7$}} & \multicolumn{2}{c}{\scalebox{0.72}{$n=150,q=0.7$}} & \multicolumn{2}{c}{\scalebox{0.72}{$n=200,q=0.7$}} \\ \hline
MLqE($\boldsymbol{\theta}$)&8.0064&9.8590&7.9204&9.8638&8.0010&9.8689&7.9822&9.8692&8.1291&9.7391&8.0101&9.7455&7.9531&9.7452&7.9363&9.7461\\ 
$\widehat{Var}(\hat{\boldsymbol{\theta}})$&0.9125&0.0385&0.4443&0.0204&0.3111&0.0138&0.2331&0.0100&1.3036&0.0470&0.6139&0.0238&0.4175&0.0161&0.3002&0.0122\\
$\widehat{MSE}(\hat{\boldsymbol{\theta}})$&0.9125&0.0584&0.4507&0.0389&0.3111&0.0310&0.2334&0.0271&1.3202&0.1151&0.6140&0.0886&0.4197&0.0810&0.3043&0.0766 \\ \hline	MLE($\boldsymbol{\theta}$)&7.1776&9.8106&7.1036&9.8159&7.0731&9.8169&7.0499&9.8152&6.4031&9.6178&6.3368&9.6245&6.3101&9.6246&6.3005&9.6251\\
$\widehat{Var}(\hat{\boldsymbol{\theta}})$&0.6376&0.0365&0.3069&0.0191&0.2049&0.0127&0.1524&0.0093&0.4950&0.0386&0.2386&0.0197&0.1605&0.0133&0.1140&0.0098\\
$\widehat{MSE}(\hat{\boldsymbol{\theta}})$&1.3140&0.0724&1.1105&0.0530&1.0639&0.0462&1.0552&0.0435&3.0452&0.1848&3.0048&0.1607&3.0162&0.1542&3.0022&0.1503\\  \hline
&	\multicolumn{8}{c}{Case 2: $(1-\varepsilon)$W(3,5) +$\varepsilon$U(5,15), $\varepsilon$=0.1} & \multicolumn{8}{c}{Case 2: $(1-\varepsilon)$W(3,5) +$\varepsilon$U(5,15), $\varepsilon$=0.2} \\ \hline
& \multicolumn{2}{c}{\scalebox{0.72}{$n=50,q=0.75$}} & \multicolumn{2}{c}{\scalebox{0.72}{$n=100,q=0.75$}} & \multicolumn{2}{c}{\scalebox{0.72}{$n=150,q=0.74$}} & \multicolumn{2}{c}{\scalebox{0.72}{$n=200,q=0.74$}} 	& \multicolumn{2}{c}{\scalebox{0.72}{$n=50,q=0.65$}} & \multicolumn{2}{c}{\scalebox{0.72}{$n=100,q=0.65$}} & \multicolumn{2}{c}{\scalebox{0.72}{$n=150,q=0.65$}} & \multicolumn{2}{c}{\scalebox{0.72}{$n=200,q=0.65$}} \\ \hline
MLqE($\boldsymbol{\theta}$)&3.1173&5.1332&3.0509&5.1345&3.0231&5.1392&3.0161&5.1385&3.2591&5.2930&3.1292&5.2903&3.0989&5.2890&3.0789&5.2872\\
$\widehat{Var}(\hat{\boldsymbol{\theta}})$&0.2868&0.0827&0.1362&0.0412&0.0871&0.0277&0.0665&0.0209&0.5459&0.1214&0.2484&0.0618&0.1641&0.0415&0.1236&0.0311\\
$\widehat{MSE}(\hat{\boldsymbol{\theta}})$&0.3006&0.1004&0.1388&0.0593&0.0876&0.0471&0.0668&0.0400&0.6130&0.2072&0.2650&0.1461&0.1738&0.1250&0.1299&0.1135\\ \hline
MLE($\boldsymbol{\theta}$)&2.2236&5.6668&2.1925&5.6698&2.1816&5.6751&2.1756&5.6775&2.0557&6.3144&2.0339&6.3136&2.0278&6.3161&2.0252&6.3139\\
$\widehat{Var}(\hat{\boldsymbol{\theta}})$&0.0656&0.0843&0.0280&0.0415&0.0173&0.0274&0.0125&0.0207&0.0363&0.0990&0.0165&0.0492&0.0107&0.0336&0.0079&0.0249\\
$\widehat{MSE}(\hat{\boldsymbol{\theta}})$&0.6684&0.5288&0.6800&0.4901&0.6870&0.4832&0.6921&0.4797&0.9280&1.8265&0.9499&1.7746&0.9560&1.7656&0.9580&1.7513\\ \hline
&	\multicolumn{8}{c}{Case 3: $(1-\varepsilon)$W(10,8) +$\varepsilon$U(4,10), $\varepsilon$=0.1} & \multicolumn{8}{c}{Case 3: $(1-\varepsilon)$W(10,8) +$\varepsilon$U(4,10), $\varepsilon$=0.2} \\ \hline
& \multicolumn{2}{c}{\scalebox{0.72}{$n=50,q=0.85$}} & \multicolumn{2}{c}{\scalebox{0.72}{$n=100,q=0.85$}} & \multicolumn{2}{c}{\scalebox{0.72}{$n=150,q=0.85$}} & \multicolumn{2}{c}{\scalebox{0.72}{$n=200,q=0.85$}} 	& \multicolumn{2}{c}{\scalebox{0.72}{$n=50,q=0.74$}} & \multicolumn{2}{c}{\scalebox{0.72}{$n=100,q=0.73$}} & \multicolumn{2}{c}{\scalebox{0.72}{$n=150,q=0.7$}} & \multicolumn{2}{c}{\scalebox{0.72}{$n=200,q=0.7$}} \\ \hline
MLqE($\boldsymbol{\theta}$)&9.9399&7.9939&9.8051&7.9961&9.7881&7.9996&9.7611&7.9982&9.7928&7.9929&9.7810&7.9984&10.0043&8.0030&9.9722&8.0032\\
$\widehat{Var}(\hat{\boldsymbol{\theta}})$&1.3651&0.0185&0.6472&0.0092&0.4316&0.0062&0.3394&0.0049&1.6467&0.0226&0.8393&0.0119&0.6388&0.0080&0.4839&0.0062\\
$\widehat{MSE}(\hat{\boldsymbol{\theta}})$&1.3687&0.0186&0.6852&0.0092&0.4765&0.0062&0.3964&0.0049&1.6896&0.0226&0.8873&0.0119&0.6388&0.0080&0.4847&0.0062\\ \hline
MLE($\boldsymbol{\theta}$)&9.0213&7.9756&8.8908&7.9779&8.8674&7.9813&8.8412&7.9802&8.0909&7.9542&8.0147&7.9587&7.9808&7.9593&7.9593&7.9605\\
$\widehat{Var}(\hat{\boldsymbol{\theta}})$&1.0064&0.0186&0.4656&0.0091&0.3062&0.0062&0.2331&0.0048&0.7783&0.0221&0.3634&0.0114&0.2341&0.0074&0.1771&0.0058\\
$\widehat{MSE}(\hat{\boldsymbol{\theta}})$&1.9643&0.0192&1.6960&0.0096&1.5889&0.0065&1.5759&0.0052&4.4230&0.0242&4.3048&0.0131&4.3113&0.0091&4.3414&0.0074\\ \hline
&	\multicolumn{8}{c}{Case 4: $(1-\varepsilon)$W(3,5) +$\varepsilon$U(3,5), $\varepsilon$=0.1} & \multicolumn{8}{c}{Case 4: $(1-\varepsilon)$W(3,5) +$\varepsilon$U(3,5), $\varepsilon$=0.2} \\ \hline
& \multicolumn{2}{c}{\scalebox{0.72}{$n=50,q=1.07$}} & \multicolumn{2}{c}{\scalebox{0.72}{$n=100,q=1.09$}} & \multicolumn{2}{c}{\scalebox{0.72}{$n=150,q=1.09$}} & \multicolumn{2}{c}{\scalebox{0.72}{$n=200,q=1.09$}} 	& \multicolumn{2}{c}{\scalebox{0.72}{$n=50,q=1.09$}} & \multicolumn{2}{c}{\scalebox{0.72}{$n=100,q=1.1$}} & \multicolumn{2}{c}{\scalebox{0.72}{$n=150,q=1.09$}} & \multicolumn{2}{c}{\scalebox{0.72}{$n=200,q=1.1$}} \\ \hline
MLqE($\boldsymbol{\theta}$)&3.0854&4.9569&3.0185&4.9681&2.9972&4.9665&2.9902&4.9669&3.1509&4.9048&3.0867&4.9119&3.0773&4.9113&3.0555&4.9142\\
$\widehat{Var}(\hat{\boldsymbol{\theta}})$&0.1239&0.0569&0.0556&0.0282&0.0362&0.0191&0.0272&0.0143&0.1287&0.0536&0.0587&0.0261&0.0391&0.0179& 0.0289&0.0136\\
$\widehat{MSE}(\hat{\boldsymbol{\theta}})$&0.1312&0.0588&0.0559&0.0292&0.0362&0.0203& 0.0272&0.0154&0.1514&0.0626&0.0662&0.0338&0.0451&0.0258&0.0319&0.0210\\ \hline
MLE($\boldsymbol{\theta}$)&3.1676&4.9365&3.1250&3.1250&3.1052 &4.9384&3.0986&4.9386&3.2627&4.8741&3.2150&4.8761& 3.1958&4.8784&3.1864&4.8775\\
$\widehat{Var}(\hat{\boldsymbol{\theta}})$&0.1317&0.0567&0.0599&0.0279&0.0392&0.0189& 0.0292&0.0141&0.1372&0.0521&0.0630&0.0251&0.0414&0.0174& 0.0308&0.0131\\
$\widehat{MSE}(\hat{\boldsymbol{\theta}})$&0.1598&0.0607&0.0755&0.0315&0.0503&0.0227& 0.0390&0.0178&0.2062&0.0680&0.1092&0.0405&0.0797&0.0322&0.0655&0.0281\\ \hline
			\end{tabular}
	}
	\label{caseswut}
\end{table}
\begin{table}[htb!]
		\caption{The underlying is Weibull and contamination is BurrIII distributions}
	\scalebox{0.7}
	{  
		\begin{tabular}{c|cccccccc|cccccccc}
			$\hat{\boldsymbol{\theta}}$	&   $\hat{\alpha}$   &   $\hat{\beta}$         &    $\hat{\alpha}$    &    $\hat{\beta}$   &   $\hat{\alpha}$     &    $\hat{\beta}$  &      $\hat{\alpha}$        &    $\hat{\beta}$ &  $\hat{\alpha}$   &   $\hat{\beta}$         &    $\hat{\alpha}$    &    $\hat{\beta}$   &   $\hat{\alpha}$     &    $\hat{\beta}$  &      $\hat{\alpha}$        &    $\hat{\beta}$       \\ \hline
			&	\multicolumn{8}{c}{Case 1: $(1-\varepsilon)$W(1,5) +$\varepsilon$B(2,20), $\varepsilon$=0.1} & \multicolumn{8}{c}{Case 1: $(1-\varepsilon)$W(1,5) +$\varepsilon$B(2,20), $\varepsilon$=0.2} \\ \hline
			& \multicolumn{2}{c}{\scalebox{0.72}{$n=50,q=0.94$}} & \multicolumn{2}{c}{\scalebox{0.72}{$n=100,q=0.94$}} & \multicolumn{2}{c}{\scalebox{0.72}{$n=150,q=0.94$}} & \multicolumn{2}{c}{\scalebox{0.72}{$n=200,q=0.94$}} 	& \multicolumn{2}{c}{\scalebox{0.72}{$n=50,q=0.89$}} & \multicolumn{2}{c}{\scalebox{0.72}{$n=100,q=0.89$}} & \multicolumn{2}{c}{\scalebox{0.72}{$n=150,q=0.89$}} & \multicolumn{2}{c}{\scalebox{0.72}{$n=200,q=0.89$}} \\ \hline
MLqE($\boldsymbol{\theta}$)&1.0576&5.0057&1.0422&5.0046&1.0374&4.9953&1.0347&4.9909&1.1055&5.0499&1.0895&5.0407&1.0855&5.0421&1.0832&5.0372\\
$\widehat{Var}(\hat{\boldsymbol{\theta}})$&0.0179&0.5367&0.0083&0.2677&0.0054&0.1777&0.0041&0.1363&0.0221&0.5177&0.0103&0.2601&0.0069&0.1736&0.0050&0.1310\\
$\widehat{MSE}(\hat{\boldsymbol{\theta}})$&0.0212&0.5367&0.0100&0.2677&0.0068&0.1777&0.0053&0.1363&0.0332&0.5202&0.0183&0.2617&0.0142&0.1753&0.0119&0.1324\\ \hline
MLE($\boldsymbol{\theta}$)&1.0530&5.3062&1.0344&5.3124&1.0274&5.3053&1.0236&5.3019&1.0795&5.6179&1.0560&5.6162&1.0484&5.6207&1.0424&5.6168\\
$\widehat{Var}(\hat{\boldsymbol{\theta}})$&0.0174&0.5776&0.0089&0.2903&0.0061&0.1932&0.0050&0.1495&0.0218&0.6135&0.0116&0.3085&0.0088&0.2059&0.0070&0.1576\\
$\widehat{MSE}(\hat{\boldsymbol{\theta}})$&0.0202&0.6714&0.0100&0.3879&0.0069&0.2864&0.0055&0.2407&0.0281&0.9953&0.0148&0.6882&0.0111&0.5911&0.0088&0.5381\\ \hline
			$\hat{\boldsymbol{\theta}}$	&   $\hat{\alpha}$   &   $\hat{\beta}$         &    $\hat{\alpha}$    &    $\hat{\beta}$   &   $\hat{\alpha}$     &    $\hat{\beta}$  &      $\hat{\alpha}$        &    $\hat{\beta}$ &  $\hat{\alpha}$   &   $\hat{\beta}$         &    $\hat{\alpha}$    &    $\hat{\beta}$   &   $\hat{\alpha}$     &    $\hat{\beta}$  &      $\hat{\alpha}$        &    $\hat{\beta}$       \\ \hline
&	\multicolumn{8}{c}{Case 2: $(1-\varepsilon)$W(3,5) +$\varepsilon$B(2,20), $\varepsilon$=0.1} & \multicolumn{8}{c}{Case 2: $(1-\varepsilon)$W(3,5) +$\varepsilon$B(2,20), $\varepsilon$=0.2} \\ \hline
& \multicolumn{2}{c}{\scalebox{0.72}{$n=50,q=0.89$}} & \multicolumn{2}{c}{\scalebox{0.72}{$n=100,q=0.87$}} & \multicolumn{2}{c}{\scalebox{0.72}{$n=150,q=0.87$}} & \multicolumn{2}{c}{\scalebox{0.72}{$n=200,q=0.87$}} 	& \multicolumn{2}{c}{\scalebox{0.72}{$n=50,q=0.81$}} & \multicolumn{2}{c}{\scalebox{0.72}{$n=100,q=0.81$}} & \multicolumn{2}{c}{\scalebox{0.72}{$n=150,q=0.81$}} & \multicolumn{2}{c}{\scalebox{0.72}{$n=200,q=0.81$}} \\ \hline
MLqE($\boldsymbol{\theta}$)&3.0249&5.0598&3.0102&5.0515&2.9923&5.0422&2.9935&5.0425&3.0853&5.0898&3.0407&5.0752&3.0137&5.0756&3.0157&5.0760\\
$\widehat{Var}(\hat{\boldsymbol{\theta}})$&0.1834&0.0740&0.0874&0.0351&0.0562&0.0227&0.0429&0.0183&0.2157&0.0858&0.0967&0.0417&0.0636&0.0272&0.0491&0.0209\\
$\widehat{MSE}(\hat{\boldsymbol{\theta}})$&0.1840&0.0776&0.0875&0.0378&0.0562&0.0245&0.0430&0.0201&0.2230&0.0939&0.0984&0.0473&0.0638&0.0329&0.0494&0.0266\\ \hline
MLE($\boldsymbol{\theta}$)&2.4092&5.3579&2.1905&5.3762&2.0784&5.3782&2.0086&5.3820&2.0350&5.7129&1.8471&5.7146&1.7609&5.7196&1.7135&5.7233\\
$\widehat{Var}(\hat{\boldsymbol{\theta}})$&0.4280&0.1964&0.2898&0.1031&0.2456&0.0725&0.2043&0.0569&0.3500&0.3050&0.2018&0.1572&0.1518&0.1118&0.1191&0.0862\\
$\widehat{MSE}(\hat{\boldsymbol{\theta}})$&0.7771&0.3245&0.9450&0.2446&1.0950&0.2155&1.1872&0.2028&1.2812&0.8132&1.5309&0.6679&1.6872&0.6296&1.7742&0.6094\\ \hline
&	\multicolumn{8}{c}{Case 3: $(1-\varepsilon)$W(5,1) +$\varepsilon$B(2,20), $\varepsilon$=0.1} & \multicolumn{8}{c}{Case 3: $(1-\varepsilon)$W(5,1) +$\varepsilon$B(2,20), $\varepsilon$=0.2} \\ \hline
& \multicolumn{2}{c}{\scalebox{0.72}{$n=50,q=0.86$}} & \multicolumn{2}{c}{\scalebox{0.72}{$n=100,q=0.88$}} & \multicolumn{2}{c}{\scalebox{0.72}{$n=150,q=0.88$}} & \multicolumn{2}{c}{\scalebox{0.72}{$n=200,q=0.88$}} 	& \multicolumn{2}{c}{\scalebox{0.72}{$n=50,q=0.8$}} & \multicolumn{2}{c}{\scalebox{0.72}{$n=100,q=0.8$}} & \multicolumn{2}{c}{\scalebox{0.72}{$n=150,q=0.8$}} & \multicolumn{2}{c}{\scalebox{0.72}{$n=200,q=0.8$}} \\ \hline
MLqE($\boldsymbol{\theta}$)&5.4852&0.9989&5.3394&1.0000&5.3218&0.9995&5.3078&0.9992&5.6303&1.0037&5.5981&0.9994&5.5816&0.9987&5.5781&0.9993\\
$\widehat{Var}(\hat{\boldsymbol{\theta}})$&0.5120&0.0010&0.2345&0.0005&0.1585&0.0004&0.1169&0.0003&0.9365&0.0028&0.3323&0.0009&0.1932&0.0004&0.1482&0.0003\\
$\widehat{MSE}(\hat{\boldsymbol{\theta}})$&0.7474&0.0010&0.3497&0.0005&0.2621&0.0004&0.2117&0.0003&1.3338&0.0028&0.6900&0.0009&0.5315&0.0004&0.4824&0.0003\\ \hline
MLE($\boldsymbol{\theta}$)&1.1026&1.5800&1.0635&1.5789&1.0379&1.5827&1.0278&1.5812&0.9505&2.1014&0.9278&2.0925&0.9142&2.0963&0.9075&2.0951\\
$\widehat{Var}(\hat{\boldsymbol{\theta}})$&0.0410&0.0223&0.0243&0.0121&0.0174&0.0082&0.0134&0.0060&0.0208&0.0498&0.0110&0.0220&0.0089&0.0160&0.0071&0.0114\\
$\widehat{MSE}(\hat{\boldsymbol{\theta}})$&15.2304&0.3587&15.5201&0.3473&15.7159&0.3478&15.7914&0.3437&16.4190&1.2628&16.5937&1.2155&16.7024&1.2177&16.7557&1.2105\\ \hline
&	\multicolumn{8}{c}{Case 4: $(1-\varepsilon)$W(10,8) +$\varepsilon$B(2,20), $\varepsilon$=0.1} & \multicolumn{8}{c}{Case 4: $(1-\varepsilon)$W(10,8) +$\varepsilon$B(2,20), $\varepsilon$=0.2} \\ \hline
& \multicolumn{2}{c}{\scalebox{0.72}{$n=50,q=0.82$}} & \multicolumn{2}{c}{\scalebox{0.72}{$n=100,q=0.82$}} & \multicolumn{2}{c}{\scalebox{0.72}{$n=150,q=0.82$}} & \multicolumn{2}{c}{\scalebox{0.72}{$n=200,q=0.8$}} 	& \multicolumn{2}{c}{\scalebox{0.72}{$n=50,q=0.7$}} & \multicolumn{2}{c}{\scalebox{0.72}{$n=100,q=0.7$}} & \multicolumn{2}{c}{\scalebox{0.72}{$n=150,q=0.69$}} & \multicolumn{2}{c}{\scalebox{0.72}{$n=200,q=0.69$}} \\ \hline
MLqE($\boldsymbol{\theta}$)&9.8204&7.9524&9.6892&7.9526&9.6657&7.9558&9.8622&7.9581&10.0918&7.9249&9.8777&7.9264&9.9913&7.9367&9.9574&7.9362\\
$\widehat{Var}(\hat{\boldsymbol{\theta}})$&1.7467&0.0190&0.8708&0.0094&0.5688&0.0064&0.4775&0.0048&2.8331&0.0239&1.3950&0.0122&0.9690&0.0082&0.7346&0.0067\\
$\widehat{MSE}(\hat{\boldsymbol{\theta}})$&1.7789&0.0212&0.9673&0.0116&0.6806&0.0084&0.4965&0.0066&2.8415&0.0295&1.4100&0.0176&0.9691&0.0123&0.7364&0.0108\\ \hline
MLE($\boldsymbol{\theta}$)&5.5193&8.2448&4.4450&8.2947&3.8883&8.3273&3.5565&8.3300&3.5681&8.4284&2.9810&8.4351&2.6693&8.4701&2.5040&8.4876\\
$\widehat{Var}(\hat{\boldsymbol{\theta}})$&5.4952&0.3408&4.0032&0.2034&3.0931&0.1523&2.2446&0.1183&2.5827&0.4989&1.4444&0.2649&0.9102&0.1950&0.6641&0.1617\\
$\widehat{MSE}(\hat{\boldsymbol{\theta}})$&25.5720&0.4007&34.8611&0.2903&40.4455&0.2594&43.7628&0.2272&43.9517&0.6824&50.7104&0.4542&54.6498&0.4160&56.8545&0.3994\\ \hline
			\end{tabular}
	}
	\label{caseswbtg}
\end{table}

When the Tables \ref{caseswt}-\ref{caseswbtg} are examined;
\begin{itemize}
	\item It is generally observed that the modeling capability of $\log_q$($f_0$=Weibull($\alpha_0,\beta_0$)) from MLqE is better than that of $\log$($f_0$=Weibull($\alpha_0,\beta_0$)) from MLE.
	\item In MLqE, it is observed especially that the estimated values of shape parameter $\alpha$ can be significantly small for some tried designs.
	\item  As it is logical to expect, $\widehat{MSE}(\hat{\boldsymbol{\theta}})$ of MLE cannot get the smaller values when the sample size $n$ is increased for some designs of contamination; because $n=n_0+n_1$  increases sample size $n_1$ of a contamination, which makes more contaminated data set when it is compared with small sample sizes, such as $n=50,100,$ etc.  
	In addition, $\log(f)$ cannot model well. However, for some designs of contamination, $\widehat{MSE}(\hat{\boldsymbol{\theta}})$ of $\log(f)$ can be smaller than that of $\log_q(f)$, because the used objective function is an important indicator for modeling capability. For example, $(1-\varepsilon)$Weibull($\alpha_0=1$,$\beta_0=5$) + $\varepsilon$ Weibull($\alpha_1=2$,$\beta_1=8$).
	\item  $\widehat{MSE}(\hat{\boldsymbol{\theta}})$ were used to compare the performance of MLqE and MLE. Depending on the structure of contamination, the unbiasedness of estimators $\hat{\boldsymbol{\theta}}$ obtained by MLqE were examined by simulation. However, the contamination structure, i.e. the selected $f_1$ distribution and its parameter values and contamination rate $\varepsilon$ can lead to observe the biasedness in Tables \ref{caseswt}-\ref{caseswbtg}. In addition, in the some cases of MLqE, $\widehat{MSE}(\hat{{\alpha}}) \approx \widehat{Var}(\hat{{\alpha}})$ and $\widehat{MSE}(\hat{{\beta}}) \approx \widehat{Var}(\hat{{\beta}})$ for the shape $\alpha$ and the scale $\beta$ parameters, respectively. 
	\item When the sample size is increased, the values of $\widehat{MSE}(\hat{{\alpha}})$ and $\widehat{MSE}(\hat{{\beta}})$ for MLqE decreased, as expected. However, this is not observed by the results from MLE. The structure of inliers and outliers affected partially the estimation of shape parameter $\alpha$.
	\item 	For some contamination and the values of parameters $\alpha_0$ and $\beta_0$ for Weibull, the values of $q$ can differ for different sample sizes. In addition, $q$ differs for different contamination structure and Weibull($\alpha_0 , \beta_0$), as expected. The value of $q$ is chosen until the smallest values of $\widehat{MSE}(\hat{\boldsymbol{\theta}})$ are obtained. If the constant $q$ with MLqE provides the smallest values for $\widehat{MSE}(\hat{\boldsymbol{\theta}})$, then it is not possible to find another value for $q$ which will give the smallest values for $\widehat{MSE}(\hat{\boldsymbol{\theta}})$.
\end{itemize}

In overall assesment about robustness and modeling, the modeling is a procedure applied the finite sample size. The robustness of score function is based on the limit values at point zero and inifinity.  Since the main idea is based on the modeling capability of $\log_q(f)$, using finiteness of score functions for implying robustness is not necessary (see Table \ref{limitvalueszero} for $q>1$ and $\alpha \geq 1$). When we look at the results of simulation, we have a result in which $q>1$ and $\alpha \geq 1$. When we use p.d. $f(x;\hat{\boldsymbol{\theta}})$ and $q>1$, the values of $w_q=f(x;\hat{\boldsymbol{\theta}})^{1-q}$ are bigger than 1. $\log_q(f)$ gives an advantageous for us to have $w \in [0,\infty]$ and so it is flexible to perform an efficient modeling. 

\subsection{Real data application}

Real data sets are applied to test the performance of MLE and MLqE methods. The estimates of $\hat{\boldsymbol{\theta}}$ and Kolmogorov-Smirnov (KS) test statistics are provided by Tables \ref{bioconduct}-\ref{15glassfibresp}. We give the plots of c.d. and p.d. functions (CDF and PDF) in Figures \ref{withoutinoutex1f}-\ref{withoutinoutex2f} for illustrative purpose and make a comparison among the fitting competence with estimates from MLE and MLqE. The  variance-covariance of M-estimators \cite{Hub81,God60,GodTh78}  are mainly based on the Taylor expansion of score function around the true value of parameter. Even if M-estimation method provides a family for estimation methodology, a tool from information geometry  was provided by \cite{CanKor18}  as well. The Taylor expansion approach \cite{Hub81} is also used by \cite{MLqEGamma} to determine the value of $q$. Instead of using variance-covariance based on Taylor expansion as a rough approach for score function based on $\log_q$, the $p$-value of KS test statistic should be preferred.  The different $q$-values are tried until the highest $p$-value of KS test statistic from c.d. function of Weibull is reached. This approach is also supported by Figures \ref{withoutinoutex1f}-\ref{withoutinoutex2f} which depict the fitting of c.d. and p.d. functions of Weibull distribution with  $\hat{\alpha}$ and $\hat{\beta}$ (\cite{Canorder20}). 

It is difficult to know the nature of reality and also knowing modality and bimodality in an empirical distribution or a real data set. We have to assume a parametric model and estimate the parameters of underlying distribution as much as we can do. We use two real data sets which are modeled by objective functions $\log_q\big[f(x;\alpha,\beta)\big]$ and $\log\big[f(x;\alpha,\beta)\big]$. Contaminations are applied into real data sets. Thus, we will test the performance of MLE and MLqE when the different types of contamination exist in real data sets. Three types of contamination to the real data set were performed. These are given by the following items:
\begin{enumerate}
	\item Inliers: $\alpha_1=5;\beta_1=10;n_1=100$ and $\alpha_1=\min(x)+0.5$; $\beta_1=\max(x)-0.5;n_1=10;$ for examples 1 and 2, respectively. $x_1=randuni(n_1,\alpha_1,\beta_1)$; $x=[x; x_1$];
	
	randuni is a function written in MATLAB2013a to generate uniform artificial data set from the interval $[a,b]$.  $\min(x)$ and $\max(x)$ represent the minimum and maximum value of observations, respectively. 
	\item Outliers: $x=[x; 2\max(x);3\max(x);4\max(x);5\max(x)]$. 
	\item Both of inliers and outliers: $x=[x; 2\max(x);3\max(x);4\max(x);5\max(x); x_1]$;
\end{enumerate}

\subsubsection{Real data application: Example 1}
This section consists of the numerical example for the application of real data set. R Version 4.0.2 and some packages such as source("http://bioconductor.org/biocLite.R") biocLite("GEOquery") and require(GEOquery) are used to reach the real data set. We use the real data set from "test\$myMean". The sample size $n$ for this data set is 6136. The value of tuning constant is chosen until the highest $p$-value of KS test statistc is obtained when $q$ is $0.85$, which means that the best values for estimates of parameters can obtained. 

\begin{table}[htbp]
	\centering
	\caption{The estimates of parameters $\alpha$ and $\beta$ via MLE and MLqE methods}
	\scalebox{0.72}
	{
		\begin{tabular}{c|cccccccc}
			& \multicolumn{2}{l}{Without contamination} & \multicolumn{2}{l}{~~~~~~~~~~~~With inliers} & \multicolumn{2}{l}{~~~~~~~~~~~~With outliers} & \multicolumn{2}{l}{~~~~~~~~~~~~With both} \\ \hline
			$\hat{\boldsymbol{\theta}}$	&   $\hat{\alpha}$   & $\hat{\beta}$    &   $\hat{\alpha}$    &    $\hat{\beta}$   &   $\hat{\alpha}$  & $\hat{\beta}$  &  $\hat{\alpha}$  & $\hat{\beta}$  \\ \hline
			MLE($\boldsymbol{\theta}$)& 3.5427(.00045)  &7.9935(.00038) &3.5623(.00045) &7.9937(.00038) &2.5596(.00065) &7.9907(.00044)&2.5869(.00032)&8.0862 (.00053)       \\
			MLqE($\boldsymbol{\theta}$)	& 3.9137(.00065)&7.9942(.00044)&3.9382(.00064)&	7.9985(.00043)&3.9133(.00033)&7.9940(.00054)&	3.9398(.00064)&	7.9961(.00043) \\ \hline                
		\end{tabular}
	}
	\label{bioconduct}
\end{table}
Table \ref{bioconduct} shows that MLqE can be robust to inliers and ouliers at each case. The estimates of scale parameter $\beta$ can be similar to each other at each case for MLE and MLqE methods. MLE and MLqE with inliers cannot be more different than that of without the contamination case. However, when MLE and MLqE methods are compared for the case of outliers, it is seen that the shape parameter $\alpha$ obtained by MLE is very sensitive outlier. For the real data set, the results show that MLqE is robust to outliers, because the score functions derived from $\log_q$  for parameters $\alpha$ and $\beta$ are finite. For this real data set, the sensivity of estimates of $\beta$ from MLE cannot be more, however when the estimates of $\beta$ is compared with that of MLqE, MLqE is insensitive to contamination in all of cases. The estimates of $\alpha$ from MLqE can be insensitive to contamination in all of cases. Especially, the estimates of $\beta$ from MLqE for both contamination are insensitivite when compared with that of MLE. 

\begin{table}[htbp]
	\centering
	\caption{The $p$-value of KS test statistics computed by the estimates of parameters $\alpha$ and $\beta$ via MLE and MLqE }
	\scalebox{0.72}
	{
		\begin{tabular}{c|cccc}
			& Without contamination &With inliers & With outliers &With both \\ \hline
			$\hat{\boldsymbol{\theta}}$ &$p$-value &$p$-value &$p$-value &$p$-value \\ \hline
			MLE($\boldsymbol{\theta}$)& 4.3225e-07&	2.3326e-07&	6.8452e-76&	2.6225e-67 \\
			MLqE($\boldsymbol{\theta}$)& 1.2516e-04	&1.2895e-04	&1.2767e-04	&1.3000e-04 \\ \hline           
		\end{tabular}
	}
	\label{bioconductp}
\end{table}

\begin{figure}[htbp]
	\centering
	\subfigure{\label{fig:cdfwithoutinlieroutlierex1}\includegraphics[width=0.42\textwidth]{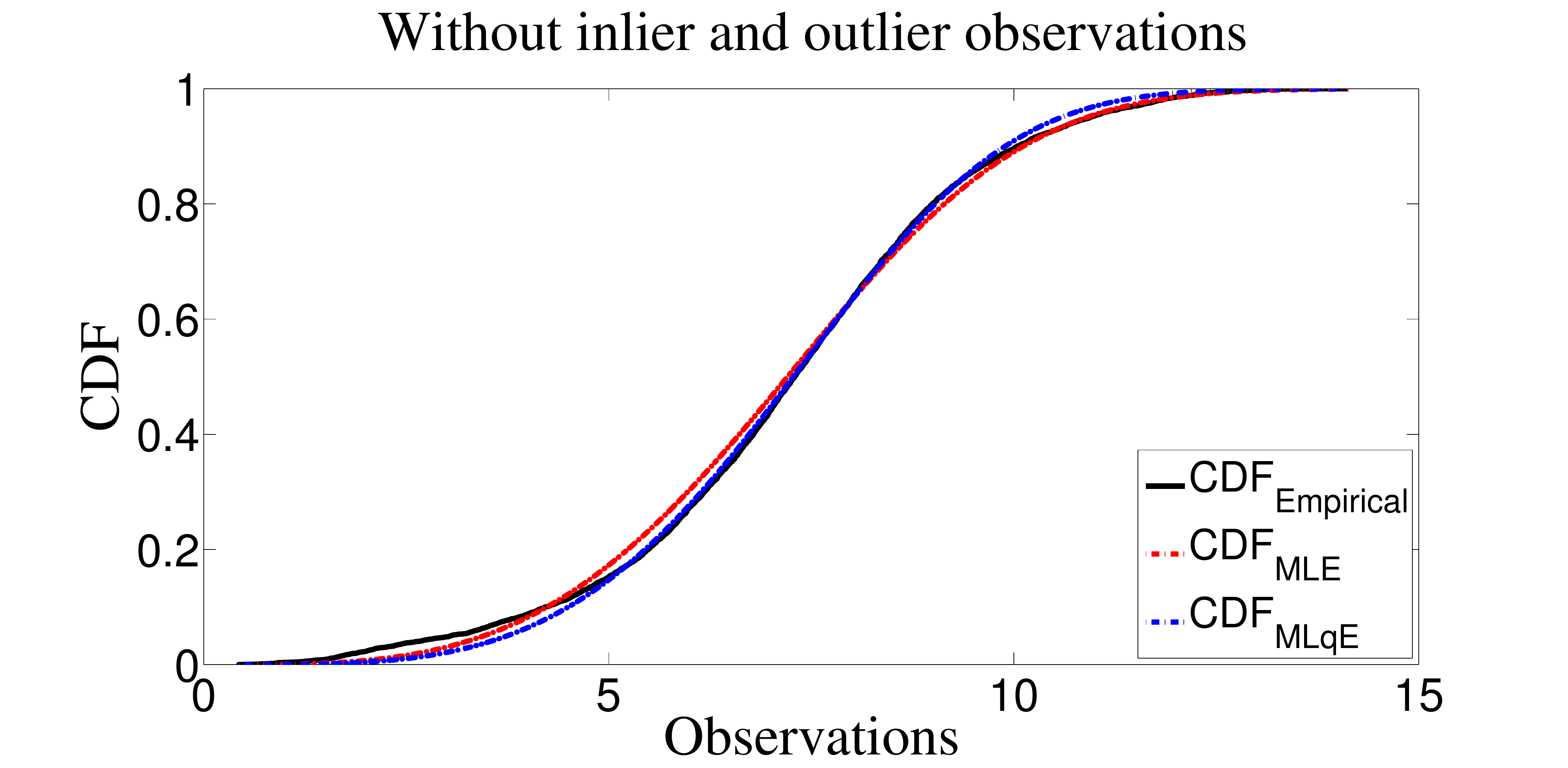}}
	\subfigure{\label{fig:pdfwithoutinlieroutlierex1}\includegraphics[width=0.42\textwidth]{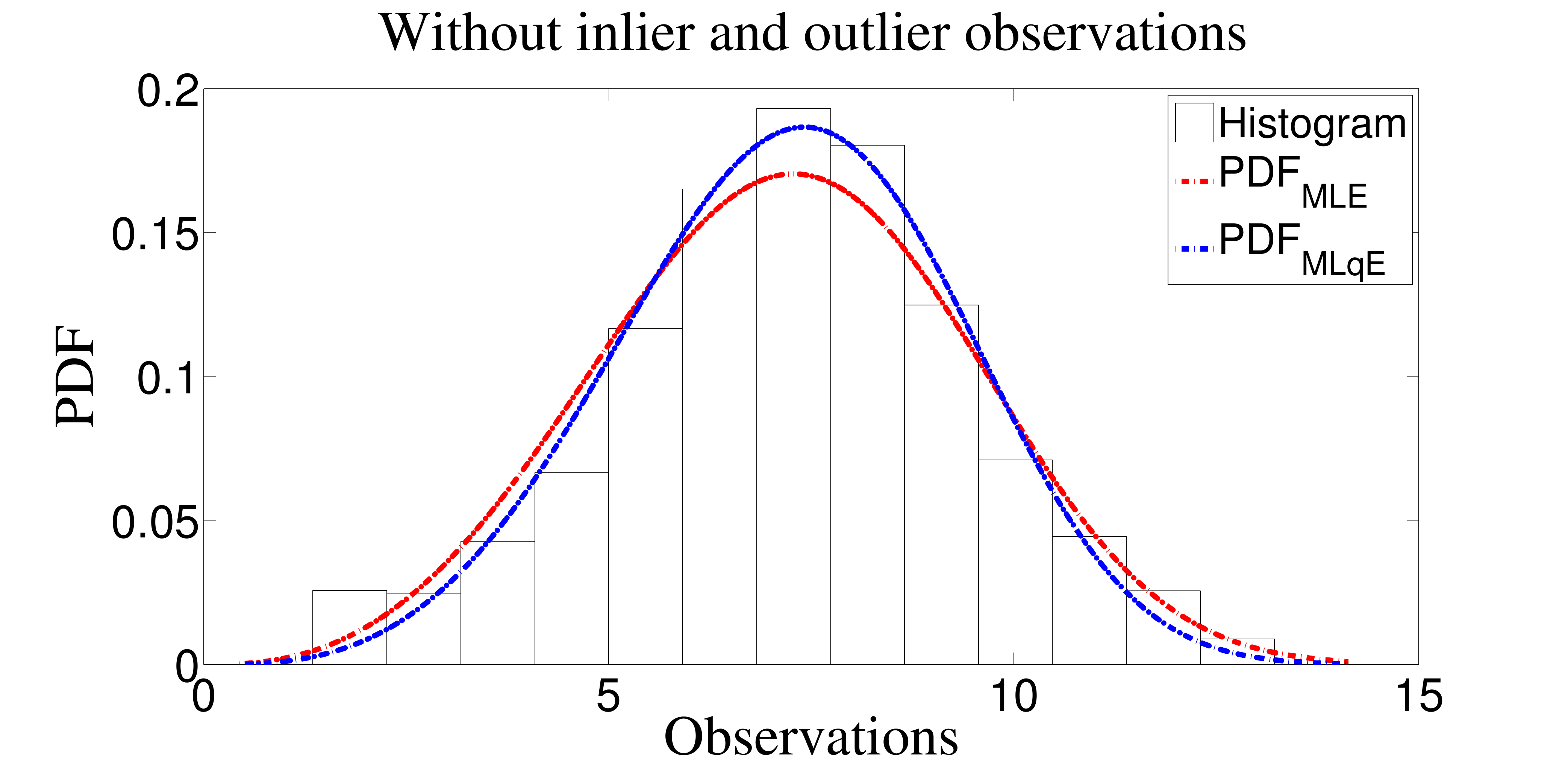}}
	\caption{CDF, PDF and histogram when real data set does not have inliers and outliers for bioconductor test data(color online)}
	\label{withoutinoutex1f}
\end{figure}
\begin{figure}[htbp]
	\centering
	\subfigure{\label{fig:cdfwithinlierex1}\includegraphics[width=0.42\textwidth]{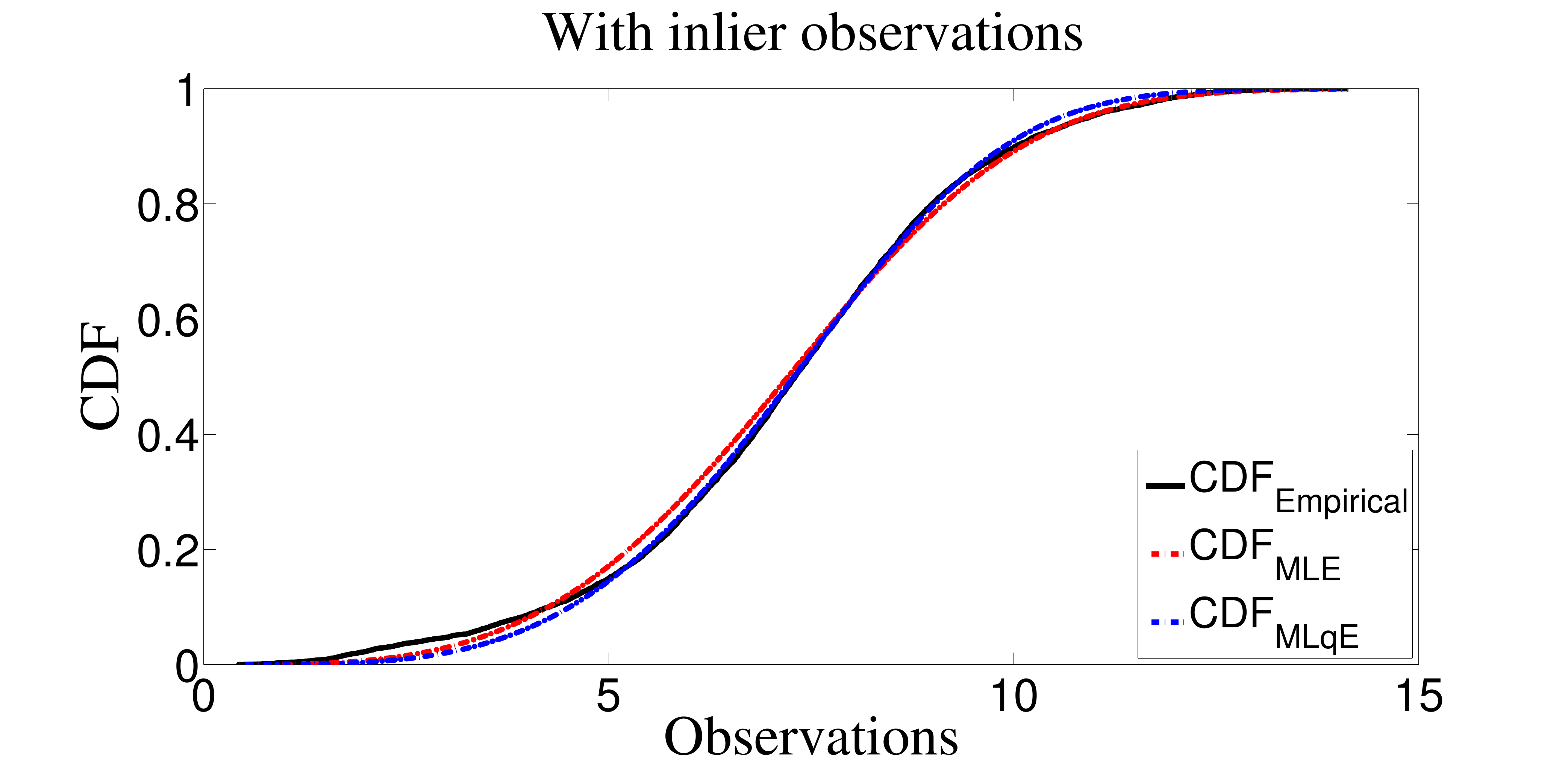}}
	\subfigure{\label{fig:pdfwithinlierex1}\includegraphics[width=0.42\textwidth]{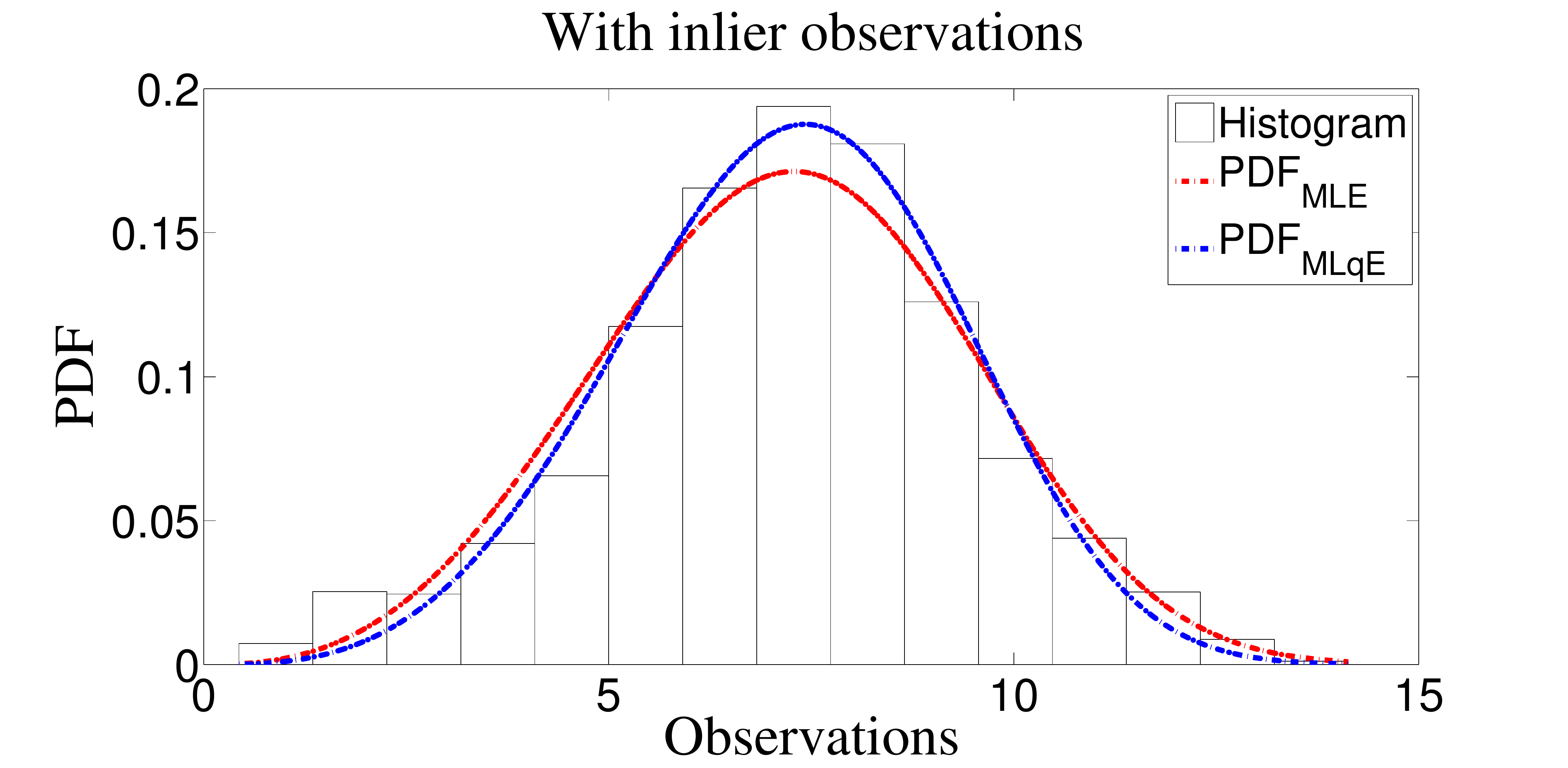}}
	\caption{CDF, PDF and histogram when real data set has inliers for bioconductor test data (color online)}
	\label{withinex1f}
\end{figure}
\begin{figure}[htbp]
	\centering
	\subfigure{\label{fig:cdfwithoutlierex1}\includegraphics[width=0.42\textwidth]{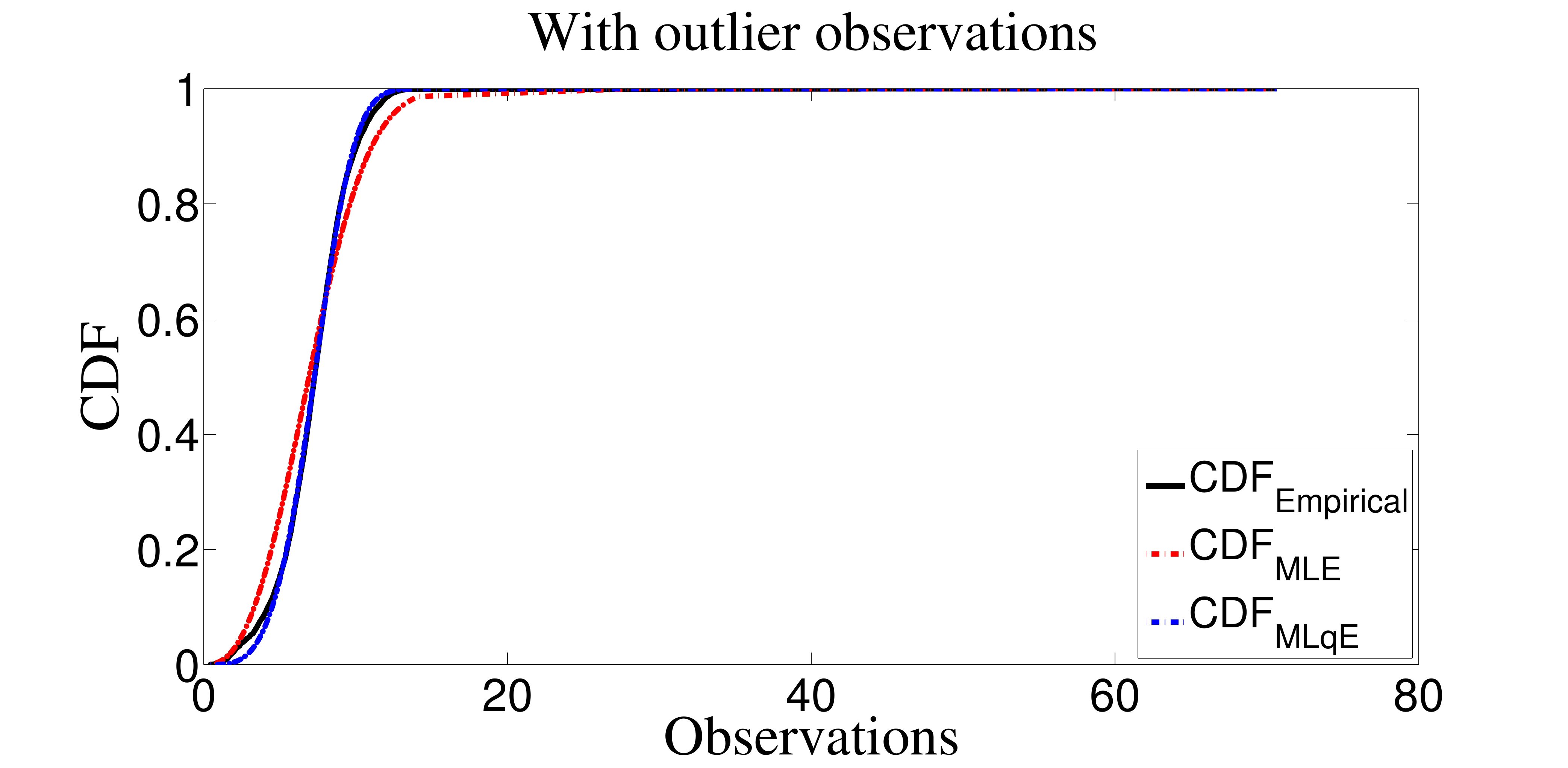}}
	\subfigure{\label{fig:pdfwithoutlierex1}\includegraphics[width=0.42\textwidth]{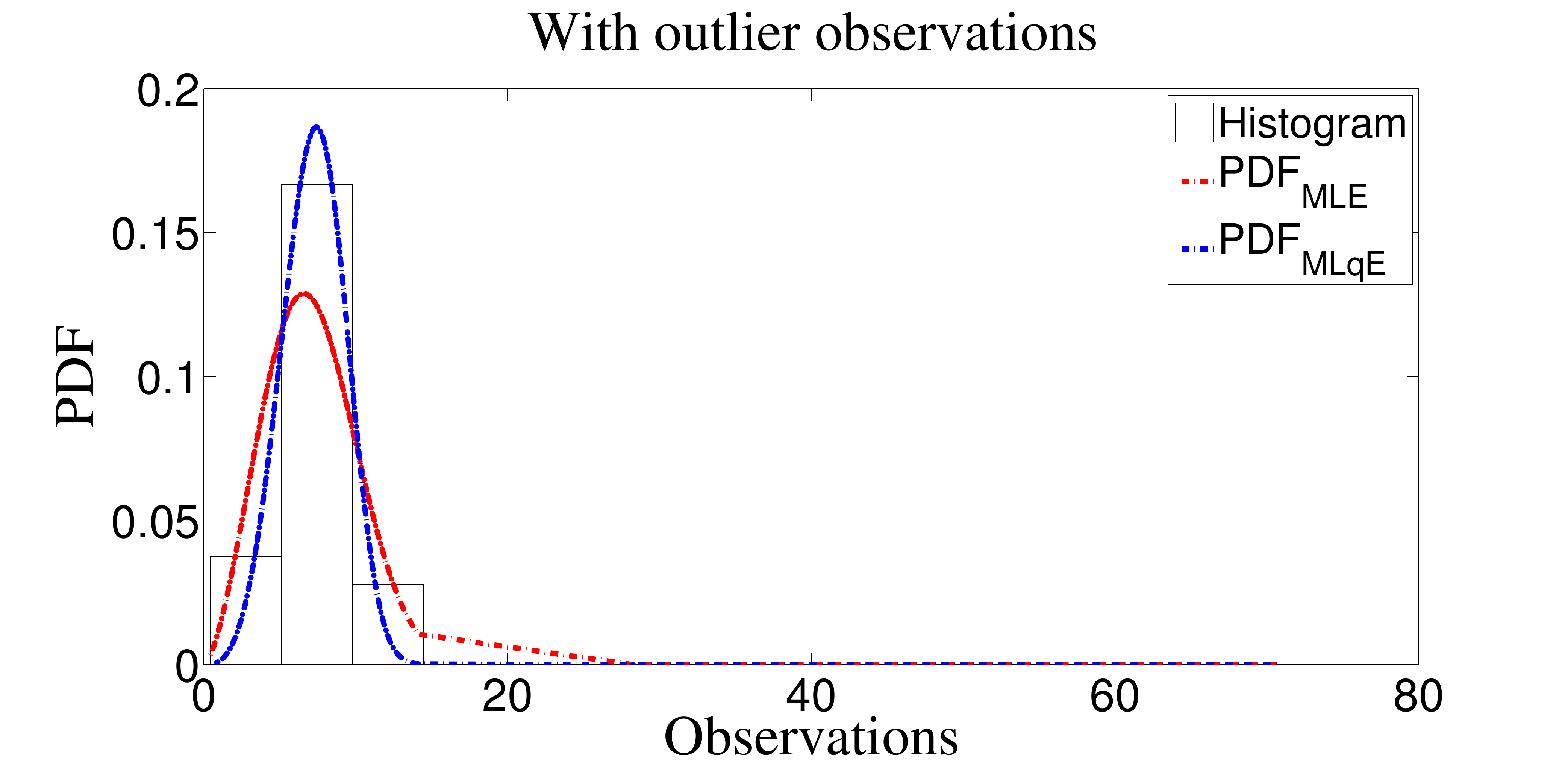}}
	\caption{CDF, PDF and histogram when real data set has outliers for bioconductor test data(color online)}
	\label{withoutex1f}
\end{figure}
\begin{figure}[htbp]
	\centering
	\subfigure{\label{fig:cdfwithinlieroutlierex1}\includegraphics[width=0.42\textwidth]{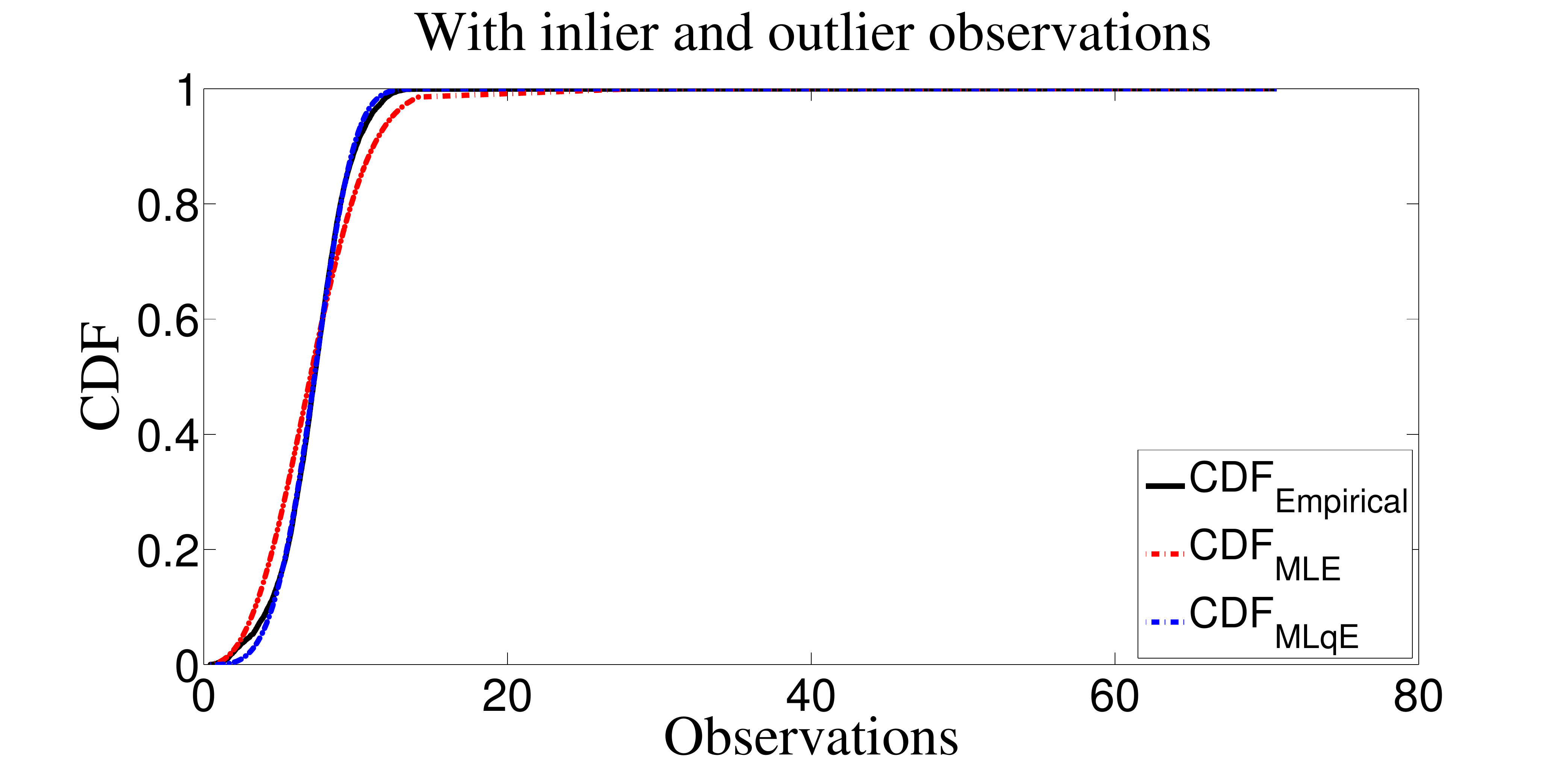}}
	\subfigure{\label{fig:pdfwithinlieroutlierex1}\includegraphics[width=0.42\textwidth]{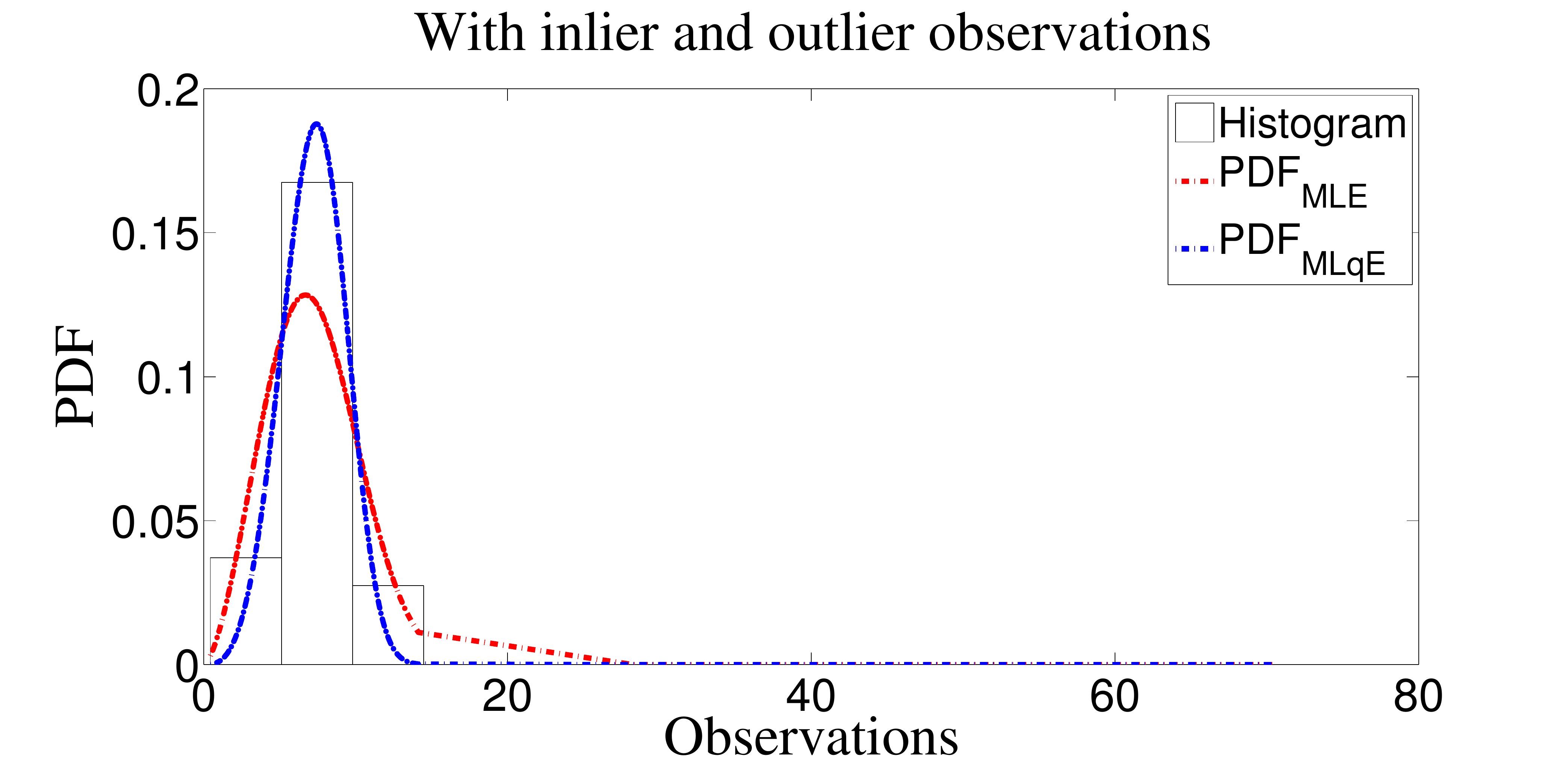}}
	\caption{CDF, PDF and histogram when real data set has inliers and outliers for bioconductor test data(color online)}
	\label{withinoutex1f}
\end{figure}
Table \ref{bioconductp} shows the $p$-value of KS test statistics. According to $p$-values of KS test statistic, there does not exist enough evidence to accept the null hypothesis $H_0$ which shows that the real data set is a member of Weibull distribution. However, note that if the significance level of test statistic is chosen to be $10^{-5}$, then Weibull with MLqE($\boldsymbol{\theta}$) provides sufficient evidence not to reject the null hypothesis $H_0$.  Let us focus on the $p$-values instead of considering whether or not the data set does really come from Weibull distribution with parameters $\alpha$ and $\beta$ which were estimated by MLqE and MLE methods. The $p$-values of KS test statistic obtained from two cases which are outliers and both contaminations are very small when they are compared with that of MLqE, which shows that adding outliers to real data set makes more far from Weibull distribution with the estimates obtained from MLE. However, the $p$-values of KS test statistic with estimates of MLqE($\boldsymbol{\theta}$) did not differ much for three cases which are with inliers, outliers and both contaminations.  Let us focus on the comparison of MLE for without contamination and with inliers cases, $p$-values from $4.3225e-07$ to $2.3326e-07$ tend to be near to zero due to the fact that adding inliers affects the estimates of MLE. In MLqE, such tendency being zero is not observed because of robustness property of MLqE.  

\subsubsection{Real data application: Example 2}
The data are the strengths of 1.5 cm glass fibres measured by the National Physical Laboratory, England. The sample size is $n=63$ (\cite{glass15data}). The distributions and references therein \cite{alphapowerglass15} are used to model the data set. The estimates from Weibull with $\log_q(f)$ as an objective function give the $p$-value which is bigger than that of distributions in (\cite{alphapowerglass15}). Note that the parametric model and objective functions should be tried for the case in which we can improve the modeling competence.

\begin{table}[htbp]
	\centering
	\caption{The estimates of parameters $\alpha$ and $\beta$ via MLE and MLqE methods}
	\scalebox{0.72}
	{
		\begin{tabular}{c|cccccccc}
			& \multicolumn{2}{l}{Without contamination} & \multicolumn{2}{l}{~~~~~~~~~~~~With inliers} & \multicolumn{2}{l}{~~~~~~~~~~~~With outliers} & \multicolumn{2}{l}{~~~~~~~~~~~~With both} \\ \hline
			$\hat{\boldsymbol{\theta}}$	&   $\hat{\alpha}$   &   $\hat{\beta}$         &    $\hat{\alpha}$    &    $\hat{\beta}$   &   $\hat{\alpha}$     &    $\hat{\beta}$  &      $\hat{\alpha}$        &    $\hat{\beta}$         \\ \hline
			MLE($\boldsymbol{\theta}$)&5.7762(.07149)&1.6275(.00471)&5.8626(.06262)&1.6164(.00398)&1.4587(.01697)&2.1153(.02278)&1.4961(.01515)&  2.0564(.01879)          \\
			MLqE($\boldsymbol{\theta}$)	&7.5423(.10755)&1.6401(.00393)&7.6146(.09359)&1.6325(.00334)&7.5136(.10077)&1.6392(.00371)&	7.5692(.08819)  &  	1.6231(.00317) \\ \hline                
		\end{tabular}
	}
	\label{15glassfibres}
\end{table}
Table \ref{15glassfibres} shows that MLqE can be robust to inliers and ouliers at each case. For all of scenarios of contaminations which are inliers, outliers and both of them, the estimates  $\hat{\alpha}$ and $\hat{\beta}$ from MLE are sensitive to contamination. However, such sensivity has not been observed at MLqE. When outliers and both contaminations are examined, it is observed that the estimates from MLE are very sensitive to contamination schemas. However, the estimates from MLqE can have similar values at which there are not contaminations into data set, which shows that MLqE are robust.  

\begin{table}[htbp]
	\centering
	\caption{The $p$-value of KS test statistics computed by the estimates of parameters $\alpha$ and $\beta$ via MLE and MLqE }
	\scalebox{0.72}
	{
		\begin{tabular}{c|cccc}
			& Without contamination &With inliers & With outliers &With both \\ \hline
			$\hat{\boldsymbol{\theta}}$ &$p$-value &$p$-value &$p$-value &$p$-value \\ \hline
			MLE($\boldsymbol{\theta}$)& 0.0936&	0.0740&	9.6151e-07&	1.1024e-07 \\
			MLqE($\boldsymbol{\theta}$)&0.7283&	0.4501&	0.4504&	0.2700 \\ \hline           
		\end{tabular}
	}
	\label{15glassfibresp}
\end{table}
Table \ref{15glassfibresp} shows the $p$-value of KS test statistics. According to $p$-values of Kolmogorov-Smirnov (KS) as a goodness of fit test, there exists enough evidence to accept the null hypothesis $H_0$ which shows that the real data set is a member of Weibull distribution. MLqE and MLE depend on  $\log_q$ and $\log$ functions respectively. So the importance of the used objective function has been observed when we make a comparsion between the $p$-values which are $0.0936$ and $0.7283$ of MLE and MLqE respectively. The value of tuning constant is chosen until the highest $p$-value of KS test statistcs is obtained when $q$ is $0.8$. Even though there is no strict changing of the values of estimates from MLqE, the $p$-values of KS test statistics from MLqE go to lower values. This is due to the definition of KS test statistic which uses values of $x$ (see codes for the computation of $p$-value in Appendix \ref{comppval}). Figures \ref{withoutinoutex2f}-\ref{withinoutex2f} illustrate that there exist a good fitting by MLqE, i.e. CDF$_{\text{MLqE}}$ and PDF$_{\text{MLqE}}$, as supported by $p$-values of KS test statistics in Table \ref{15glassfibresp}.

\begin{figure}[htbp]
	\centering
	\subfigure{\label{fig:cdfwithoutinlieroutlierex2}\includegraphics[width=0.42\textwidth]{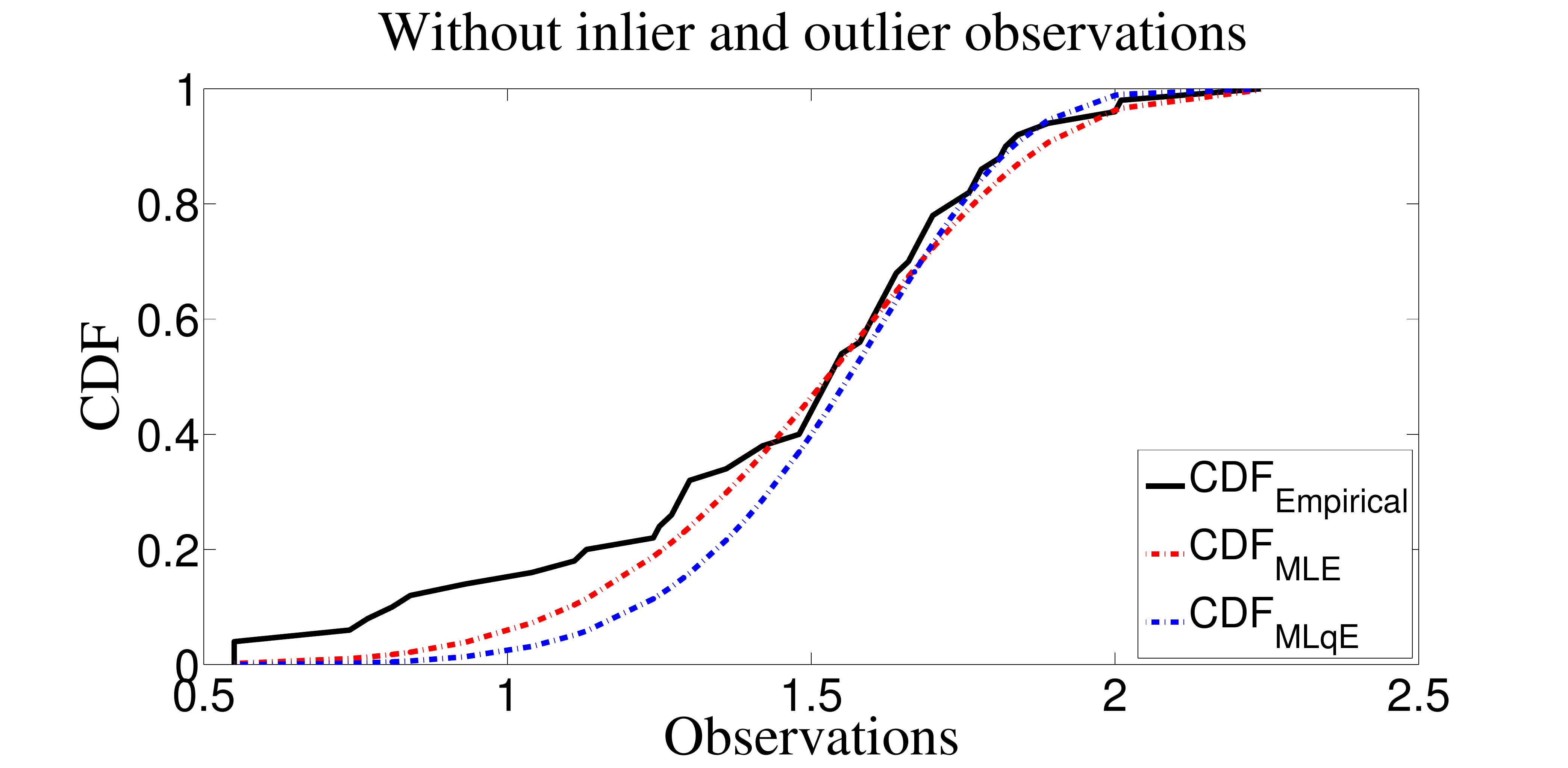}}
	\subfigure{\label{fig:pdfwithoutinlieroutlierex2}\includegraphics[width=0.42\textwidth]{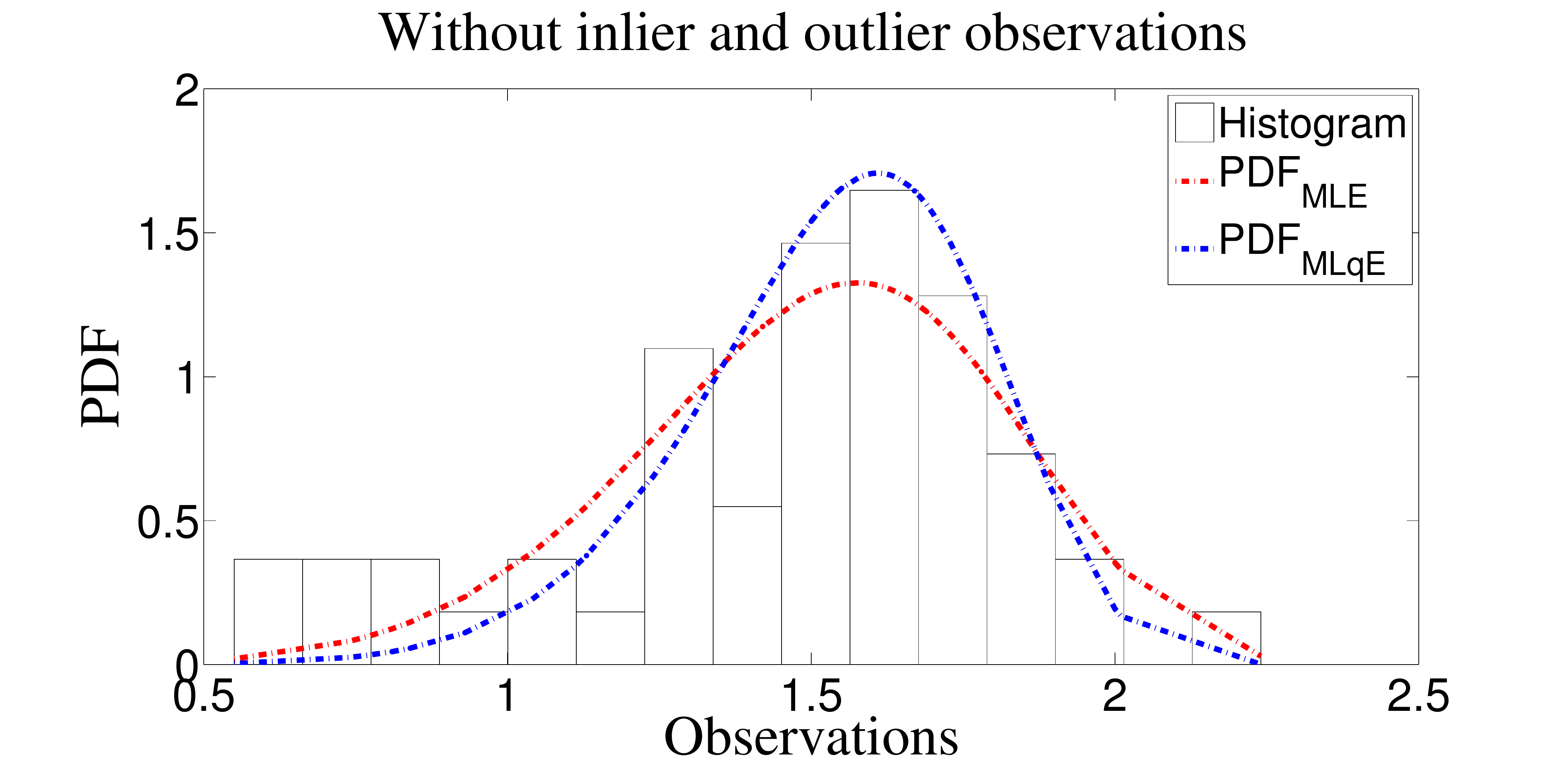}}
	\caption{CDF, PDF and histogram when real data set does not have inliers and outliers for 1.5 cm glass fibres(color online)}
	\label{withoutinoutex2f}
\end{figure}
\begin{figure}[htbp]
	\centering
	\subfigure{\label{fig:cdfwithinlierex2}\includegraphics[width=0.42\textwidth]{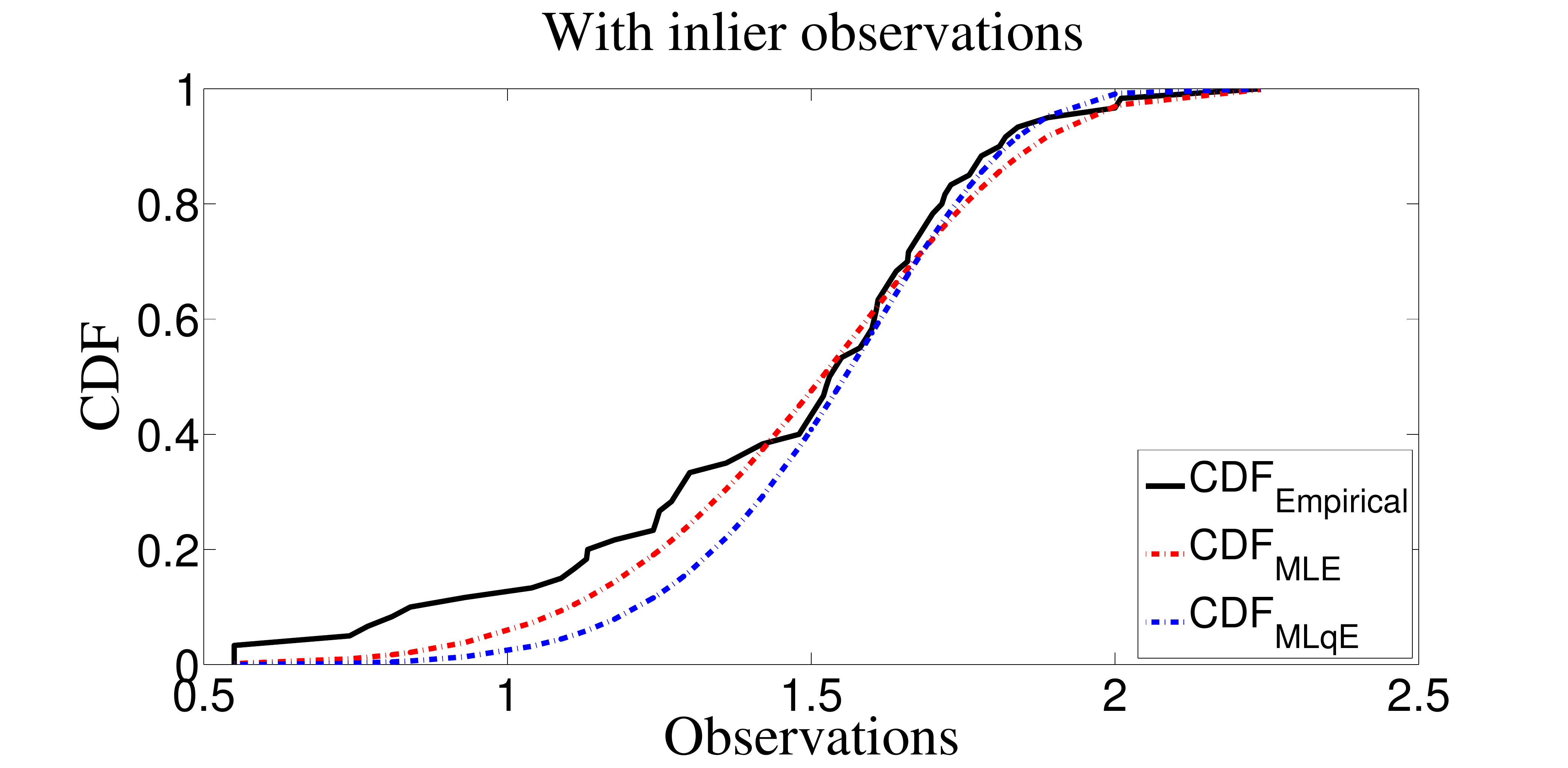}}
	\subfigure{\label{fig:pdfwithinlierex2}\includegraphics[width=0.42\textwidth]{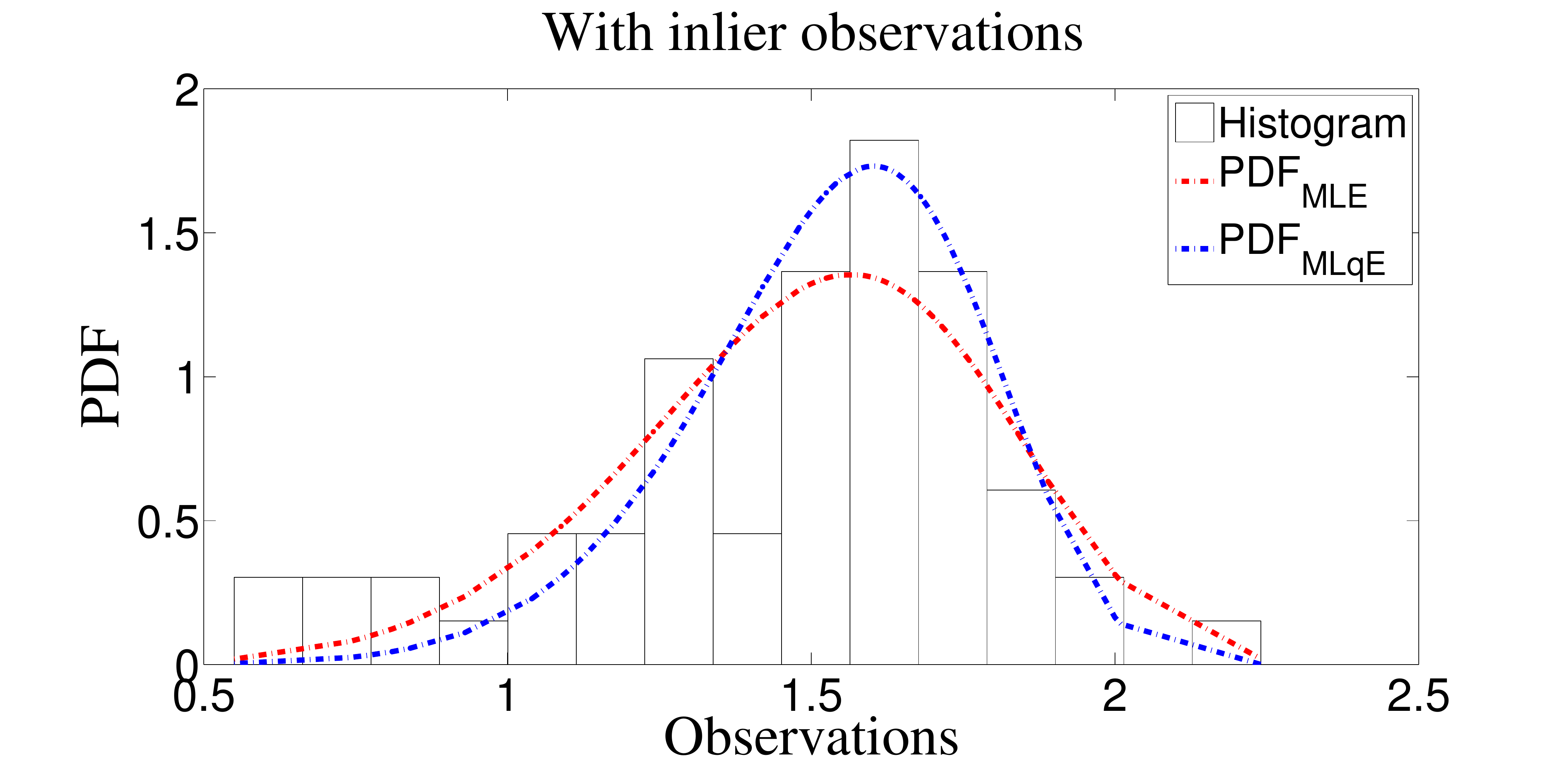}}
	\caption{CDF, PDF and histogram when real data set has inliers for 1.5 cm glass fibres(color online)}
	\label{withinex2f}
\end{figure}
\begin{figure}[htbp]
	\centering
	\subfigure{\label{fig:cdfwithoutlierex2}\includegraphics[width=0.42\textwidth]{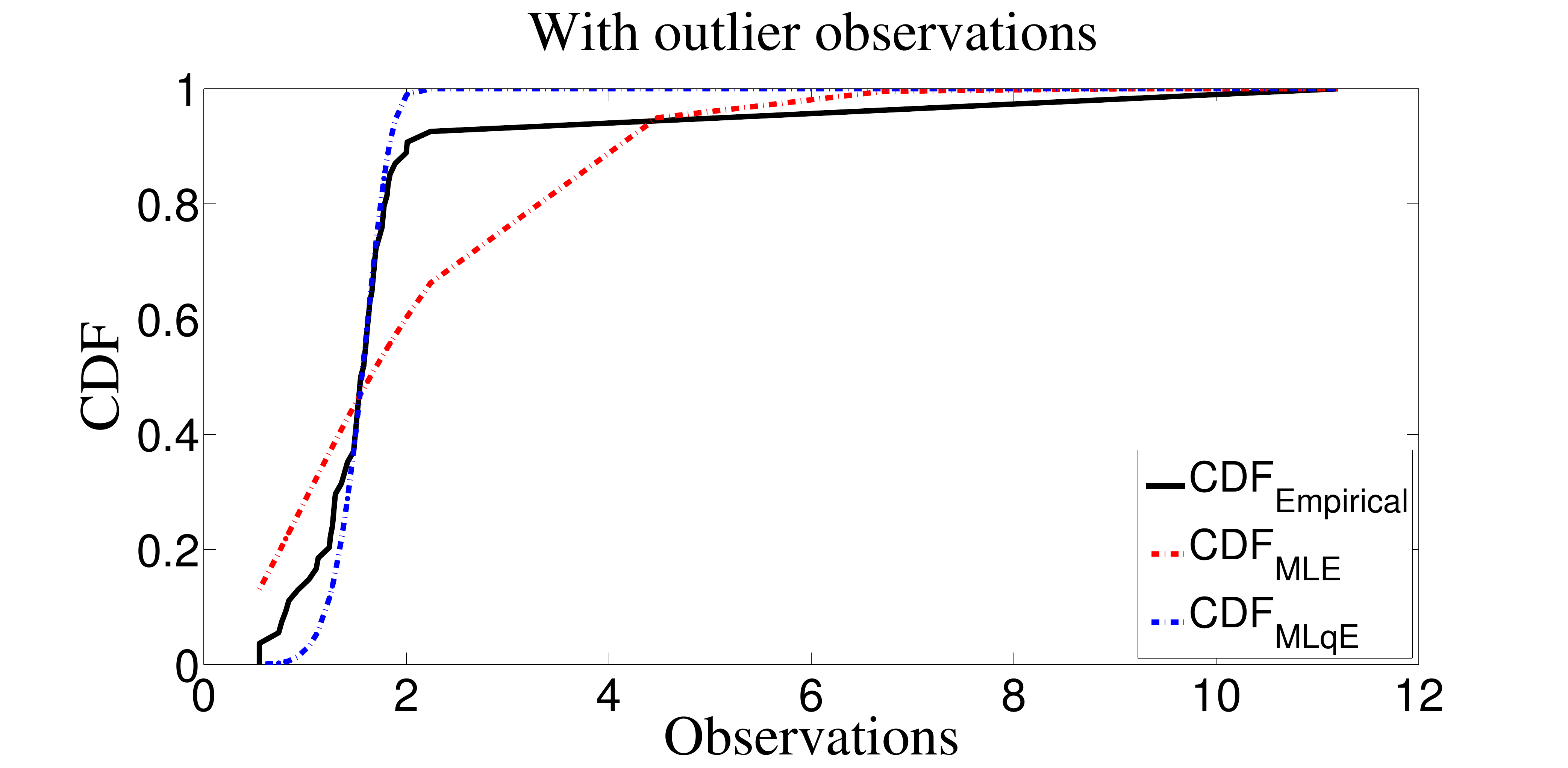}}
	\subfigure{\label{fig:pdfwithoutlierex2}\includegraphics[width=0.42\textwidth]{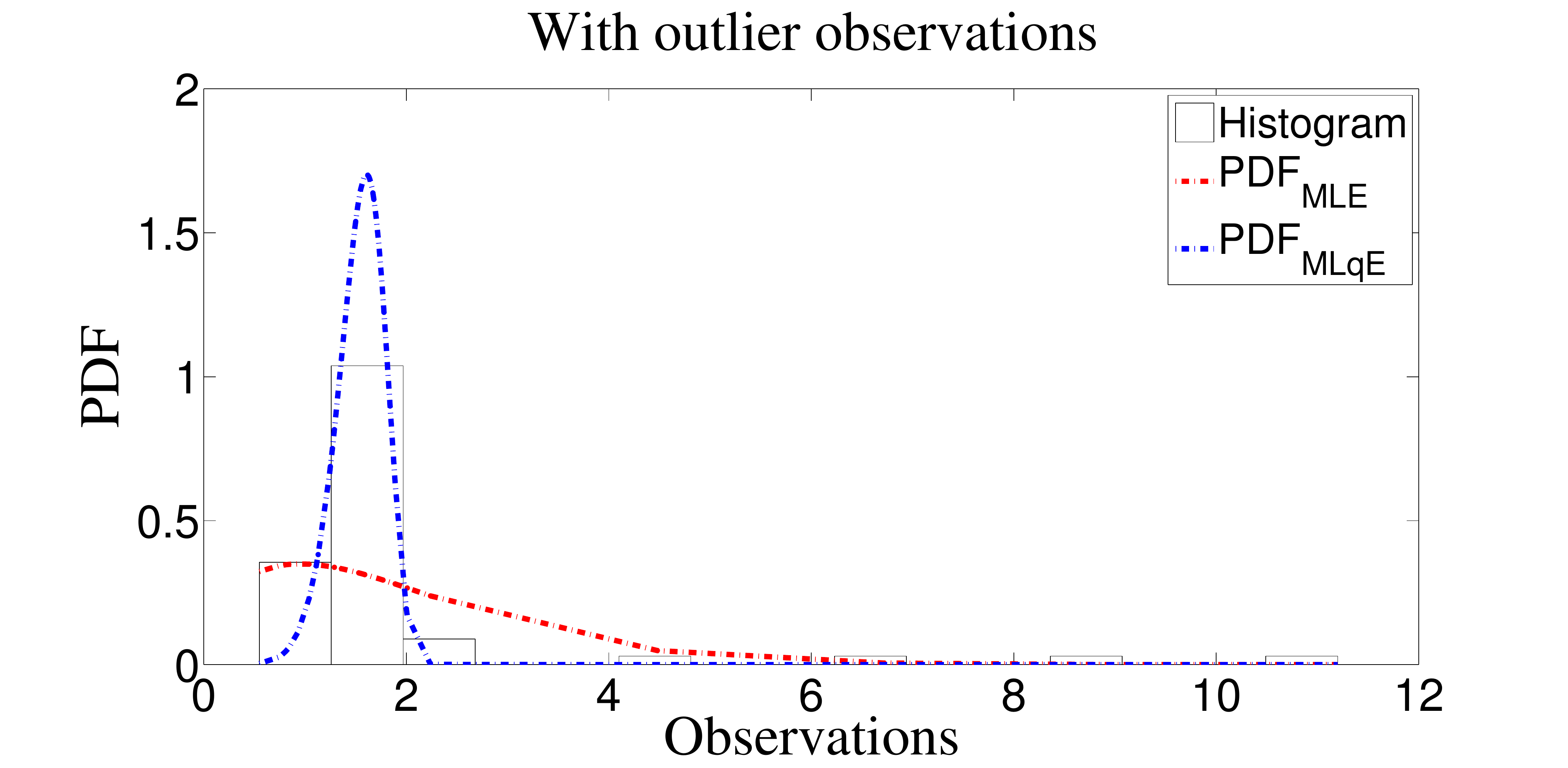}}
	\caption{CDF, PDF and histogram when real data set has outliers for 1.5 cm glass fibres(color online)}
	\label{withoutex2f}
\end{figure}
\begin{figure}[htbp]
	\centering
	\subfigure{\label{fig:cdfwithinlieroutlierex2}\includegraphics[width=0.42\textwidth]{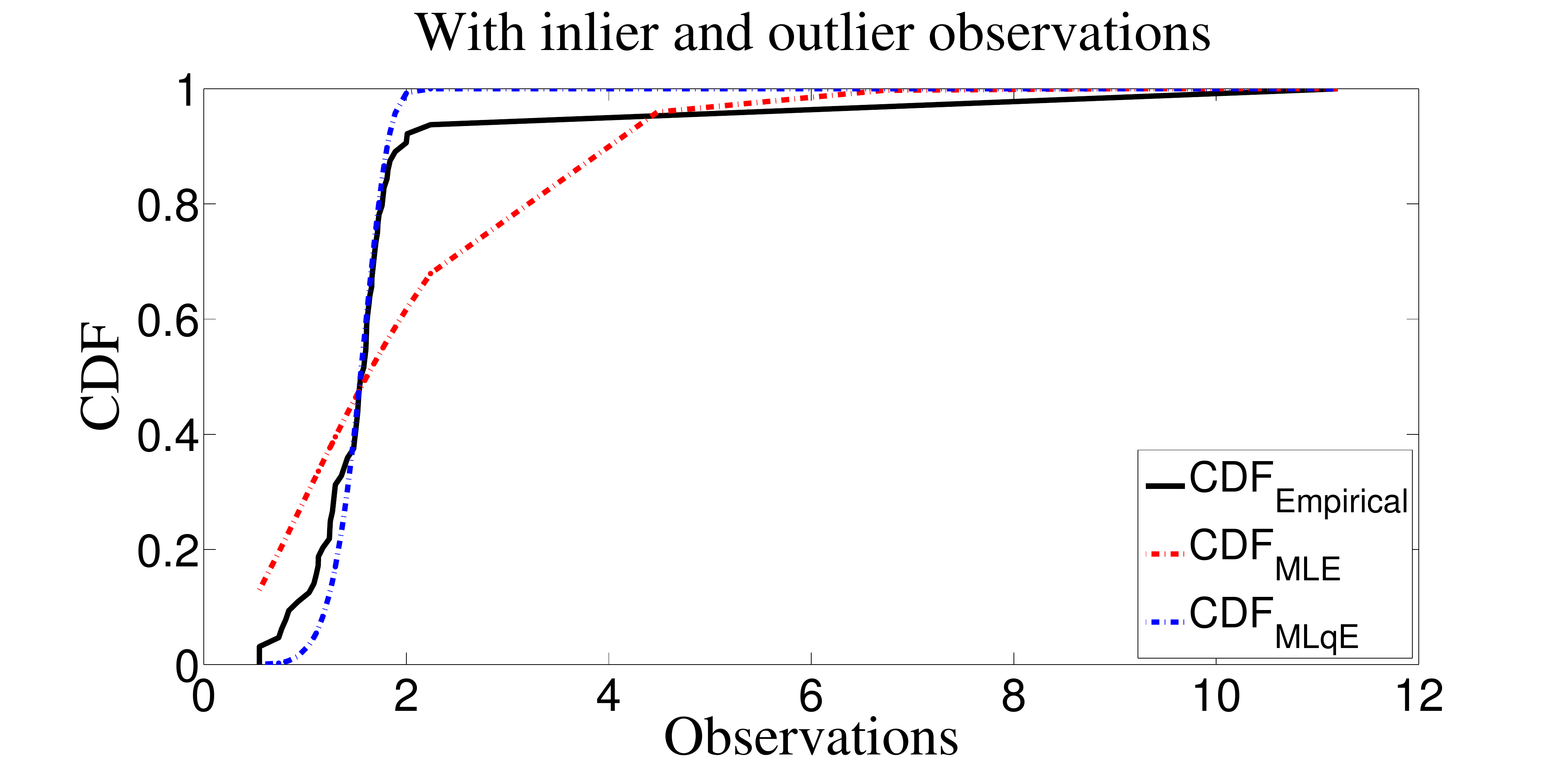}}
	\subfigure{\label{fig:pdfwithinlieroutlierex2}\includegraphics[width=0.42\textwidth]{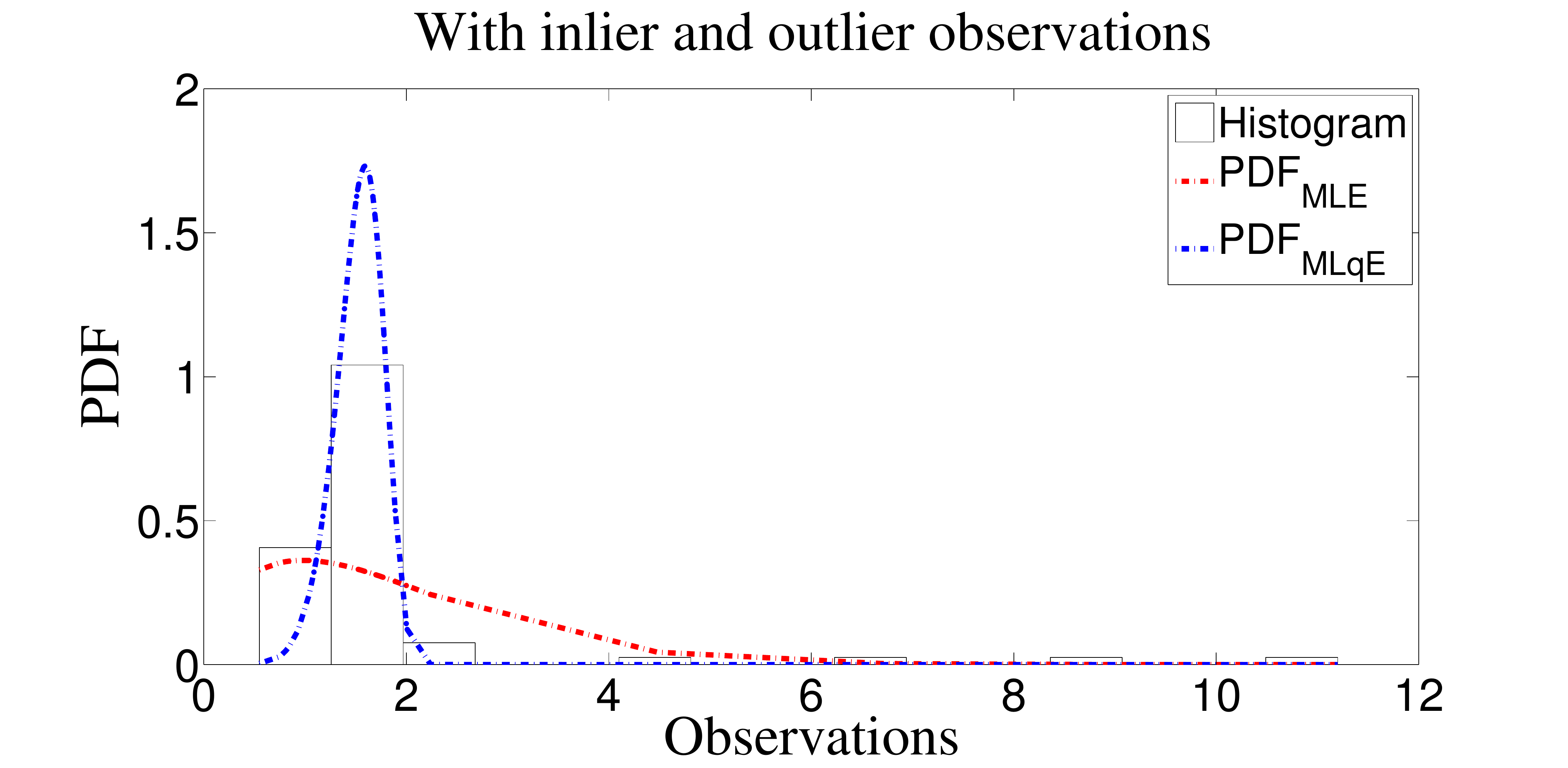}}
	\caption{CDF, PDF and histogram when real data set has inliers and outliers for 1.5 cm glass fibres(color online)}
	\label{withinoutex2f}
\end{figure}
The insensitivities for $\hat{\alpha}$ and $\hat{\beta}$ in Tables \ref{bioconduct}-\ref{15glassfibres} to contaminations show not only the robustness of MLqE but also we can conclude that the modeling competence of  $\log_q(f)$ is better than $\log(f)$. 

\section{Conclusions and discussions}\label{conclusionslab}

If the random variables $X_1,X_2,\ldots ,X_n$ are not distributed identically, different values of parameters of Weibull distribution have been estimated robustly by using MLqE. Numerical experiments have been applied to assess the performance of MLqE.  In the simulation, Weibull with different values of parameters from underlying Weibull, BurrIII and uniform distributions have been used for schema of contamination which occurs non-identicality. The performance of MLqE was tested for the functions having one mode property. The contaminated distributions $f_1$ have one mode property as well. GA which is powerful tool to avoid the local points of the optimized function has been used to get the estimates of parameters. Researchers can use MLqE to estimate the parameters of Weibull distribution having contamination from an one mode function. The $p$-value of KS test statistic should be used to determine the value of constant $q$ for the real data set. Thus, the evaluation of fitting performance of Weibull with the estimated parameters can be tested easily. 
%
The score functions derived by $\log(f)$  and $\log_q(f)$ are infinite and finite for $0<q<1$ and $\alpha \geq 1$, respectively. The results in real data show that if $0<q<1$, then we have estimates  which can be insensitive to the added inliers and outliers. When we consider on the results of simulation, we can have for inlier case in which $q>1$ which shows that the score functions of parameters are infinite for $q>1$ and $\alpha \geq 1$. Consequently, the numerical experiments show that robustness is not enough to imply that the best modeling is accomplished. By using the MLqE, the values of the parameters representing the majority of the distribution were estimated with small $\widehat{MSE}(\hat{\boldsymbol{\theta}})$ for different scenarios of contaminations. Many results from simulation and the application of real data sets show that MLqE is capable to model efficiently and gives an advantage to obtain the estimates for parameters of underlying distribution. 

Information geometry will be used (\cite{Amari16,Udristegeo}); and the adopted goodness of fit test will be proposed for further advance to determine the value of $q$ by means of tools in statistics. In our future works, we will try to find counter examples from deformation family and its generalization as theoretical results and  numerical experiments for different types of contamination will be used to test their modeling competence even if their score function is infinite. 

\section*{Appendix}
\subsection*{Computation of parameters via GA}\label{comptGA} 
\begin{itemize}
	\item {\bf Run}: \begin{verbatim}
		opt=gaoptimset('CrossoverFcn',{@crossoversinglepoint},'display','off'); 
	\end{verbatim} 
	\begin{enumerate}
		\item \begin{verbatim} lb=[0 0];ub=[10^10 10^10];
			MLqE(t,:)=ga(@(p)fMLqE(p,x),2,[],[],[],[],lb,ub,[],opt); 
		\end{verbatim} 
	\end{enumerate}
\end{itemize}
\begin{itemize}
	\item {\bf Apply}: 'fMLqE' is a function given by
	\begin{enumerate}
		\item  \begin{verbatim} function [sumL] = fMLqE(p,x) \end{verbatim} 
		\item \begin{verbatim}
			a=p(1);b=p(2);f=(a/b)*(x/b).^(a-1).*exp(-(x/b).^a);
		\end{verbatim} 
		\item \begin{verbatim} sumL=-sum((f.^(1-q)-1)./(1-q)); \end{verbatim} 
		When $q=1$, line 3  is replaced by $\log(f)$.
	\end{enumerate}
\end{itemize}
\subsection*{Computation of $p$-value of KS test statistic} \label{comppval}

{\bf Apply}: WeiCDF is a function  written in MATLAB 2013a to compute the CDF values. 
\begin{enumerate}
	\item \begin{verbatim} function F=WeiCDF(a,s,x) \end{verbatim}
	\item \begin{verbatim} F=1-exp(-(x/s).^a); \end{verbatim}
\end{enumerate}
The $p$-value is given by
\begin{enumerate}
	\item \begin{verbatim} F=WeiCDF(alpha,beta,x); 	\end{verbatim}
	\item \begin{verbatim} test_cdf=[x,F]; \end{verbatim}
	\item \begin{verbatim} [p_value]=kstest(x,'CDF',test_cdf); \end{verbatim}
\end{enumerate}

\section*{Acknowledgements}
We would like to thank so much Editorial Board and anonymous referees to provide the invaluable comments.  
This study was financed in part by the Coordenação de Aperfeiçoamento de Pessoal de Nível Superior - Brasil (CAPES) - Finance Code 001.

\section*{Disclosure statement}
No potential conflict of interest was reported by the author(s).

\section*{References}


\begin{thebibliography}{99}
	\bibitem{tikuinlier} Tiku, M. L. (1975). "A new statistic for testing suspected outliers," Communications in Statistics-Theory and Methods, 4(8), 737-752.

\bibitem{LehmannCas98}  E.L. Lehmann, and G. Casella, {\it Theory of point estimation},  Wadsworth \& Brooks/Cole. Pacific Grove, CA, 589, USA, 1998.

\bibitem{Hampeletal86} F.R. Hampel, E. M. Ronchetti, P. J. Rousseeuw, and W. A. Stahel, {\it Robust statistics: The approach based on influence functions}, Wiley Series in Probability and Statistics, New York, 1986.

\bibitem{Wadatwopara}  Wada, T., and Suyari, H. (2007). "A two-parameter generalization of Shannon–Khinchin axioms and the uniqueness theorem," Physics Letters A, 368(3-4), 199-205.

\bibitem{Bercher10} Bercher, J. F. (2010). "On escort distributions, q-gaussians and Fisher information," 30th International Workshop on Bayesian Inference and Maximum Entropy Methods in Science and Engineering, Jul 2010, Chamonix, France. pp.208-215, ff10.1063/1.3573618ff.

\bibitem{Bercher12a} Bercher, J. F. (2012). "A simple probabilistic construction yielding generalized entropies and divergences, escort distributions and q-Gaussians," Physica A: Statistical Mechanics and its Applications, 391(19), 4460-4469.
\bibitem{God60}  Godambe, V. P. (1960). "An optimum property of regular maximum likelihood estimation," The Annals of Mathematical Statistics, 31(4), 1208-1211.
\bibitem{GodTh78} Godambe, V.P., and Thompson, M.E. 1978. "Some aspects of the theory of estimating equations,"  Journal of Statistical Planning and Inference. Vol. 2(1), 95-104.
\bibitem{Tsallisbook09} C. Tsallis, {\it Introduction to Nonextensive Statistical Mechanics: Approaching a Complex World}, Springer, New York, 2009.
\bibitem{FerrariYang10}  Ferrari, D.,  Yang, Y. (2010). "Maximum Lq-likelihood estimation," The Annals of Statistics, 38(2), 753-783.
\bibitem{CanKor18}  \c{C}ankaya, M. N.,  Korbel, J. (2018). "Least informative distributions in maximum q-log-likelihood estimation," Physica A: Statistical Mechanics and its Applications, 509, 140-150.
\bibitem{Lindsay94} Lindsay, B. G. (1994). "Efficiency versus robustness: the case for minimum Hellinger distance and related methods," The annals of statistics, 22(2), 1081-1114.
\bibitem{Jansent}  Hanel, R., and Thurner, S. (2011). "A comprehensive classification of complex statistical systems and an axiomatic derivation of their entropy and distribution functions," EPL (Europhysics Letters), 93(2), 20006.
\bibitem{alphabetadivergences} Cichocki, A.,  Amari, S. I. (2010). "Families of alpha-beta-and gamma-divergences: Flexible and robust measures of similarities," Entropy, 12(6), 1532-1568.
\bibitem{MLqEGamma} Xing, N. (2015). Maximum Lq-Likelihood Estimation for Gamma Distributions (Master's thesis, Graduate Studies).
\bibitem{PardoSD} L. Pardo, {\it Statistical inference based on divergence measures,} CRC Press, Taylor \& Francis Group, 2005.
\bibitem{introga} Mitchell, M. (1998). {\it An introduction to genetic algorithms}, MIT press.
\bibitem{Hub81} Huber, P.J. 1981. {\it Robust Statistics,} Wiley Series in Probability and Statistics, 308, New York.
\bibitem{varincomposite} Varin, C., Reid, N.,  and Firth, D. (2011). "An overview of composite likelihood methods,". Statistica Sinica, 5-42.
\bibitem{Cankaya2018} \c{C}ankaya M.N. "Asymmetric bimodal exponential power distribution on the real line," Entropy, {2018}, 20(1), 23.
\bibitem{Malikcondition} Malik S. C., and Arora S., 1992. {\it  Mathematical analysis,} New Age International.
\bibitem{Weibullref}	Weibull, W. (1951). "Wide applicability," Journal of applied mechanics, 103(730), 293-297.
\bibitem{reviewwei} Almalki, S. J.,  Nadarajah, S. (2014). "Modifications of the Weibull distribution: A review," Reliability Engineering \& System Safety, 124, 32-55.
\bibitem{Tsallis1988} Tsallis C. "Possible generalization of Boltzmann-Gibbs statistics," Journal of Statistical Physics, { 1988}, 52, 479-487.
\bibitem{Cramer46} Cramér, H. (1946). "A contribution to the theory of statistical estimation," Scandinavian Actuarial Journal, 1946(1), 85-94.
\bibitem{Vajda86} Vajda, I. (1986). "Efficiency and robustness control via distorted maximum likelihood estimation," Kybernetika, 22(1), 47-67.
\bibitem{Jan2} Korbel, J., Hanel, R., and Thurner, S. (2020). "Information geometry of scaling expansions of non-exponentially growing configuration spaces," The European Physical Journal Special Topics, 229(5), 787-807.
\bibitem{Basuetal98}  Basu, A., Harris, I. R., Hjort, N. L., and Jones, M. C. (1998). "Robust and efficient estimation by minimising a density power divergence," Biometrika, 85(3), 549-559.
\bibitem{Udristegeo} Calin, O., and Udrişte, C. (2014). {\it Geometric modeling in probability and statistics}. Berlin: Springer.
\bibitem{Muhammedbadwei} Al Mohamad, D. (2018). "Towards a better understanding of the dual representation of phi divergences," Statistical Papers, 59(3), 1205-1253. 
\bibitem{Amari16}  Amari, S. I. (2016). {\it  Information geometry and its applications,} (Vol. 194). Springer.
\bibitem{BrondivergencesBas}  Broniatowski, M., and Vajda, I. (2009). "Several applications of divergence criteria in continuous families," arXiv preprint arXiv:0911.0937.
\bibitem{Canetal19} \c{C}ankaya, M. N., Yal\c{c}{\i}nkaya, A., Altında\v{g}, \"{O}., and Arslan, O. (2019). "On the robustness of an epsilon skew extension for Burr III distribution on the real line,". Computational Statistics, 34(3), 1247-1273.

\bibitem{Canorder20} \c{C}ankaya, M. N. (2020). "M-Estimations of Shape and Scale Parameters by Order Statistics in Least Informative Distributions on q-deformed logarithm," I\v{g}dır Üniversitesi Fen Bilimleri Enstitüsü Dergisi, 10(3), 1984-1996.
\bibitem{glass15data}  Smith, R.L., and Naylor, J.C. (1987). "A comparison of maximum likelihood and Bayesian estimators for the
three-parameter Weibull distribution," Appl. Stat. 36:358–369
\bibitem{alphapowerglass15} Nassar, M., Alzaatreh, A., Mead, M., and Abo-Kasem, O. (2017). "Alpha power Weibull distribution: Properties and applications," Communications in Statistics-Theory and Methods, 46(20), 10236-10252.

\bibitem{Haberman} Haberman, S.J. 1989. "Concavity and Estimation," The Annals of Statistics. JSTOR, Vol.17(4), 1631-1661








\bibitem{mixwei}	Razali, A. M.,  and Al-Wakeel, A. A. (2013). "Mixture Weibull distributions for fitting failure times data," Applied Mathematics and Computation, 219(24), 11358-11364.
\bibitem{hybridga}	Yuan, Q.,  Yang, Z. (2013). "On the performance of a hybrid genetic algorithm in dynamic environments," Applied Mathematics and Computation, 219(24), 11408-11413.

\bibitem{globalopt}	Price, K., Storn, R. M.,  Lampinen, J. A. (2006). {\it  Differential evolution: a practical approach to global optimization,} Springer Science \& Business Media.


\bibitem{Shao03} Shao, J. 2003. {\it  Mathematical Statistics,} Second edition, Springer, 591, USA. 
\bibitem{JizKorhybrid}  Jizba, P.,  and Korbel, J. (2016). "On q-non-extensive statistics with non-Tsallisian entropy," Physica A: Statistical Mechanics and its Applications, 444, 808-827.

\bibitem{weibook} Murthy, D. P., Xie, M., and Jiang, R. 2004. {\it Weibull models} (Vol. 505). John Wiley  Sons.
\bibitem{Fer09} Ferrari, D., and Paterlini, S., "The maximum lq-likelihood method: an application to extreme
quantile estimation in Finance," Methodology and Computing in Applied Probability (2009),
11(1), 3-19. 

\end{thebibliography}
\end{document}